\patchcmd\WF@putfigmaybe{\lower\intextsep}{}{}{\fail}%
\title{Proportional Representation in Rank Aggregation}
\author{Patrick Lederer}
\email{p.lederer@uva.nl}
\affiliation{%
  \institution{ILLC, University of Amsterdam}
  \country{The Netherlands}
}
\newcounter{remarkcount}
\newcolumntype{L}[1]{>{\raggedright\let\newline\\\arraybackslash\hspace{0pt}}m{#1}}
\newcolumntype{C}[1]{>{\centering\let\newline\\\arraybackslash\hspace{0pt}}m{#1}}
\newcolumntype{R}[1]{>{\raggedleft\let\newline\\\arraybackslash\hspace{0pt}}m{#1}}
\definecolor{alt-1}{HTML}{FFB3B3}  
\definecolor{alt-2}{HTML}{FFD9A0}  
\definecolor{alt-3}{HTML}{FFF4A0}  
\definecolor{alt-4}{HTML}{B8F0C8}  
\definecolor{alt-5}{HTML}{A8D8F0}  
\definecolor{alt-6}{HTML}{C4B0F5}  
\definecolor{alt-7}{HTML}{F5B0D8}  
\newcommand{\alternative}[2][default]{%
	\tikz[transform shape,scale=0.85]{
		\def\colorparam{#1}
		\ifthenelse{\equal{#1}{default}}{
			\def\colorparam{alt-#2}
		}{}
		\clip (0,3pt) circle (6pt);
		\node[anchor=base, yshift=0.5pt, text height=6pt,inner sep=1pt,fill=\colorparam, circle,minimum width=12pt] {};
		\node[anchor=base, text height=6pt, inner sep=1pt,minimum width=12pt,scale=1.15] {$x_{#2}$};
	}
}
\newcommand{\borderalternative}[2][default]{%
	\tikz[transform shape,scale=0.85]{
		\def\colorparam{#1}
		\ifthenelse{\equal{#1}{default}}{
			\def\colorparam{alt-#2}
		}{}
		\node[anchor=base, yshift=0.5pt, text height=6pt,inner sep=1pt,fill=\colorparam, circle,minimum width=12pt,draw=black!70,very thin] {};
		\node[anchor=base, text height=6pt, inner sep=1pt,minimum width=12pt,scale=1.15] {$#2$};
	}
}
\newcommand{\weakorder}[2][noborder]{%
	\begin{tikzpicture}
		[yscale=0.5]
		\foreach \row [count=\y] in {#2} {
			\ifthenelse{\y>1 \OR \equal{#1}{noborder}}{
				\node [anchor=south, inner sep=0.5pt] at (0, -\y) {
					\tikz[xscale=0.37]{
						\foreach \val/\color [count=\x] in \row {
							\ifthenelse{\equal{\val}{\color}}
							{\def\color{default}} %
							{}
							\node [inner sep=0] at (\x, 0) {\alternative[\color]{\val}};
						};}
				};
			}{
				\node [anchor=south, inner sep=0.5pt] at (0, -\y) {
					\tikz[xscale=0.37]{
						\foreach \val/\color [count=\x] in \row {
							\ifthenelse{\equal{\val}{\color}}
							{\def\color{default}} %
							{}
							\node [inner sep=0] at (\x, 0) {\borderalternative[\color]{\val}};
						};}
				};
			}
		}
	\end{tikzpicture}
}
\newcommand{\votermultiplicity}[2]{%
	\begin{tikzpicture}[baseline]
		\node[anchor=base,inner sep=1pt] at (0,0) {#1};
		\node[anchor=north,inner sep=0] at (0,-0.2) {#2};
	\end{tikzpicture}
}
\pgfplotsset{
	shortl/.style={%
		legend image code/.code={
			\draw[##1,line width=1.6pt]
			plot coordinates {
				(0cm,0cm)
				(0.2cm,0cm)
			};%
		}
	},
}
\newcommand{\rmes}{\texttt{RMES}\xspace}
\newcommand{\psb}{\texttt{PSB}\xspace}
\newcommand{\fb}{\texttt{FB}\xspace}
\newcommand{\mes}{\texttt{MES}\xspace}
\begin{abstract}
    In rank aggregation, the task is to aggregate multiple weighted input rankings into a single output ranking. While numerous methods, so-called social welfare functions (SWFs), have been suggested for this problem, all of the classical SWFs tend to be majoritarian and are thus not acceptable when a proportional ranking is required. Motivated by this observation, we design SWFs that guarantee that every input ranking is proportionally represented  by the output ranking. 
    Specifically, our central fairness condition requires that the number of pairwise comparisons between candidates on which an input ranking and the output ranking agree is at least proportional to the weight of the input ranking. As our main contribution, we present a simple SWF called the Proportional Sequential Borda rule which satisfies this condition. Moreover, we introduce a more involved variant of this rule, the Flow-adjusting Borda rule, which satisfies a stronger fairness condition that applies to arbitrary groups of rankings. Many of our axioms and techniques are inspired by results in approval-based committee voting and participatory budgeting, where the concept of proportional representation has been studied in depth.
\end{abstract}
\begin{document}

\maketitle
\vspace{1cm}
\setcounter{tocdepth}{2} 
\tableofcontents
\newpage

\section{Introduction}

For booking flights or hotels, many users consult aggregator websites such as google flights, skyscanner, or booking.com. These websites allow users to get an overview of the available flights or hotels by offering several ways to sort the options. For instance, flights may be sorted by their price, travel time, or number of stop-overs, whereas hotels are commonly sorted by price or user rating. Furthermore, aggregator websites typically offer a recommended ranking that combines multiple criteria. More generally, it would be desirable to display customized rankings to the users by combining the rankings for different criteria according to user-specific weights.\footnote{Such user-specific weights could arise from various approaches. For instance, we could ask the users to specify the weights themselves, pre-configure different weights for different regions, or use a machine learning model to learn the users' weights based on their previous behavior.}

The problem of finding such aggregate rankings is commonly studied under the term \emph{rank aggregation} and has attracted significant attention in social choice theory \citep[e.g.,][]{ASS02a,BCE+14a} and beyond \citep[e.g.,][]{KLAV12a,SCS14a,WDF+24a}. In more detail, in rank aggregation, we get a profile of weighted rankings as input, where the weights are non-negative and add up to one, and we need to return a single output ranking. For example, this model precisely captures the situation where a user want to sort hotels to $60\%$ by their price and $40\%$ by their user rating. Moreover, rank aggregation has numerous further applications.
\begin{itemize}
    \item Closely related to booking websites, there are many ranking portals that produce a global ranking by aggregating rankings for different criteria. For example, a city ranking can be computed by aggregating, say, rankings for the quality of life, cost of living, and economic aspects, and a university ranking by aggregating rankings of the scientific impact, the success of their graduates on the job market, and student satisfaction.
    \item An increasingly popular application of rank aggregation is the evaluation of AI agents. These agents can be ranked according to their performance on different tasks or benchmarks, and the resulting rankings then need to be aggregated into an overall ranking. Recent research has advocated the use of social choice methods for this problem \citep[e.g.,][]{LLB+25a,LLK+25a}.
    \item While the previous examples focus on technical applications, rank aggregation also arises in classical voting settings. For instance, in university hiring committees, it is common to compute a ranking of the applicants based on the rankings reported by the committee members, and the job is offered to the applicants according to the final ranking.
\end{itemize}

To address such rank aggregation problems, numerous \emph{social welfare functions (SWFs)}, which map profiles of weighted input rankings to a single output ranking, have been suggested in social choice theory. Prominent examples of such SWFs include the Kemeny rule \citep{Keme59a,YoLe78a}, various types of scoring rules \citep{Smit73a,CRX09a,BBP23a,Lede23a}, and Condorcet-type rules \citep{Cope51a,FHN15a}. 
However, as observed by \citet{LPW24a}, all of these classic SWFs are heavily majoritarian: for example, when combining two inverse rankings with weights of $51\%$ and $49\%$, most of the classic SWFs will simply return the ranking with the larger weight instead of actually combining the rankings. 

While this behavior is desirable for some electorate settings, it disqualifies these methods from our intended applications. For instance, when computing an aggregated ranking of hotels with 60\% weight for the price ranking, 30\% for the user rating ranking, and 10\% for the location ranking, the output ranking should not simply copy the price ranking. Instead, a hotel should be able to compensate a low position in the price ranking with high positions in the other rankings. Similarly, when ranking AI models based on their performance on several tasks or applicants in a hiring committee, it seems desirable to faithfully follow the weights, so that the output ranking fairly represents the diverse interests across different tasks or academic departments.

Motivated by this observation, \citet{LPW24a} have initiated the study of proportional SWFs, aiming for methods that guarantee fair representation to all input rankings.\footnote{We note that proportionality is often interpreted as a fairness notion for the input rankings. In rank aggregation, there is another influential line of work that investigates fairness with respect to attributes of candidates \citep[e.g.,][]{KuRu20a,CDKS22a,WISB22a,PSK22a}. Specifically, the idea of these papers is that some candidates have protected attributes that should be fairly represented in the output ranking, regardless of the weights of the input rankings.} Specifically, these authors formalize proportional representation in terms of the number of pairs of candidates on which the input and output rankings agree: an input ranking with  weight $\alpha$ should agree at least with an $\alpha$ fraction of all pairwise comparisons of the output ranking. Moreover, \citet{LPW24a} suggest the Squared Kemeny rule to compute proportional rankings. However, while this rule is certainly more proportional than established SWFs, it does not satisfy the aforementioned fairness condition in general (see, e.g., \Cref{fig:placeholder} for an example). Consequently, the design of fully proportional SWFs remains as an open problem. 

\begin{figure}
    \centering
    \begin{tikzpicture}
		\node at (0,0) [draw=black!30] (step1) {	
			\votermultiplicity{30\%}{\weakorder{{1},{2},{3},{7},{6},{5},{4}}}
			\votermultiplicity{30\%}{\weakorder{{1},{7},{6},{5},{4},{3},{2}}}
            \votermultiplicity{30\%}{\weakorder{{2},{7},{5},{6},{4},{3},{1}}}
            \votermultiplicity{10\%}{\weakorder{{3},{4},{5},{6},{7},{2},{1}}}
		};

        \draw[line width = 0.2mm]
  ([yshift=-0.5cm,xshift=0.1cm]step1.north west) -- ([yshift=-0.5cm,xshift=-0.1cm]step1.north east);
		
		\draw[->] (1.9,-0.25) -- (2.7,-0.25);
		
		\node at (6.5,0) [draw=black!30] (step2) {	
			\votermultiplicity{\textcolor{white}{ller}Kemeny\textcolor{white}{ller}}{\weakorder{{1},{2},{7},{6},{5},{4},{3}}}
			\votermultiplicity{{Sq. Kemeny}}{\weakorder{{1},{2},{7},{5},{6},{4},{3}}}
			\votermultiplicity{Prop. Seq. Borda}{\weakorder{{7},{1},{2},{6},{3},{5},{4}}}
		};

        \draw[line width = 0.2mm]
  ([yshift=-0.5cm,xshift=0.1cm]step2.north west) -- ([yshift=-0.5cm,xshift=-0.1cm]step2.north east);
	\end{tikzpicture}
    \vspace{-7pt}
    \caption{Comparison of our Proportional Sequential Borda rule with the Kemeny and Squared Kemeny rules. On the left, we show a profile consisting of four input rankings and their respective weights. On the right, we show the output rankings chosen by three SWFs for this profile. In particular, the Kemeny rule chooses exactly the inverse of the right most input ranking $\succ_4$, so it is not proportional. Similarly, the Squared Kemeny ranking only agrees in one pairwise comparison with $\succ_4$, even though the proportional share of $\succ_4$ is $10\%\cdot {7\choose 2}\geq 2$. By contrast, the ranking chosen by the Proportional Sequential Borda rule, the method suggested in this paper, agrees with four pairwise comparisons of $\succ_4$ and is thus proportional. 
    }
    \label{fig:placeholder}
\end{figure}

\paragraph{Our Contribution.} 
In this paper, we design the first truly proportional SWFs by employing ideas from approval-based committee voting and participatory budgeting, two fields in which the concept of proportional representation has been studied extensively \citep[see, e.g.,][]{LaSk23a,RSM25a}. To explain our results, we define the \emph{utility} of an input ranking $\succ$ for an output ranking~$\rhd$ as the number of pairwise comparisons these rankings agree on, i.e., $u({\succ},\rhd)=|\{(x,y)\colon {x\succ y}\text{ and } x\rhd y\}|$. Since every ranking on $m$ candidates induces ${m\choose 2}$ pairwise comparisons between candidates, the proportionality axiom of \citet{LPW24a} formally requires that every input ranking with weight~$\alpha$ obtains a utility of at least $\lfloor\alpha{m\choose2}\rfloor$ from the output ranking. We call this condition \emph{unanimous proportional justified representation (uPJR)} because, when viewing rankings as approval ballots over pairs of candidates, it can be seen as a weakening of the well-known notion of proportional justified representation from approval-based committee voting \citep{SFF+17a}. Moreover, we introduce two further proportionality axioms called \emph{unanimous justified representation (uJR)} and \emph{strong proportional justified representation (sPJR)}, which respectively weaken and strengthen uPJR. 
As our main contribution, we design SWFs that satisfy these fairness conditions. Specifically, we will show the following results: 
\begin{itemize}
    \item As a warm-up, we will first analyze uJR, which demands that every input ranking with a weight of at least $1/{m\choose 2}$ obtains a non-zero utility from the output ranking. We show that the Squared Kemeny rule by \citet{LPW24a} severely fails this condition (\Cref{prop:SqK}), thus demonstrating the need of more proportional rules. Moreover, we also present a simple rule inspired by Chamberlin-Courant approval voting that satisfies uJR (\Cref{prop:CC}). 
    \item We next turn to uPJR and, inspired by similar results in approval-based committee voting and participatory budgeting \citep{PeSk20a,BFL+23a}, prove that uPJR is implied by a more structured fairness axiom which we call rank-priceability (\Cref{prop:RPimpliesuPJR}). Based on this insight, we design a simple SWF, the \emph{Proportional Sequential Borda rule (\psb)}, that satisfies rank-priceability and thus uPJR (\Cref{thm:propBordaRP}). Roughly, \psb repeatedly picks the Borda winner, deletes this candidate from the profile, and reduces the weight of each input ranking proportional to its contribution to the score of the Borda winner.
    \item We further analyze sPJR which extends the reasoning of uPJR to arbitrary groups of input rankings: if a group of input rankings has a total weight of $\alpha$, the output ranking should agree with at least $\lfloor\alpha{m\choose 2}\rfloor$ pairwise comparisons of these rankings. As we will show, \psb fails this condition because the guarantees provided by rank-priceability do not extend to groups of rankings. We therefore introduce a refinement of rank-priceability called pair-priceability and show that this notion implies sPJR (\Cref{prop:sPJR}). Moreover, we propose a variant of \psb called the Flow-adjusting Borda rule (\fb) that satisfies pair-priceability (\Cref{thm:flowBordaPR}) and thus sPJR. Notably, \fb only augments \psb by using a more sophisticated scheme for updating the weights of the input rankings.
    \item Inspired by an analogous result for the Squared Kemeny rule \citep[][Theorem~4.2]{LPW24a}, we examine the average utility our SWFs guarantee to a group of rankings, as a function of the total weight of the group. For both \psb and \fb, we show that the average utility of every group of rankings is at least linear in the weight of the group. In more detail, we prove that the average utility of every group of rankings with a total weight of $\alpha$ is at least $\frac{\alpha}{4}{m\choose 2}-\frac{3}{16}$. Hence, in addition to satisfying fairness axioms, both of our rules provide strong quantitative proportionality guarantees for all groups of rankings
(\Cref{thm:propBordaAR,thm:flowBordaAR}).
\end{itemize}

\paragraph{Related Work}

Rank aggregation is one of the oldest problems in social choice theory as even Arrow's impossibility theorem has originally been stated in this setting \citep{Arro51a}. Consequently, there is a large body of works on this topic and we refer to the handbooks by \citet{ASS02a} and \citet{BCE+14a} for a general overview of this area.

In the classic literature on SWFs, our paper is related to an influential stream of works that study scoring rules for rank aggregation. 
Specifically, positional scoring rules such as the Borda rule compute the output ranking by assigning scores to the candidates and sorting the candidates in decreasing order of their scores. Such SWFs have attracted significant attention and have, e.g., been repeatedly characterized \citep[e.g.,][]{Smit73a,Youn74b,NiRu81a}. Moreover, positional scoring rules can be modified to work sequentially by repeatedly adding the candidate with the highest score to the next best (or worst) position of the output ranking and then deleting it from the profile \citep[e.g.,][]{Smit73a,FBC14a,BBP23a}. The Proportional Sequential Borda rule and the Flow-adjusting Borda rule are closely related to this approach as we repeatedly add the Borda winner to the output ranking, but we carefully update the weights of the input rankings to obtain a proportional outcome.
Furthermore, our work is related to the Kemeny rule, another well-studied SWF \citep[e.g.,][]{Keme59a,YoLe78a,Youn88a,Cast13a}, as this rule maximizes the utilitarian social welfare in our setting.

While the aforementioned works are influential, they do not focus on proportionality. Indeed, the study of proportional rank aggregation has only recently been initiated by \citet{LPW24a}, who suggested the Squared Kemeny rule. Moreover, \citet{ALP+25b} have designed  committee voting rules that satisfy committee monotonicity and proportionality for solid coalitions (PSC), a well-established proportionality notion for committee elections \citep[e.g.,][]{Dumm84a,Tide95a,AzLe20a}. When interpreted as SWFs, their rules satisfy an approximation of uPJR and additionally guarantee that every prefix of the output ranking satisfies PSC. Further, in this paper we draw on ideas from approval-based committee voting \citep{LaSk23a} and participatory budgeting \citep{AzSh21a,RSM25a}. In particular, our proportionality axioms are inspired by notions from this literature \citep[e.g.,][]{ABC+16a,SFF+17a,PeSk20a,BFL+23a}, and the budgeting approach of our SWFs is reminiscent of the Method of Equal Shares \citep{PeSk20a,PPS21a}. 

Finally, we note that proportionality has been studied in numerous other models, some of which are related to rank aggregation. In particular, \citet{SLB+17a} and \citet{BrIs25a} study the problem of proportionally aggregating the voters' approval ballots into a ranking, but they focus on a prefix-based fairness concept that is logically unrelated to uPJR. Moreover, there are models of repeated decision making \citep[e.g.,][]{Lack20a,BHP+21a,LaMa23a,CGP24a}, where a candidate needs to be selected in each round and each group of voters should be fairly represented over all rounds. However, this literature again focuses on conceptually similar but mathematically different proportionality notions. Finally, \citet{MPS24a} study a general model of proportional decision making, which contains rank aggregation as a special case. However, applying these results to rank aggregation gives only very mild guarantees and, e.g., allows that rankings with weight less than $\frac{1}{m}$ are left without any representation.

\section{Preliminaries}

Let $C=\{x_1,\dots, x_m\}$ denote a set of $m$ candidates. A \emph{ranking} $\succ$ is a strict linear order over $C$, and we write rankings as comma-separated lists. For instance, ${\succ}=x_1,x_2,x_3$ means that $x_1$ is ranked ahead of $x_2$ and $x_2$ ahead of $x_3$. The set of all rankings over $C$ will be denoted by $\mathcal{R}$. Following \citet{LPW24a}, we define \emph{(ranking) profiles} $R$ as functions from $\mathcal{R}$ to $[0,1]$ such that $\sum_{{\succ}\in\mathcal{R}} R({\succ})=1$. Less formally, a ranking profile specifies for every ranking ${\succ}\in\mathcal{R}$ a weight $R({\succ})$ and the total weight sums up to $1$. These weights may be interpreted as the importance scores in a multi-criteria decision-making problem or as the shares of voters that report a given ranking. Furthermore, we say a function $S:\mathcal R\rightarrow [0,1]$ is a \emph{subprofile} of a profile $R$ if $S({\succ})\leq R({\succ})$ for all ${\succ}\in\mathcal{R}$. In a voting setting, a subprofile $S$ can be interpreted as an arbitrary group of voters. The size of a subprofile $S$ is defined by $|S|=\sum_{{\succ}\in\mathcal{R}} S({\succ})$, which means that $|S|\in [0,1]$ for every subprofile $S$. 

Given a ranking profile, our goal is to aggregate the input rankings into one output ranking. For this problem, we use \emph{social welfare functions (SWFs)}, which are functions that map every ranking profile to a single output ranking. To clearly distinguish between input rankings and output rankings, we will write $\succ$ for the former and $\rhd$ for the latter. The assumption that SWFs always choose a single output ranking will sometimes require tie-breaking as multiple rankings can be tied for the win. We hence assume that there is a fixed tie-breaking order over the rankings and note that this assumption does not affect our results. Indeed, all our results also hold when viewing our rules as functions that return (non-empty) sets of rankings, but this model introduces unnecessary notational complexity. 

\subsection{Proportionality Axioms}

The goal of this paper is to find output rankings that proportionally represent the input rankings: each input ranking should have an influence proportional to its weight on the output ranking. Following \citet{LPW24a}, we will formalize this idea by requiring that each input ranking with weight $\alpha$ should agree at least with an $\alpha$ fraction of the pairwise comparisons of the output ranking. Moreover, we note that we define our proportionality axioms as properties of rankings; an SWF $f$ satisfies a given axiom if its chosen ranking $f(R)$ satisfies the axiom for all profiles $R$.

To formalize our proportionality axioms, we define the \emph{utility} of a ranking~$\succ$ for another ranking~$\rhd$ by $u({\succ}, \rhd)=|\{(x,y)\in C^2\colon x\succ y \text{ and } x\rhd y\}|$. That is, the utility of an input ranking $\succ$ for an output ranking $\rhd$ is the number of pairs of candidates on which the rankings agree. Furthermore, we let $u({\succ}, x, X)=|\{y\in X\setminus \{x\}\colon x\succ y\}|$ denote the utility of a candidate $x$ within the set of candidates $X$ with respect to $\succ$. Alternatively, $u({\succ}, x, X)$ can also be interpreted as the Borda score of $x$ within the set $X$. This term will be crucial in our analysis because $u({\succ}, \rhd)=\sum_{i=1}^{m-1} u({\succ}, x_i, \{x_i,\dots, x_m\})$ for every input ranking $\succ$ and output ranking $\rhd=x_1,\dots, x_m$. We further note that the utility $u({\succ}, \rhd)$ is dual to the swap distance $\mathit{swap}({\succ},\rhd)=|\{(x,y)\in C^2\colon x\succ y\land y\rhd x\}|$ studied by \citet{LPW24a}, because $u({\succ}, \rhd)={m\choose 2}-swap({\succ}, \rhd)$ for all rankings ${\succ},\rhd\in\mathcal{R}$. 

We will now introduce our first fairness condition called unanimous proportional justified representation (uPJR), which requires that the utility of every ranking should be at least proportional to its weight. Both \citet{LPW24a} and \citet[][Section 8]{ALP+25b} have investigated this condition, but these authors only present SWFs that satisfy approximate versions of uPJR.

\begin{definition}[Unanimous Proportional Justified Representation]
    A ranking $\rhd$ satisfies \emph{unanimous proportional justified representation (uPJR)} for a profile $R$ if $u({\succ},\rhd)\geq \lfloor R({\succ})\cdot {m\choose 2}\rfloor$ for all ${\succ}\in\mathcal{R}$.  
\end{definition}

We note that uPJR can be seen as a weakening of proportional justified representation (PJR), a well-known fairness condition for approval-based committee voting \citep{SFF+17a}. In this setting, a set of voters $N=\{1,\dots, n\}$ report approval ballots $A_i\subseteq C$ over the candidates and the goal is to choose a subset of the candidates of predefined size $k$. Then, PJR requires that, if a group of voters $S$ is large enough to deserve $\ell$ seats and the voters in $S$ agree on $\ell$ approved candidates, the winning committee should contain at least $\ell$ candidates that are approved by voters in $S$. More formally, a set of candidates $W$ of size $k$ satisfies PJR for an approval profile $A$ if $|W\cap \bigcup_{i\in S} A_i|\geq \ell$ for every group of voters $S$ with $|S|\geq \frac{\ell n}{k}$ and $|\bigcap_{i\in S} A_i|\geq \ell$. 

Now, PJR can be naturally adapted to rank aggregation by associating each ranking $\succ$ with the set $A({\succ})=\{(x_i,x_j)\in C\times C\colon x_i\succ x_j\}$, which can be interpreted as an approval ballot over $C^2=\{(x_i,x_j)\in C\times C\colon x_i\neq x_j\}$. Hence, we can view the problem of rank aggregation as an instance of approval-based committee voting over the set $C^2$, with the additional constraint that the chosen committee must correspond to a valid ranking. Specifically, given the input ballots $A({\succ})$ with weights $R({\succ})$, we need to choose a transitive and asymmetric subset of $C^2$ of size $k={m\choose 2}$. Applying PJR to this instance of committee voting results in the following condition: a ranking $\rhd$ satisfies PJR for a profile $R$ if $|A(\rhd)\cap \bigcup_{{\succ}\in\mathcal{R}\colon S({\succ})>0} A({\succ})|\geq \ell$ for every subprofile $S$ of $R$ with $|S|\geq \ell/ {m\choose 2}$ and $|\bigcap_{{\succ}\in\mathcal{R}\colon S({\succ})>0} A({\succ})|\geq \ell$. Finally, uPJR arises from PJR by additionally requiring that $S$ only assigns positive weight to a single ranking. 

In addition to uPJR, we will consider two more proportionality axioms in this paper. The first one, unanimous justified representation (uJR), is a weakening of uPJR which requires that each ranking with a weight of at least $1/{m\choose 2}$ should get a non-zero utility. We observe that this axiom can be seen as the counterpart to justified representation (JR), another well-known fairness condition in approval-based committee voting \citep{ABC+16a}.\footnote{
uJR is equivalent to JR in rank aggregation. In particular, the latter axiom requires that $|A(\rhd)\cap \bigcup_{{\succ}\in\mathcal{R}\colon S({\succ})>0}A({\succ})|\geq 1$  for all subprofiles $S$ with $|S|\geq1/{m\choose 2}$ and $|\bigcap_{{\succ}\in\mathcal{R}\colon S({\succ})}A({\succ})|\geq1$. Now, if $S$ assigns positive weight to two different rankings $\succ_1$ and $\succ_2$, there is a pair of candidates $x_1,x_2$ such that $x_1\succ_1 x_2$ and $x_2\succ_2x_1$. Consequently, every output ranking agrees with either $\succ_1$ or $\succ_2$ in at least one pair of candidates. Hence, JR is trivial in rank aggregation unless $S$ assigns a positive weight to a single ranking and it reduces to uJR in this case.} 

\begin{definition}[Unanimous Justified Representation (uJR)]
    A ranking $\rhd$ satisfies \emph{unanimous justified representation (uJR)} for a profile $R$ if $u({\succ},\rhd)\geq 1$ for every ${\succ}\in\mathcal{R}$ with $R({\succ})\geq 1/ {m\choose 2}$. 
\end{definition}

Secondly, we will study a strengthening of uPJR which applies to arbitrary groups of rankings. Specifically, strong proportional justified representation (sPJR) requires that for every subprofile $S$ of size $\alpha$, the output ranking chooses at least $\lfloor\alpha\cdot{m\choose 2}\rfloor$ pairwise comparisons from the union of the rankings in $S$. We note that sPJR does not impose any cohesiveness condition on $S$ and is thus a more demanding axiom than~PJR.

\begin{definition}[Strong Proportional Justified Representation]
    A ranking $\rhd$ satisfies \emph{strong proportional justified representation (sPJR)} for a profile $R$ if $|A(\rhd)\cap \bigcup_{{\succ}\in\mathcal{R}\colon S({\succ})>0} A({\succ})|\geq \lfloor|S|\cdot {m\choose 2}\rfloor$ for all subprofiles $S$ of $R$. 
\end{definition}

We conclude this section with an example illustrating our proportionality axioms.

\begin{example}[Proportionality axioms]\label{ex:axioms}
\begin{wrapstuff}[r,type=figure,width=3cm]
	\centering
	\begin{tikzpicture}
\node at (0,0) [draw=black!30] (step1) {	
			\votermultiplicity{50\%}{\weakorder{{1},{2},{3},{4},{5}}}
			\votermultiplicity{40\%}{\weakorder{{4},{5},{1},{2},{3}}}
            \votermultiplicity{10\%}{\weakorder{{5},{4},{3},{2},{1}}}
		};

        \draw[line width = 0.2mm]
  ([yshift=-0.5cm,xshift=0.1cm]step1.north west) -- ([yshift=-0.5cm,xshift=-0.1cm]step1.north east);
    \end{tikzpicture}
    \vspace{-7pt}
	\caption{Example for our proportionality axioms.}
	\label{fig:psc_exp}
    \end{wrapstuff}
To illustrate our proportionality axioms, we consider the profile $R$ for $m=5$ candidates shown on the right. This profile consists of the three rankings ${\succ_1}=x_1,x_2,x_3,x_4,x_5$, ${\succ_2}=x_4,x_5,x_1,x_2,x_3$, and ${\succ_3}=x_5,x_4,x_3,x_2,x_1$. Since ${5\choose 2}=10$, $\succ_1$ deserves a utility of at least $5$, $\succ_2$ of at least $4$, and $\succ_3$ of at least $1$. Hence, the ranking ${\rhd_1}=x_1,x_2,x_3,x_4,x_5$ even violates uJR for $R$ because $u(\rhd_1, \succ_3)=0$. By contrast, the ranking ${\rhd_2}=x_1,x_2,x_3,x_5,x_4$ satisfies uJR as every input ranking agrees with at least one pairwise comparison in $\rhd_2$. However, $\rhd_2$ fails uPJR since $u(\rhd_2,\succ_2)=3$ even though $\succ_2$ deserves $4$ pairwise comparisons. Thirdly, the ranking $\rhd_3=x_1,x_3,x_5,x_4,x_2$ satisfies uPJR since $u(\rhd_3,{\succ_1})=6$, $u(\rhd_3,\succ_2)=4$, and $u(\rhd_3,\succ_3)=4$. However, it fails sPJR because the rankings $\succ_1$ and $\succ_2$ make up 90\% of the profile but $|A(\rhd_3)\cap (A({\succ_1})\cup A({\succ_2}))|=8$. Lastly, it can be verified that the ranking $x_1,x_4,x_5,x_2,x_3$ even satifies sPJR. 
\end{example}

\section{The Squared Kemeny Rule and uJR}

As a warm-up, we start by examining uJR. To this end, we first note that traditional SWFs, such as the Kemeny rule or the Borda rule, fail this condition because they are heavily majoritarian. We will thus focus on the Squared Kemeny rule which has been explicitly proposed by \citet{LPW24a} for computing proportional rankings. However, as we will show, this SWF severely fails uJR. We hence present another SWF inspired by Chamberlin-Courant approval voting \citep{ChCo83a,LaSk23a} that satisfies this axiom.

We first consider the Squared Kemeny rule, which chooses the ranking that minimizes the total squared swap distance to the input rankings. Formally, the \emph{Squared Kemeny rule ($\mathtt{SqK}$)} is defined by $\mathtt{SqK}(R)=\arg\min_{\rhd\in\mathcal R} \sum_{{\succ}\in\mathcal{R}} R({\succ}) \mathit{swap}({\succ},\rhd)^2=\arg\min_{\rhd\in\mathcal R} \sum_{{\succ}\in\mathcal{R}} R({\succ}) ({m\choose 2}-u({\succ},\rhd))^2$. 
We note that there can be multiple rankings that minimize this objective, so a full definition of the Squared Kemeny rule requires further tie-breaking. However, the tie-breaking will not matter for our result, so we omit these details. We will next prove that $\mathtt{SqK}$ fails uJR for all $m\geq 5$. Specifically, we will present a family of profiles $R$ such that $R({\succ})=\frac{m}{5}/{m\choose 2}$ for some ranking $\succ$ but $\mathtt{SqK}$ uniquely chooses the inverse ranking of $\succ$. This means that the Squared Kemeny rule does not even approximate uJR: for every $k\in\mathbb{N}$, there is a number of candidates $m$, a profile $R$, and a ranking $\succ$ such that $\succ$ deserves a utility of $k$ in $R$ but obtains a utility of $0$ from the ranking chosen by the Squared Kemeny rule. The full proof of \Cref{prop:SqK} is deferred to \Cref{app:SqK}.

\begin{restatable}{proposition}{SqK}\label{prop:SqK}
    For all $m\geq 5$, there is a profile $R$ and ranking $\succ$ such that $R({\succ})=\frac{m}{5}/{m\choose 2}$ and $u({\succ},\mathtt{SqK}(R))=0$.
\end{restatable}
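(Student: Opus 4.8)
The plan is to turn the statement into a concrete minimisation and then build the profile by hand. Fix $\succ = x_1,\dots,x_m$. Since a ranking $\rhd$ satisfies $u(\succ,\rhd)=0$ \emph{only} when it disagrees with $\succ$ on every pair, i.e.\ only when $\rhd$ is the reverse ranking $\succ^{\mathrm{rev}}=x_m,\dots,x_1$, it is enough to exhibit a profile $R$ with $R(\succ)=\alpha:=\tfrac{m/5}{\binom m2}$ for which $\mathtt{SqK}$ returns $\succ^{\mathrm{rev}}$ \emph{uniquely}. I would put the remaining mass $1-\alpha$ on a small ``cluster'' of rankings that sit near $\succ^{\mathrm{rev}}$ but are spread out enough that no other ranking is close to all of them at once: concretely, take several rankings obtained from $\succ^{\mathrm{rev}}$ by reversing pairwise-disjoint contiguous blocks (so their swap-distances add up through $\succ^{\mathrm{rev}}$), and distribute the weight so that $\succ^{\mathrm{rev}}$ is the squared-swap minimiser of the cluster with room to spare.

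To check a candidate construction, write $f(\rhd)=\sum_{\sigma}R(\sigma)\,\mathit{swap}(\sigma,\rhd)^2$, so that $\mathtt{SqK}(R)=\succ^{\mathrm{rev}}$ uniquely iff $f(\succ^{\mathrm{rev}})<f(\rhd)$ for all $\rhd\neq\succ^{\mathrm{rev}}$. Isolating the $\succ$-term and using $u(\succ,\cdot)=\binom m2-\mathit{swap}(\succ,\cdot)$, this is equivalent to demanding, for every $\rhd$ with $\ell:=u(\succ,\rhd)\geq 1$,
\[
\sum_{\sigma\neq\succ}R(\sigma)\bigl(\mathit{swap}(\sigma,\rhd)^2-\mathit{swap}(\sigma,\succ^{\mathrm{rev}})^2\bigr)\;>\;\alpha\,\ell\bigl(2{\textstyle\binom m2}-\ell\bigr).
\]
The right-hand side is at most $2\alpha\binom m2\,\ell=\tfrac{2m}{5}\,\ell$, and strictly less when $\ell\geq 1$, so it suffices to bound the left-hand side below by $\tfrac{2m}{5}\,\ell$. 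For the block-reversal cluster this is elementary: if $\sigma_i$ reverses a size-$s$ block and $c_i$ counts the pairs inside that block on which $\rhd$ disagrees with $\succ^{\mathrm{rev}}$, then comparing disagreement sets with $\succ^{\mathrm{rev}}$ gives $\mathit{swap}(\sigma_i,\rhd)=\binom s2+\ell-2c_i$; hence (with equal weights $w=\tfrac{1-\alpha}{t}$) the left-hand side equals $w\sum_i\bigl[2\binom s2(\ell-2c_i)+(\ell-2c_i)^2\bigr]\geq 2w\binom s2(t-2)\,\ell$, using $\sum_i c_i\leq\ell$ and dropping the nonnegative square terms. So the whole argument goes through as soon as one picks $t$ disjoint blocks of size $s$ (with $ts\leq m$) satisfying $\tfrac{(1-\alpha)(t-2)}{t}\binom s2\geq\tfrac m5$.

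The step I expect to be the genuine obstacle is exactly choosing the cluster so that this last inequality holds, because the two requirements on it pull in opposite directions: it must ``resist'' being dragged toward $\succ$ with force at least $\tfrac{2m}{5}$ per unit of displacement from $\succ^{\mathrm{rev}}$ — a constant fraction of the maximum swap distance $\binom m2$ — yet the cluster rankings that supply this resistance must themselves be beaten by $\succ^{\mathrm{rev}}$, which caps how far out and how heavy they can be. The tightest constraints come from outputs $\rhd$ at swap-distance $1$ or $2$ from $\succ^{\mathrm{rev}}$, where the per-unit pull of $\succ$ is largest and where $\rhd$ can ``hide'' from a cluster ranking by moving along a pair that ranking has already reversed; balancing block sizes and weights against all of these is the crux. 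For large $m$ a bounded number of (roughly equal) blocks already suffices — e.g.\ three blocks work once $m\geq 15$ — but the asymptotic bookkeeping becomes too lossy for small $m$, so the cases $m\in\{5,\dots,14\}$ or so likely need a tailored, possibly computer-checked, construction. The value $m/5$ in the statement is presumably chosen precisely so that $\succ$ lies just above the uJR-threshold $1/\binom m2$ (hence deserves utility $\geq 1$, making $m\geq5$ the right range) while staying small enough for such a profile to exist for every $m\geq 5$.
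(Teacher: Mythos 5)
Your reduction and the squared-distance bookkeeping are correct as far as they go: writing $\ell=u({\succ},\rhd)$, the condition $f(\succ^{\mathrm{rev}})<f(\rhd)$ is indeed equivalent to your displayed inequality, the identity $\mathit{swap}(\sigma_i,\rhd)=\binom{s}{2}+\ell-2c_i$ follows from the symmetric-difference structure of disagreement sets, and the chain $\sum_i c_i\le\ell$ plus dropping the squares gives exactly your sufficient condition $\tfrac{(1-\alpha)(t-2)}{t}\binom{s}{2}\ge\tfrac{m}{5}$. This is also a genuinely different route from the paper's: the paper does not use a cluster concentrated near $\succ^{\mathrm{rev}}$ but instead takes four rankings ($\succ_1=\succ$; a ranking $\succ_2$ that also \emph{top-ranks} $x_1$ but reverses the rest; and $\succ_3,\succ_4$ bottom-ranking $x_1$ and mutually inverse on $\{x_2,\dots,x_m\}$) and argues in stages --- first that the output is at least half-way reversed on $\{x_2,\dots,x_m\}$, then that $x_1$ must sink to the bottom, then that the remainder must be fully reversed. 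Where your argument applies, it is shorter and more transparent.

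The genuine gap is the range $5\le m\le 14$, and it is not merely lossy bookkeeping: your specific construction cannot be repaired there. You need $t\ge 3$ for the factor $t-2$ to be positive, and disjoint contiguous blocks force $s\le\lfloor m/3\rfloor$; for $m=5$ this means $s\le 1$, so every $\sigma_i$ equals $\succ^{\mathrm{rev}}$ and the construction is degenerate, and for all $m\le 14$ one checks that $\tfrac{(t-2)}{t}\binom{s}{2}<\tfrac{m}{5}$ for every admissible pair $(t,s)$. Since the proposition quantifies over all $m\ge 5$, and the small-$m$ cases are precisely where the difficulty is concentrated (the paper even remarks that its own $m=4$ example resisted generalization), these cases cannot be deferred as routine. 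What is missing is a cluster whose members are pairwise far apart --- on the order of $\binom{m-1}{2}$, as with the paper's mutually inverse pair $\succ_3,\succ_4$ --- so that the restoring force per unit weight is large even when $m$ is small; disjoint contiguous block reversals inside a single ranking cannot supply this.
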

\begin{proof}[Proof Sketch]
\begin{wrapstuff}[3,r,type=figure,width=3.6cm]
	\centering
    \vspace{-0.05cm}
	\begin{tikzpicture}
\node at (0,0) [draw=black!30] (step1) {	
			\votermultiplicity{10\%}{\weakorder{{1},{2},{3},{4},{5}}}
			\votermultiplicity{30\%}{\weakorder{{1},{5},{4},{3},{2}}}
            \votermultiplicity{30\%}{\weakorder{{2},{5},{4},{3},{1}}}
            \votermultiplicity{30\%}{\weakorder{{3},{4},{5},{2},{1}}}
		};

    \draw[line width = 0.2mm]
  ([yshift=-0.5cm,xshift=0.1cm]step1.north west) -- ([yshift=-0.5cm,xshift=-0.1cm]step1.north east);
    \end{tikzpicture}
    \vspace{-7pt}
	\caption{A profile where $\mathtt{SqK}$ fails uJR.}
	\label{fig:sqk}
    \end{wrapstuff}
    We consider the following four rankings to prove this proposition: $\succ_1$ is given by ${\succ_1}=x_1,x_2,\dots,x_m$, $\succ_2$ by ${\succ_2}=x_1,x_m,\dots,x_2$, $\succ_3$ is an arbitrary ranking that bottom-ranks $x_1$ and agrees with $\lfloor \frac{1}{2}{m-1\choose 2}\rfloor$ pairwise comparisons with $\succ_1$, and $\succ_4$ also bottom-ranks $x_1$ and arranges the candidates $x_2,\dots, x_m$ inversely to $\succ_3$. Further, let $R$ denote the profile given by $R({\succ_1})=\frac{m}{5}/ {m\choose 2}$ and $R({\succ_2})=R({\succ_3})=R({\succ_4})=\frac{1}{3}\cdot (1-\frac{m}{5}/{m\choose 2})$. For instance, when $m=5$, the profile in \Cref{fig:sqk} satisfies this construction. We show that the Squared Kemeny rule picks the ranking $\mathtt{SqK}(R)=x_m,\dots, x_1$ for this profile, thus leaving $\succ_1$ without any representation. While the proof for this claim is tedious, we note three high-level ideas. Firstly, it can be shown that the output ranking must generate a higher utility for $\succ_2$ than for $\succ_1$ because $\succ_1$ and $\succ_2$ as well as $\succ_3$ and $\succ_4$ order the candidates $x_2,\dots, x_m$ inversely to each other and $R({\succ_2})>R({\succ_1})$. Secondly, we show that the closer the output ranking $\rhd$ without $x_1$ is to $x_m,\dots, x_2$, the lower we must rank $x_1$ in $\rhd$. Roughly, the reason for this is that both $\succ_3$ and $\succ_4$ cause a high penalty if $\rhd$ without $x_1$ is close to $x_m,\dots,x_2$, so $x_1$ is used to decrease the swap distance to these rankings. Thirdly, we prove that if $x_1$ is ranked sufficiently low in $\rhd$, then it is optimal to order the candidates $x_2,\dots, x_m$ inverse to $\succ_1$ because the weight of $\succ_2$ is significantly larger than that of $\succ_1$. By combining these ideas, it follows that $\mathtt{SqK}(R)=x_m,\dots,x_1$, thereby proving that $\mathtt{SqK}$ fails uJR.
\end{proof}

In light of \Cref{prop:SqK}, one may think that involved SWFs are required to satisfy uJR. We will next refute this hypothesis by introducing a simple SWF inspired by Chamberlin-Courant approval voting. To this end, we recall that Chamberlin-Courant approval voting is an approval-based committee voting rule which chooses the committee that maximizes the number of voters who approve at least one selected candidate \citep{ChCo83a,LaSk23a}. Put differently, this rule maximizes the number of voters that have a utility of at least $1$. We adapt this idea to rank aggregation by defining the score function $s:\mathbb{N}_0\rightarrow\mathbb{R}$ by $s(x)=1$ if $x>0$ and $s(0)=0$. Then, the \emph{Chamberlin-Courant SWF} chooses a ranking $\rhd$ that maximizes $\sum_{{\succ}\in\mathcal{R}} R({\succ})\cdot s(u({\succ},\rhd))$, with ties broken arbitrarily. We will next show that this SWF satisfies uJR.

\begin{proposition}\label{prop:CC}
The Chamberlin-Courant SWF satisfies uJR.
\end{proposition}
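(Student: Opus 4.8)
The plan is to argue by contradiction: suppose the Chamberlin-Courant SWF chooses a ranking $\rhd$ that fails uJR for some profile $R$. Then there is an input ranking $\succ$ with $R({\succ})\geq 1/{m\choose 2}$ but $u({\succ},\rhd)=0$. By the duality $u({\succ},\rhd)={m\choose 2}-\mathit{swap}({\succ},\rhd)$, this means $\succ$ is the inverse ranking of $\rhd$: the two rankings disagree on all ${m\choose 2}$ pairs. The idea is to produce an alternative ranking $\rhd'$ that strictly increases the Chamberlin-Courant score, contradicting the fact that $\rhd$ is a maximizer. The natural candidate is $\rhd'={\succ}$ itself (or any ranking derived from $\rhd$ by a single adjacent swap that makes $\succ$ happy).

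The key steps, in order, are as follows. First, I would establish that any ranking $\rho$ other than the inverse of $\succ$ already gives $\succ$ a utility of at least $1$, so $s(u({\succ},\rho))=1$; in fact a single adjacent transposition applied to $\rhd$ yields such a ranking $\rhd'$. Second, I would compare the Chamberlin-Courant scores of $\rhd$ and $\rhd'$. Going from $\rhd$ to $\rhd'$, the term for $\succ$ jumps from $s(0)=0$ to $s(1)=1$, contributing a gain of at least $R({\succ})\geq 1/{m\choose 2}$. The only worry is that swapping two adjacent candidates could drop some other ranking $\succ''$ from utility $\geq 1$ down to utility $0$; but a ranking has utility $0$ with respect to $\rhd'$ only if it is the inverse of $\rhd'$, and there is at most one such ranking. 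So at most one input ranking loses its $s$-value, and that loss is at most $R(\text{inverse of }\rhd')$. Third — and this is where the argument needs care — I must choose the adjacent swap so that the ranking that becomes "unhappy" under $\rhd'$ had strictly less weight than $R({\succ})$, or better, argue that the net change is positive. Here it is cleanest to take $\rhd'={\succ}$ directly: then the ranking with utility $0$ under $\rhd'$ is the inverse of $\succ$, i.e.\ $\rhd$ itself. So the score of $\rhd'$ equals the score of $\rhd$, plus $R({\succ})$ (for $\succ$ now being happy), minus $R(\rhd)$ (for $\rhd$ now being unhappy, if $R(\rhd)>0$), plus possibly more gains from other rankings. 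If $R({\succ})>R(\rhd)$ we are immediately done; the remaining case $R({\succ})\leq R(\rhd)$ needs a separate small argument.

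The main obstacle is therefore the symmetric case where both $\succ$ and its inverse $\rhd$ carry substantial weight and simply swapping to $\rhd'={\succ}$ only trades one unhappy ranking for another. To handle this, instead of flipping all the way to $\succ$, I would apply a single adjacent transposition to $\rhd$: this changes $u({\succ},\cdot)$ from $0$ to $1$ for $\succ$, while the only ranking that could newly acquire utility $0$ is the inverse of the swapped ranking, which differs from $\succ$ in exactly that one pair — and such a ranking cannot already have positive weight unless the profile was pathological, but even if it does, I can iterate over all $m-1$ possible adjacent swaps of $\rhd$ and note that their "newly unhappy" rankings are pairwise distinct, so by averaging at least one of them does not destroy weight exceeding $R({\succ})$. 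Pinning down this averaging/pigeonhole step cleanly is the crux; everything else is bookkeeping with the two-valued score function $s$.
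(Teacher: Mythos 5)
Your opening observations are all correct: a ranking gets utility $0$ from $\rhd'$ only if it is the inverse of $\rhd'$, so the Chamberlin--Courant score of any ranking $\rhd'$ changes only through the weight of its inverse, and passing from $\rhd$ to ${\succ}$ trades $R({\succ})$ for $R(\rhd)$. The gap is exactly where you locate it: the averaging/pigeonhole step over the $m-1$ adjacent swaps of $\rhd$ is false. The $m-1$ "newly unhappy" rankings are indeed pairwise distinct and distinct from $\succ$, but that only bounds the smallest of their weights by $(1-R({\succ}))/(m-1)$, and you need it to be below $R({\succ})\geq 1/\binom{m}{2}=2/(m(m-1))$; this fails for every $m\geq 3$. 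Concretely, take $m=5$, let $\rhd=x_1,\dots,x_5$ and ${\succ}=x_5,\dots,x_1$, and set $R({\succ})=R(\rhd)=0.1$ and $R(\tau)=0.2$ for each of the four adjacent transpositions $\tau$ of $\succ$. Then $\rhd$ fails uJR, yet $\succ$ ties with $\rhd$ in Chamberlin--Courant score and every adjacent swap of $\rhd$ scores strictly worse, so none of your candidate alternatives yields the contradiction. (Of course $\rhd$ is not actually the maximizer here --- any ranking whose inverse has weight $0$ scores $1$ --- but your argument, which only inspects $\succ$ and the adjacent swaps of $\rhd$, cannot see this.)

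The repair is to abandon the local search and use your own key observation globally: since the score of $\rhd'$ equals $1-R(\blacktriangleleft')$ where $\blacktriangleleft'$ is the inverse of $\rhd'$, the maximizer is any ranking whose inverse has minimal weight among all $m!$ rankings, and that minimum is at most $1/m!<1/\binom{m}{2}$. This is precisely the paper's proof, and it needs no case distinction and no perturbation argument. You had all the ingredients; the mistake was restricting attention to $m$ alternatives when the objective function is already fully determined by a single coordinate of $R$.
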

\begin{proof}
    Fix a profile $R$ and let $\rhd$ denote the ranking chosen by the Chamberlin-Courant SWF. Hence, $\rhd$ maximizes $\sum_{{\succ}\in\mathcal{R}} R({\succ})\cdot s(u({\succ},\rhd))$ and therefore also $\sum_{{\succ}\in\mathcal{R}} R({\succ})\cdot (s(u({\succ},\rhd))-1)$. We next observe that it holds for all ${\succ},\rhd'\in\mathcal{R}$ that $s(u({\succ},\rhd'))=0$ if and only if $\succ$ orders the candidates inversely to $\rhd'$. This means that $\sum_{{\succ}\in\mathcal{R}} R({\succ})\cdot (s(u({\succ},\rhd'))-1)=-R(\blacktriangleleft')$ for all $\rhd'\in\mathcal{R}$, where $\blacktriangleleft'$ denotes the inverse ranking to $\rhd'$. Finally, since $\rhd$ maximizes $\sum_{{\succ}\in\mathcal{R}} (s(u({\succ},\rhd))-1)$, it follows that its inverse ranking $\blacktriangleleft$ minimizes $R(\blacktriangleleft)$. Since $\sum_{\succ\in\mathcal{R}} R({\succ})=1$, this implies that $R(\blacktriangleleft)\leq \frac{1}{m!}$. Hence, the only ranking with utility $0$ has a weight of at most $\frac{1}{m!}<1/{m\choose 2}$, so uJR is satisfied.
\end{proof}

\begin{remark}
    In approval-based committee voting, the Chamberlin-Courant rule is NP-hard to compute as it is closely related to the set cover problem \citep{PSZ08a,SFL16a}. By contrast, the Chamberlin-Courant SWF can be computed efficiently as its output ranking is (up to tie-breaking) the reverse of the ranking $\succ$ that minimizes $R(\succ)$. 
\end{remark}

\section{The Proportional Sequential Borda Rule and uPJR}\label{sec:PJR}

We now turn to the more demanding proportionality notion of uPJR and present an SWF that satisfies this axiom, namely the Proportional Sequential Borda rule (\psb). To this end, we first introduce a more structured proportionality axiom called rank-priceability and show that this condition implies uPJR. We then prove that \psb satisfies rank-priceability and thus also uPJR. Finally, we prove that \psb guarantees every subprofile $S$ an average utility of at least $\frac{|S|}{4} {m\choose 2}-\frac{3}{16}$.

We start by introducing rank-priceability, which is inspired by the concept of priceability in participatory budgeting \citep[e.g.,][]{PPS21a,BFL+23a}. In this setting, voters report approval ballots over costly candidates and we need to choose a representative subset of candidates subject to a budget constraint. Now, the idea of priceability is that it should be possible to explain the outcome by a payment scheme from the voters to the chosen candidates. In more detail, in participatory budgeting, a set of candidates $W$ is called priceable if there is a virtual budget $B$ that is uniformly distributed among the voters and a payment scheme $\pi$ from the voters to the candidates in $W$ that satisfies the following conditions: 
\begin{enumerate}[label=(\arabic*), topsep=2pt, itemsep=2pt]
    \item Voters only spend their share of the budget on their approved candidates in $W$.
    \item The total budget spent on each candidate in $W$ is equal to its cost.
    \item The unspent budget of any group of voters is not enough to pay for a commonly approved candidate outside of $W$.
\end{enumerate}

We next aim to transfer this axiom to rank aggregation. To this end, we assume that candidates will be bought sequentially and that the costs of the candidates depend on the considered round. In more detail, in every step, the cost of each candidate will be the maximal utility it can generate for a ranking and the payment willingness of a ranking will be the additional utility it obtains by assigning the considered candidate to the next position in the ranking. To make this more formal, let ${\rhd}=x_1,\dots,x_m$ denote an arbitrary ranking. If we place $x_i$ in the $i$-th position of the output ranking after $x_1,\dots, x_{i-1}$ have been put at positions $1,\dots, i-1$, we generate a utility of $u({\succ}, x_i, \{x_i, \dots,x_m \})$ for every input ranking $\succ$. Thus, no ranking should pay more than $u({\succ}, x_i, \{x_i, \dots,x_m \})$ for $x_i$. Further, since $u({\succ}, x_i, \{x_i, \dots,x_m \})\leq m-i$ for all ${\succ}\in\mathcal{R}$ and $u(\rhd, x_i, \{x_i, \dots,x_m \})= m-i$, we set the cost of $x_i$ to $m-i$. Consequently, the total cost of all candidates is $\sum_{i=1}^m m-i={m\choose 2}$. 
Finally, there is no counterpart to the third condition of priceability in rank aggregation as all candidates need to be ranked. Hence, we will fix the budget to $B={m\choose 2}$ and require that the total unspent budget should be less than $1$. Since this means that we may not be able to cover the costs of all candidates, we use the costs only as upper bounds on how much we can spend on each candidate. Formalizing these ideas results in the following axiom, which we call rank-priceability. 

\begin{definition}[Rank-Priceability]
    A ranking $\rhd=x_1,\dots, x_m$ is \emph{rank-priceable} for a profile $R$ if there is a payment function $\pi:\mathcal{R}\times C \rightarrow \mathbb{R}$ such that 
    \begin{enumerate}[label=(\arabic*)]
        \item $0\leq \pi({\succ}, x_i)\leq u({\succ}, x_i, \{x_{i},\dots, x_m\})$ for all ${\succ}\in\mathcal{R}$ and $x_i\in C$,
        \item $\sum_{x_i\in C} \pi({\succ}, x_i)\leq {m\choose 2}\cdot R({\succ})$ for all ${\succ}\in\mathcal{R}$,
        \item $\sum_{{\succ}\in \mathcal R} \pi({\succ}, x_i)\leq m-i$ for all $i\in \{1,\dots, m\}$, and
        \item $\sum_{{\succ}\in\mathcal{R}}\sum_{x_i\in C} \pi({\succ}, x_i)> {m\choose 2}-1$.
    \end{enumerate}
\end{definition}

As usual, an SWF $f$ is rank-priceable if $f(R)$ satisfies this condition for every profile~$R$. 

We will next show that rank-priceability implies uPJR, which mirrors an analogous insight in participatory budgeting \citep{PeSk20a,BFL+23a}.

\begin{proposition}\label{prop:RPimpliesuPJR}
    If a ranking is rank-priceable for a profile, it also satisfies uPJR.
\end{proposition}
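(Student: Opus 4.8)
The plan is to argue by contradiction, following the template used to show that priceability implies PJR in approval-based committee voting \citep{PeSk20a,BFL+23a}. Suppose $\rhd=x_1,\dots,x_m$ is rank-priceable for $R$ via a payment function $\pi$, but fails uPJR, so that there is a ranking $\succ\in\mathcal{R}$ with $u({\succ},\rhd)<\lfloor R({\succ})\cdot{m\choose 2}\rfloor$. Since $u({\succ},\rhd)$ is a non-negative integer, this is equivalent to $u({\succ},\rhd)\leq R({\succ})\cdot{m\choose 2}-1$. The goal is to derive a contradiction with condition~(4) of rank-priceability by bounding the total payment from above.

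First I would bound the amount $\succ$ spends under $\pi$. Condition~(1) gives $0\leq\pi({\succ},x_i)\leq u({\succ},x_i,\{x_i,\dots,x_m\})$ for every $i$; in particular $\pi({\succ},x_m)=0$. Summing over all $x_i\in C$ and using the identity $u({\succ},\rhd)=\sum_{i=1}^{m-1}u({\succ},x_i,\{x_i,\dots,x_m\})$ recalled in the preliminaries, we obtain $\sum_{x_i\in C}\pi({\succ},x_i)\leq u({\succ},\rhd)\leq R({\succ})\cdot{m\choose 2}-1$. Next I would bound the spending of all remaining rankings using condition~(2): for every $\succ'\in\mathcal{R}$ we have $\sum_{x_i\in C}\pi({\succ'},x_i)\leq{m\choose 2}\cdot R({\succ'})$, so summing over $\succ'\neq\succ$ and invoking $\sum_{\succ'\in\mathcal{R}}R({\succ'})=1$ yields $\sum_{\succ'\neq\succ}\sum_{x_i\in C}\pi({\succ'},x_i)\leq\bigl(1-R({\succ})\bigr)\cdot{m\choose 2}$.

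Adding these two estimates, the total payment is at most $\bigl(R({\succ})\cdot{m\choose 2}-1\bigr)+\bigl(1-R({\succ})\bigr)\cdot{m\choose 2}={m\choose 2}-1$, which contradicts condition~(4), stating that the total payment strictly exceeds ${m\choose 2}-1$. Hence no such $\succ$ exists and $\rhd$ satisfies uPJR. I do not expect a serious obstacle here; the argument is a short double-counting. The two points that require care are the integrality step, which turns the floor in the uPJR bound into the clean inequality $u({\succ},\rhd)\leq R({\succ})\cdot{m\choose 2}-1$, and the bookkeeping that matches each ranking's local budget $R({\succ'})\cdot{m\choose 2}$ with the global budget ${m\choose 2}$. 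It is precisely the "approximate perfect priceability" requirement (4)---that almost the entire budget is spent---that prevents the hypothetical deficit at $\succ$ from being absorbed, forcing the contradiction.
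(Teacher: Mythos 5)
Your proposal is correct and follows essentially the same argument as the paper's proof: bound the deviating ranking's total payment by $u({\succ},\rhd)\leq R({\succ})\cdot{m\choose 2}-1$ via Condition~(1) and the decomposition of utility into positional contributions, bound everyone else's payment by Condition~(2), and contradict Condition~(4). The only cosmetic difference is that the paper phrases the integrality step through an auxiliary integer $\ell$ with $R({\succ})\geq\ell/{m\choose 2}$ and $u({\succ},\rhd)\leq\ell-1$, while you work directly with the floor; these are equivalent.
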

\begin{proof}
    Assume for contradiction that there is a profile $R$ and ranking $\rhd=x_1,\dots, x_m$ such that~$\rhd$ satisfies rank-priceability for $R$ but not uPJR. Since $\rhd$ fails uPJR, there is an input ranking $\succ$ and an integer $\ell\in\mathbb{N}$ such that $R({\succ})\geq \ell/{m\choose 2}$ and $u({\succ},\rhd)<\ell$. Since both $u({\succ},\rhd)$ and $\ell$ are integers, the latter inequality means that $u({\succ},\rhd)\leq \ell-1$. Next, let $\pi$ denote a payment scheme verifying the rank-priceability of $\rhd$. By Condition (1), we have that $\pi({\succ'},x_i)\leq u({\succ'}, x_i, \{x_i,\dots, x_m\})$ for all $x_i\in C$ and ${\succ'}\in\mathcal{R}$. Because $\sum_{i=1}^{m} u({\succ}, x_i, \{x_i,\dots, x_m\})= u({\succ}, \rhd)$, we conclude that 
    \[\sum_{i=1}^{m} \pi({\succ}, x_i)\leq \sum_{i=1}^{m} u({\succ}, x_i, \{x_i,\dots, x_m\})=u({\succ},\rhd)\leq \ell-1\leq R({\succ})\cdot {m\choose 2}-1.\] 
    Further, by Condition (2) of rank-priceability, we have that 
    \[\sum_{\succ'\in\mathcal{R}\setminus \{\succ\}}\sum_{x_i\in C} \pi(\succ',x_i)\leq {m\choose 2} \cdot \sum_{\succ'\in\mathcal{R}\setminus\{\succ\}}  R(\succ')={m\choose 2} \cdot (1-R({\succ})).\] 
    By combining our previous two inequalities, we derive that 
    \begin{align*}
        \sum_{\succ'\in\mathcal{R}}\sum_{x_i\in C} \pi(\succ', x_i)
        \leq {m\choose 2}\cdot R({\succ})-1 + {m\choose 2}\cdot (1-R({\succ}))={m\choose 2}-1.
    \end{align*}
    However, this contradicts Condition (4) of rank-priceability. Hence, our initial assumption is wrong and every rank-priceable ranking also satisfies uPJR.
\end{proof}

Notably, the proof of \Cref{prop:RPimpliesuPJR} does not use the third condition of rank-priceability. Moreover, Condition (4) of this axiom can be weakened to $\sum_{x_i\in C} \pi({\succ},x_i) > {m\choose 2}\cdot R({\succ}) -1$ for all ${\succ}\in\mathcal{R}$. When weakening Condition (4) in this way and omitting Condition (3), rank-priceability is equivalent to uPJR. We nevertheless decided to define rank-priceability based on Conditions (3) and (4) because they give additional guidance for the design of SWFs. We will next clarify this point with an example demonstrating the difference between uPJR and rank-priceability. 

\begin{example}[uPJR does not imply rank-priceability.]\label{ex:RP}
        \begin{wrapstuff}[0,r,type=figure,width=4.1cm]
	\centering
	\begin{tikzpicture}
\node at (0,0) [draw=black!30] (step1) {	
			\votermultiplicity{$\frac{1}{6}$}{\weakorder{{1},{2},{3},{4},{5}}}
			\votermultiplicity{$\frac{1}{6}$}{\weakorder{{1},{3},{2},{4},{5}}}
            \votermultiplicity{$\frac{1}{6}$}{\weakorder{{2},{1},{3},{4},{5}}}
            \votermultiplicity{$\frac{1}{6}$}{\weakorder{{2},{3},{1},{4},{5}}}
            \votermultiplicity{$\frac{1}{6}$}{\weakorder{{3},{1},{2},{4},{5}}}
            \votermultiplicity{$\frac{1}{6}$}{\weakorder{{3},{2},{1},{4},{5}}}
		};

        \draw[line width = 0.2mm]
  ([yshift=-0.65cm,xshift=0.1cm]step1.north west) -- ([yshift=-0.65cm,xshift=-0.1cm]step1.north east);
    \end{tikzpicture}
    \vspace{-7pt}
	\caption{A profile where uPJR does not imply rank-priceability.}
	\label{fig:psc_exp2}
    \end{wrapstuff}
    Consider the profile $R$ shown on the right. For this profile, uPJR requires that the output ranking $\rhd$ agrees in at least $\lfloor \frac{1}{6}\cdot {5\choose 2}\rfloor=1$ pair with every input ranking $\succ_i$. While counterintuitive, this means that the ranking $\rhd=x_4,x_5,x_1,x_2,x_3$ satisfies uPJR as all input rankings put $x_4$ ahead of $x_5$. However, this ranking is not rank-priceable: no ranking is willing to pay for $x_5$, so the input rankings can pay at most $4+2+1=7$ for $x_4$, $x_1$, and $x_2$. Since the total budget of our rankings is ${5\choose 2}=10$, a budget of $3$ is remaining and $\rhd$ is not rank-priceable.  By contrast, it can be checked that every ranking $\rhd$ with $x\rhd x_4\rhd x_5$ for all $x\in \{x_1,x_2,x_3\}$ satisfies rank-priceability.
\end{example}

Finally, we will introduce the Proportional Sequential Borda rule (\psb). On a high level, this rule repeatedly chooses the candidate maximizing the Borda score, updates the weights of the input rankings, and deletes the Borda winner from the profile. To make this more formal, we assume for every step $i\in \{1,\dots, m\}$ of \psb that each ranking $\succ$ has a budget $b_i({\succ})\in\mathbb{R}_{\geq 0}$ and that the set of remaining candidates is $X_i$. In the first round, it holds that $X_1=C$ and $b_1({\succ})=R({\succ})\cdot{m\choose 2}$ for all ${\succ}\in\mathcal{R}$, where $R$ denotes the input profile. Now, in each round $i$, we choose the candidate $x^*$ that maximizes the Borda score (or utilitarian welfare) $U(b_i,x,X_i)=\sum_{{\succ}\in\mathcal{R}} b_i({\succ}) \cdot u({\succ},x,X_i)=\sum_{{\succ}\in\mathcal{R}} b_i({\succ}) \cdot |\{y\in X_i\colon x\succ y\}|$ among all candidates in $X_i$, with ties broken arbitrarily. Next, we place this candidate at the $i$-th position of the output ranking and remove it from the set of available candidates (i.e., $X_{i+1}=X_i\setminus \{x^*\}$). Furthermore, we assume that the cost of the $i$-th candidate is $m-i$ and, if possible, each ranking will pay a share of this cost that is proportional to its contribution to the Borda score. More formally, each ranking $\succ$ will pay either $\frac{(m-i)\cdot u({\succ}, x^*, X_i)\cdot b_i({\succ})}{U(b_i, x^*, X_i)}$ or its remaining budget $b_i({\succ})$ if the proportional contribution exceeds $b_i({\succ})$. Hence, we set $b_{i+1}({\succ})=b_i({\succ})-\min(\frac{(m-i)\cdot u({\succ}, x^*, X_i)\cdot b_i({\succ})}{U(b_i, x^*, X_i)}, b_i({\succ}))$ for each ranking ${\succ}\in\mathcal{R}$. After defining $X_{i+1}$ and the budgets $b_{i+1}({\succ})$, \psb continues with the next round until all candidates are placed in the output ranking. 

Subsequently, we consider an example to illustrate the Proportional Sequential Borda rule.

\begin{example}[The Proportional Sequential Borda rule]\label{ex:psb}
Fix two rankings ${\succ_1}=x_1,x_2,x_3,x_4,x_5$ and ${\succ_2}=x_4,x_5,x_1,x_3,x_2$, and let $R$ be the profile with $R({\succ_1})=0.6$ and $R({\succ_2})=0.4$. For this profile, 
\psb chooses the ranking ${\rhd}=x_1,x_4,x_2,x_5,x_3$, which is witnessed by the following sequence of profiles.
\begin{center}
    \begin{tikzpicture}
\node at (0,0) (step1) {	
			\votermultiplicity{$6$}{\weakorder{{1},{2},{3},{4},{5}}}
			\votermultiplicity{$4$}{\weakorder{{4},{5},{1},{3},{2}}}
		};

\draw[line width = 0.2mm]
  ([yshift=-0.5cm,xshift=0.1cm]step1.north west) -- ([yshift=-0.5cm,xshift=-0.1cm]step1.north east);

\node at (1,0) {$\implies$};

\node at (2,0) (step2) {
    \votermultiplicity{$3$}{\weakorder{{2},{3},{4},{5}}}
	\votermultiplicity{$3$}{\weakorder{{4},{5},{3},{2}}}
};

\draw[line width = 0.2mm]
  ([yshift=-0.5cm,xshift=0.1cm]step2.north west) -- ([yshift=-0.5cm,xshift=-0.1cm]step2.north east);

\node at (3,0) {$\implies$};

\node at (4,0) (step3) {
    \votermultiplicity{$\frac{9}{4}$}{\weakorder{{2},{3},{5}}}
	\votermultiplicity{$\frac{3}{4}$}{\weakorder{{5},{3},{2}}}
};

\draw[line width = 0.2mm]
  ([yshift=-0.65cm,xshift=0.1cm]step3.north west) -- ([yshift=-0.65cm,xshift=-0.1cm]step3.north east);

\node at (5,0) {$\implies$};

\node at (6,0) (step4) {
    \votermultiplicity{$\frac{1}{4}$}{\weakorder{{3},{5}}}
	\votermultiplicity{$\frac{3}{4}$}{\weakorder{{5},{3}}}
};

\draw[line width = 0.2mm]
  ([yshift=-0.65cm,xshift=0.1cm]step4.north west) -- ([yshift=-0.65cm,xshift=-0.1cm]step4.north east);

\node at (7,0) {$\implies$};

\node at (8,0) (step5) {
    \votermultiplicity{$\frac{1}{4}$}{\weakorder{{3}}}
	\votermultiplicity{$0$}{\weakorder{{3}}}
};

\draw[line width = 0.2mm]
  ([yshift=-0.65cm,xshift=0.1cm]step5.north west) -- ([yshift=-0.65cm,xshift=-0.1cm]step5.north east);
    \end{tikzpicture}\end{center}
On the left of this figure, we show the initial profile $R$, where the rankings are weighted by their budgets $b_1({\succ_1})=R({\succ_1})\cdot{5\choose 2}=6$ and $b_1({\succ_2})=R({\succ_2})\cdot{5\choose 2}=4$. In this profile, $x_1$ maximizes the Borda score as $U(b_1,x_1,\{x_1,\dots,x_5\})=6\cdot 4+4\cdot 2=32$. Consequently, $\succ_1$ pays $\frac{4}{32}\cdot 6\cdot 4=3$ and $\succ_2$ pays $\frac{4}{32}\cdot 4\cdot 2=1$, which means that the new budgets are $b_2({\succ_1})=b_2({\succ_2})=3$. We moreover remove $x_1$ from the profile as $X_2=C\setminus \{x_1\}$. In the second step, $x_4$ maximizes the total Borda score with $U(b_2,x_4, \{x_2,\dots, x_5\})=12$, so $\succ_1$ pays $\frac{3}{12}\cdot 3\cdot 1 =\frac{3}{4}$ and $\succ_2$ pays $\frac{3}{12}\cdot 3\cdot 3 =\frac{9}{4}$. Hence, the new budgets are given by $b_3({\succ_1})=\frac{9}{4}$ and $b_3({\succ_2})=\frac{3}{4}$ and $x_4$ is removed. In the third step, $x_2$ maximizes the total Borda score with $U(b_3,x_2, \{x_2,x_3,x_5\})=\frac{9}{2}$ and $\succ_1$ pays the total cost of $2$ since $u(\succ_2,x_2,\{x_2,x_3,x_5\})=0$. Hence, the budgets in the fourth step are $b_4({\succ_1})=\frac{1}{4}$ and $b_4({\succ_2})=\frac{3}{4}$. \psb now picks $x_5$ and $\succ_2$ will pay its remaining budget of $\frac{3}{4}$. Finally, in the last round, $x_3$ will be picked and no ranking pays as no additional utility is generated.
\end{example}

We note that the total leftover budget in Example 2 is only $\frac{1}{4}$, which implies that \psb is rank-priceable for the given profile. We will next show that this holds in general, i.e., the Proportional Sequential Borda rule satisfies rank-priceability and therefore also uPJR.

\begin{theorem}\label{thm:propBordaRP}
    The Proportional Sequential Borda rule satisfies rank-priceability. 
\end{theorem}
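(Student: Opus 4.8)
The plan is to take $\pi$ to be exactly the payment scheme generated by running \psb. Writing the output ranking as $\rhd = x_1,\dots,x_m$, the candidate chosen in round $i$ is $x_i$ and the remaining set is $X_i = \{x_i,\dots,x_m\}$; define $\pi(\succ,x_i)$ to be the amount $\succ$ pays in round $i$, i.e.\ $\pi(\succ,x_i)=\min\!\left(\frac{(m-i)\,u(\succ,x_i,X_i)\,b_i(\succ)}{U(b_i,x_i,X_i)},\,b_i(\succ)\right)$ (with $\pi(\succ,x_m):=0$ in the costless last round, where $U(b_m,x_m,X_m)=0$). Conditions~(2) and~(3) are then essentially immediate. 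For~(2), the budgets telescope: $b_{i+1}(\succ)=b_i(\succ)-\pi(\succ,x_i)\ge 0$, so $\sum_i \pi(\succ,x_i)=b_1(\succ)-b_{m+1}(\succ)\le b_1(\succ)=R(\succ)\binom{m}{2}$. For~(3), each $\pi(\succ,x_i)$ is bounded by the first entry of the minimum, and $\sum_\succ \frac{(m-i)\,u(\succ,x_i,X_i)\,b_i(\succ)}{U(b_i,x_i,X_i)}=m-i$ since the numerators sum to $(m-i)\,U(b_i,x_i,X_i)$.

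For Condition~(1) the lower bound is clear and the upper bound is trivial when $u(\succ,x_i,X_i)=0$ or $b_i(\succ)=0$, so suppose both are positive. Let $t$ be the candidate $\succ$ ranks first among $X_i$, so $u(\succ,t,X_i)=|X_i|-1=m-i$. Since $x_i$ maximises the weighted Borda score over $X_i$ and the single term of $\succ$ already contributes $b_i(\succ)(m-i)$ to that score at $t$, we get $U(b_i,x_i,X_i)\ge U(b_i,t,X_i)\ge b_i(\succ)(m-i)>0$; substituting this into the first entry of the minimum yields $\pi(\succ,x_i)\le u(\succ,x_i,X_i)$. (In particular the cap is never what secures Condition~(1).)

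The substantive part is Condition~(4): the total leftover $B_{m+1}:=\sum_\succ b_{m+1}(\succ)$ must be strictly below $1$. I would first prove the invariant $B_i:=\sum_\succ b_i(\succ)\ge \binom{m-i+1}{2}$ for all $i$, i.e.\ the remaining budget always covers the remaining cost $\sum_{j\ge i}(m-j)=\binom{m-i+1}{2}$; this follows by induction from $B_1=\binom{m}{2}$ and $B_{i+1}\ge B_i-(m-i)$ (the round-$i$ payment is at most $m-i$ by the computation in Condition~(3)). The decisive observation is then that a ranking can \emph{overflow} in round $i$ -- i.e.\ have proportional contribution $\frac{(m-i)u(\succ,x_i,X_i)b_i(\succ)}{U(b_i,x_i,X_i)}$ strictly exceeding $b_i(\succ)$, which is the only way budget goes unspent -- only if $u(\succ,x_i,X_i)>B_i/2$: overflow forces $(m-i)u(\succ,x_i,X_i)>U(b_i,x_i,X_i)$, while $U(b_i,x_i,X_i)\ge \frac{(m-i)B_i}{2}$ because the winner's Borda score is at least the average over the $m-i+1$ remaining candidates. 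Combining $u(\succ,x_i,X_i)\le m-i$ with $B_i\ge\binom{m-i+1}{2}$, overflow is possible only when $m-i\le 2$. Hence no budget is wasted before round $m-2$, so for $m\ge 3$ we have $B_{m-2}=\binom{m}{2}-\sum_{i=1}^{m-3}(m-i)=3$, and the last three rounds operate on three candidates with combined budget exactly $3$. A short explicit estimate of these rounds closes the argument: in round $m-2$ only rankings top-ranking $x_{m-2}$ among $X_{m-2}$ can overflow (since overflow needs $u>3/2$), and bounding their combined budget by half the (at least $3$) Borda score of $x_{m-2}$ gives waste at most $\tfrac{1}{2}$ in that round; round $m-1$ is a two-candidate round, whose winner's Borda score is at least half the budget $1+\mathrm{waste}_{m-2}$ entering it, which bounds the remaining waste by $\tfrac{1}{2}(1-\mathrm{waste}_{m-2})$ and yields $B_{m+1}\le\tfrac{3}{4}<1$. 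The cases $m\le 2$ are checked directly.

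I expect Condition~(4) to be the main obstacle, and the crux within it is the dichotomy that overflow can occur only in the last two nontrivial rounds: this turns an a priori unbounded accounting of wasted budget into a bounded case analysis. Conditions~(1)--(3) are routine once one spots the comparison of the Borda winner with $\succ$'s favourite remaining candidate.
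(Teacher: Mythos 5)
Your proposal is correct and follows essentially the same route as the paper's proof: the same payment scheme, the same comparison of the Borda winner with each ranking's favourite remaining candidate for Condition (1), and for Condition (4) the same average-Borda-score bound $U(b_i,x_i,X_i)\ge\frac{(m-i)}{2}\sum_{\succ}b_i(\succ)$ showing that the cap can only bind when at most three candidates remain, followed by the same explicit analysis of the last two rounds yielding a leftover of at most $\frac{3}{4}$. The only difference is presentational (you phrase the key step as an overflow dichotomy via the invariant $\sum_\succ b_i(\succ)\ge\binom{m-i+1}{2}$, while the paper tracks the total budget exactly), so no further changes are needed.
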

\begin{proof}
    Fix a profile $R$ and let $\rhd=x_1,\dots, x_m$ be the ranking chosen by \psb. For each $i\in \{1,\dots, m\}$, we denote by $X_i=\{x_i,\dots,x_m\}$ the remaining candidates in the $i$-th round of \psb and by $b_i({\succ})$ the remaining budget of the ranking $\succ$. We will show that the payment scheme $\pi$ defined by
    $\pi({\succ}, x_i)=b_i({\succ})-b_{i+1}({\succ})=\min\left(\frac{(m-i)\cdot u({\succ}, x_i, X_i)\cdot b_i({\succ})}{U(b_i, x_i, X_i)}, b_i({\succ})\right)$
    for all $i\in \{1,\dots, m-1\}$ and $\pi({\succ},x_m)=0$ satisfies the conditions of rank-priceability.\medskip
    
    \emph{Condition (1):} Fix a step $i\in \{1,\dots, m\}$ and a ranking $\succ$. We need to show that $0\leq \pi({\succ},x_i)\leq u({\succ},x_i, X_i)$. We first note that the definition of $\pi$ immediately implies the lower bound, and that both bounds hold if $i=m$ or $b_i({\succ})=0$ because $\pi({\succ},x_i)=0$ in these cases. Hence, suppose that $i<m$ and $b_i({\succ})>0$. In this situation, we have that $U(b_i, x_i, X_i)\geq (m-i)b_i({\succ})$ because $x_i$ maximizes the Borda score in the $i$-th step and $(m-i)b_i({\succ})$ is a lower bound for the Borda score of the top-ranked candidate of $\succ$ in $X_i$. This implies that 
    \[\pi({\succ}, x_i)\leq \frac{(m-i) u({\succ}, x_i, X_i) b_i({\succ})}{U(b_i, x_i, X_i)}\leq \frac{(m-i) u({\succ}, x_i, X_i) b_i({\succ})}{(m-i)b_i({\succ})}=u({\succ}, x_i, X_i),\] 
    which proves Condition (1).\medskip

    \emph{Condition (2):} We next show that $\sum_{x_i\in C} \pi({\succ},x_i)\leq R({\succ}) {m\choose 2}$ for all ${\succ}\in \mathcal{R}$. For this, we observe that $\sum_{x_i\in C} \pi({\succ},x_i)\!=\!\sum_{i=1}^{m-1} (b_i({\succ})-b_{i+1}({\succ}))\!=\!b_1({\succ})-b_m({\succ})$. Since $b_1({\succ})=R({\succ})\cdot{m\choose 2}$ by definition and $b_m({\succ})\geq 0$ (as a ranking never pays more than its remaining budget), the condition holds.\medskip

    \emph{Condition (3):} For this condition, we need to show that $\sum_{{\succ}\in\mathcal{R}} \pi({\succ},x_i)\leq m-i$ for all $x_i\in C$. This holds for $x_m$ because $\pi({\succ},x_m)=0$ for all ${\succ}\in\mathcal{R}$ and for all $i\in \{1,\dots, m-1\}$ because 
    \[\sum_{{\succ}\in\mathcal{R}} \pi({\succ},x_i)\leq \sum_{{\succ}\in\mathcal{R}} \frac{(m-i)\cdot u({\succ}, x_i, X_i)\cdot b_i({\succ})}{U(b_i, x_i, X_i)}=\frac{m-i}{U(b_i,x_i,X_i)}\cdot U(b_i,x_i,X_i)=m-i.\]

    \emph{Condition (4):} Finally, we will show that $\sum_{{\succ}\in\mathcal{R}}\sum_{x_i\in C} \pi({\succ},x_i)>{m\choose 2}-1$. To this end, we will prove that the total remaining budget after the execution of \psb is at most $\sum_{\succ\in\mathcal{R}} b_m({\succ})\leq0.75$. This implies Condition (4) because $\sum_{{\succ}\in\mathcal{R}}\sum_{x_i\in C} \pi({\succ},x_i)=\sum_{{\succ}\in\mathcal{R}}\sum_{i=1}^{m-1} b_i({\succ})-b_{i+1}({\succ})=\sum_{\succ\in\mathcal{R}} b_1({\succ})-b_m({\succ})\geq {m\choose 2}-0.75$. The 
    final inequality here uses that the total initial budget is $\sum_{\succ\in\mathcal{R}} b_1({\succ})={m\choose 2}$ and our claim that the total remaining budget is at most $\sum_{\succ\in\mathcal{R}} b_m({\succ})\leq 0.75$.
    
    To prove that $\sum_{\succ\in\mathcal{R}} b_m({\succ})\leq 0.75$, we will first show that $\sum_{{\succ}\in\mathcal{R}} b_i({\succ})=\frac{(m-i)(m-i+1)}{2}$ for all $i\leq m-2$. For $i=1$, this is true because $\sum_{{\succ}\in\mathcal{R}} b_1({\succ})={m\choose 2}\sum_{{\succ}\in\mathcal{R}} R({\succ}) = \frac{(m-1)(m)}{2}$ by definition. Next, we inductively assume that $\sum_{{\succ}\in\mathcal{R}} b_i({\succ})=\frac{(m-i)(m-i+1)}{2}$ for some $i\in \{1,\dots,m-3\}$. Since $|X_i|=m-i+1$, it follows for every ranking $\succ$ that $\sum_{x\in X_i} u({\succ}, x, X_i)=\sum_{j=0}^{m-i} j=\frac{(m-i)(m-i+1)}{2}$. Hence, the total Borda score of all candidates in this round is 
    \[\sum_{x\in X_i} \sum_{{\succ}\in\mathcal{R}} b_{i}({\succ}) u({\succ}, x, X_i)=\sum_{{\succ}\in\mathcal{R}} b_{i}({\succ}) \frac{(m-i)(m-i+1)}{2}=\left(\frac{(m-i)(m-i+1)}{2}\right)^2.\] 
    
    Since $x_{i}$ maximizes the Borda score among the candidates in $X_i$, we have that $U(b_i,x_i,X_i)\geq \frac{1}{m-i+1}\sum_{x\in X_i} U(b_i, x, X_i)=\frac{(m-i)^2(m-i+1)}{4}$. This implies that $\frac{(m-i) u({\succ}, x, X_i)b_i({\succ})}{U(b_i, x_i, X_i)}\leq \frac{4u({\succ}, x_i, X_i)b_i({\succ})}{(m-i)(m-i+1)}$ for all ${\succ}\in\mathcal{R}$. Further, it holds that $\frac{4}{m-i+1}\leq 1$ as $m-i\geq 3$ and $\frac{u({\succ}, x_i, X_i)}{m-i}\leq 1$ by definition of $u$. We thus infer that $\pi({\succ}, x_i)=\min\left(\frac{(m-i) u({\succ}, x, X_i) b_i({\succ})}{U(b_i, x_i, X_i)}, b_i({\succ})\right)=\frac{(m-i) u({\succ}, x, X_i) b_i({\succ})}{U(b_i, x_i, X_i)}$ for all ${\succ}\in\mathcal{R}$. Finally, because $\sum_{{\succ}\in\mathcal{R}}\frac{(m-i) u({\succ}, x, X_i) b_i({\succ})}{U(b_i, x_i, X_i)}=\frac{m-i}{U(b_i, x_i, X_i)}{U(b_i, x_i,X_i)}=m-i$ and $\sum_{{\succ}\in\mathcal{R}} b_i({\succ})=\frac{(m-i)(m-i+1)}{2}$, we conclude that 
    {
    \begin{align*}
        \sum_{{\succ}\in\mathcal{R}} b_{i+1}({\succ})&=\sum_{{\succ}\in\mathcal{R}} \left(b_{i}({\succ}) - \pi({\succ}, x_i)\right) = \frac{(m-i)(m-i+1)}{2} - (m-i)
        =\frac{(m-i)(m-i-1)}{2}.
    \end{align*}
    }

     This proves the induction step, so it follows for $i=m-2$ that $\sum_{{\succ}\in\mathcal{R}} b_{i}({\succ})=\frac{(m-i)(m-i+1)}{2}=3$. Furthermore, when $i=m-2$, we are left with the candidates $X_{m-2}=\{x_{m-2}, x_{m-1}, x_m\}$ and we know that $x_{m-2}$ maximizes the Borda score among these candidates with respect to $b_{m-2}$. Analogous to before, we hence infer that $U(b_{m-2}, x_{m-2}, X_{m-2})\geq \frac{1}{3}\sum_{x\in X_{m-2}}\sum_{{\succ}\in\mathcal{R}} b_{m-2}({\succ}) u({\succ}, x_{m-2}, X_{m-2}) = 3$. Since $u({\succ},x_{m-2}, X_{m-2})\leq 2$ for all ${\succ}\in\mathcal{R}$, this implies that 
     $\frac{3}{4}\cdot \frac{2 u({\succ}, x_{m-2}, X_{m-2}) b_{m-2}({\succ})}{U(b_{m-2}, x_{m-2}, X_{m-2})}\leq \frac{3}{4}\cdot \frac{4 b_{m-2}({\succ})}{3}=b_{m-2}({\succ})$ and thus
     $\frac{3}{4}\cdot \frac{2 u({\succ}, x_{m-2}, X_{m-2}) b_{m-2}({\succ})}{U(b_{m-2}, x_{m-2}, X_{m-2})}\leq \min(\frac{2 u({\succ}, x_{m-2}, X_{m-2}) b_{m-2}({\succ})}{U(b_{m-2}, x_{m-2}, X_{m-2})}, b_{m-2}({\succ}))=\pi({\succ},x_{m-2})$ for all rankings $\succ$. In turn, it follows that $\sum_{{\succ}\in\mathcal{R}} \pi({\succ}, x_{m-2})\geq \frac{3}{4}\sum_{{\succ}\in\mathcal{R}} \frac{2 u({\succ}, x_{m-2}, X_{m-2}) b_{m-2}({\succ})}{U(b_{m-2}, x_{m-2}, X_{m-2})} = \frac{3}{2}$ since $\sum_{{\succ}\in\mathcal{R}} \frac{2 u({\succ}, x_{m-2}, X_{m-2}) b_{m-2}({\succ})}{U(b_{m-2}, x_{m-2}, X_{m-2})} =2$. This means that the total budget in the $(m-1)$-th round is at most $\sum_{{\succ}\in\mathcal{R}} b_{m-1}({\succ})=\sum_{{\succ}\in\mathcal{R}} (b_{m-2} - \pi({\succ},x_{m-2}))\leq 3-\frac{3}{2}=\frac{3}{2}$.

    Finally, in round $i=m-1$, we are left with the candidates $X_{m-1}=\{x_{m-1}, x_m\}$, which means that $u({\succ},x_{m-1}, X_{m-1})\in \{0,1\}$ for all rankings $\succ$. If $U(b_{m-1},x_{m-1}, X_{m-1})\geq1$, it holds for all input rankings $\succ$ that $\pi({\succ},x_{m-1})=\min(\frac{u({\succ}, x_{m-1}, X_{m-1}) b_{m-1}({\succ})}{U(b_{m-1}, x_{m-1}, X_{m-1})}, b_{m-1}({\succ}))=\frac{u({\succ}, x_{m-1}, X_{m-1}) b_{m-1}({\succ})}{U(b_{m-1}, x_{m-1}, X_{m-1})}$. Consequently, we decrease the total budget by $1$ and the total remaining budget is at most $0.5$ in this case.
    On the other hand, if $U(b_{m-1},x_{m-1}, X_{m-1})<1$, each ranking with $u({\succ}, x_{m-1}, X_{m-1})=1$ will contribute its remaining budget. Since $x_{m-1}$ maximizes the Borda score when only $x_{m-1}$ and $x_m$ are present, we derive that $\sum_{{\succ}\in\mathcal{R}\colon x_{m-1}\succ x_m} b_{m-1}({\succ})\geq \sum_{{\succ}\in\mathcal{R}\colon x_{m}\succ x_{m-1}} b_{m-1}({\succ})$. Thus, we reduce the total remaining budget by at least half, so $
    \sum_{\succ\in\mathcal{R}} b_m({\succ})\leq \frac{1}{2}\sum_{\succ\in\mathcal{R}} b_{m-1}({\succ})\leq 0.75$. This shows for both cases that the total remaining budget is at most $\sum_{\succ\in\mathcal{R}} b_m({\succ})\leq 0.75$. 
    \end{proof}

While \Cref{thm:propBordaRP} shows that \psb satisfies a demanding proportionality axiom for individual rankings, neither uPJR nor rank-priceability give any guarantees for groups of rankings. For example, this means that \psb may fail to properly represent a subprofile that consists of many similar rankings with small weights, even if their total weight is substantial. To show that this is not the case, we will follow the works of \citet{SkGo22a} and \citet{LPW24a} and analyze the average utility that \psb guarantees to subprofiles, as a function of the size of the subprofile. Intuitively, a proportional SWF should guarantee to each subprofile $S$ a fraction of the total utility ${m\choose 2}$ that is at least linear in $|S|$. We show that \psb meets this condition as it guarantees to every subprofile $S$ an average utility of at least $ \frac{|S|}{4}-\frac{3}{16}$.
Due to space constraints, we defer the proof of this theorem to \Cref{app:propBordaAR} and only present a (detailed) proof sketch here. 

\begin{restatable}{theorem}{propBordaAR}\label{thm:propBordaAR}
    Let $R$ be a profile on $m$ candidates and $\rhd=\text{\psb}(R)$ be the ranking chosen by the Proportional Sequential Borda rule. It holds for every subprofile $S$ of $R$ that 
    \[\frac{1}{|S|}\sum_{{\succ}\in \mathcal{R}} S({\succ}) u({\succ},\rhd)\geq {m
    \choose 2}\cdot \frac{|S|}{4}-\frac{3}{16}.\]
\end{restatable}
\begin{proof}[Proof Sketch] 
Fix a profile $R$, let $\rhd=x_1,\dots,x_m$ denote the ranking chosen by \psb, and let $S$ denote an arbitrary subprofile of $R$. As before, we denote by $X_i=\{x_i,\dots,x_m\}$ the set of available candidates in the $i$-th round of $\psb$ and by $b_i({\succ})$ the remaining budget of the ranking $\succ$. Moreover, we define by $b_i^S({\succ})=\frac{S({\succ})}{R({\succ})} b_i({\succ})$ the budget of $\succ$ in the $i$-th round that is due to $S$ (where we assume for simplicity that $\frac{0}{0}=0$). Lastly, we let $c_i^S({\succ})=b_i^S({\succ})-b_{i+1}^S({\succ})=\frac{S({\succ})}{R({\succ})}(b_i({\succ})-b_{i+1}({\succ}))$ be the payment of $\succ$ in the $i$-th round that is due to $S$. 

Now, fix a round $i\in \{1,\dots, m-1\}$ and a ranking $\succ$ with $R({\succ})>0$. By the definition of \psb, it holds that $b_i({\succ})-b_{i+1}({\succ})=\min\left(\frac{(m-i)b_i({\succ})u({\succ},x_i,X_i)}{U(b_i,x_i,X_i)}, b_i({\succ})\right)\leq \frac{(m-i)b_i({\succ})u({\succ},x_i,X_i)}{U(b_i,x_i,X_i)}$. By rearranging this inequality and multiplying both sides with $\frac{S({\succ})}{R({\succ})}$, we derive that 
\[c_i^S({\succ})\cdot \frac{U(b_i,x_i,X_i)}{m-i}=\frac{S({\succ})}{R({\succ})}\cdot (b_i({\succ})-b_{i+1}({\succ}))\cdot \frac{U(b_i,x_i,X_i)}{m-i}\leq \frac{S({\succ})}{R({\succ})}\cdot b_i({\succ})
\cdot u({\succ},x_i,X_i).\]

Further, it holds that $b_1({\succ})\geq b_i({\succ})$ for all $i\in \{1,\dots, m-1\}$ because our budgets are non-increasing. It thus follows that $c_i^S({\succ})\cdot \frac{U(b_i,x_i,X_i)}{m-i}\leq {m\choose 2} \cdot S({\succ})\cdot u({\succ},x_i,X_i)$ since $b_1({\succ})=R({\succ})\cdot {m\choose 2}$. Next, analogous to the proof of \Cref{thm:propBordaRP}, it can be shown that $U(b_i, x_i,X_i)\geq \frac{(m-i)^2(m-i+1)}{4}$, which implies that $c_i^S({\succ})\cdot \frac{(m-i)(m-i+1)}{4}\leq c_i^S({\succ})\cdot\frac{U(b_i,x_i,X_i)}{m-i}\leq {m\choose 2} \cdot S({\succ})\cdot u({\succ},x_i,X_i)$. Lastly, let $C_i^S=\sum_{{\succ}\in\mathcal{R}\colon S({\succ})>0} c_i^S({\succ})$ be the total payment of $S$ in the $i$-th step and let $C^S=\sum_{i=1}^{m-1} C^S_i$. By summing over all rankings ${\succ}\in\mathcal{R}$ and all rounds $i\in \{1,\dots, m-1\}$, we get that 
\[\sum_{i=1}^{m-1} C_i^S \frac{(m-i)(m-i+1)}{4}\leq \sum_{{\succ}\in\mathcal{R}}\sum_{i=1}^{m-1} {m\choose 2} S({\succ}) u({\succ},x_i,X_i)={m\choose 2}\sum_{{\succ}\in\mathcal{R}} S({\succ}) u({\succ},\rhd).\]

In the next step, we aim to find a lower bound for the left-hand side that only depends on $C^S$. To this end, we observe that the function $\frac{(m-i)(m-i+1)}{4}$ is decreasing as $i$ increases, so we minimize our left-hand side if we assume that $S$ only pays in late rounds. On the other hand, we know that $C_i^S\leq (m-i)$ because the total budget decrease in the $i$-th round is at most $(m-i)$. Hence, let $k$ denote the largest integer such that $\frac{k(k+1)}{2}=\sum_{i=1}^{k} i\leq C^S$ and let $\ell=C^S-\frac{k(k+1)}{2}$. This means we minimize $\sum_{i=1}^{m-1} C_i^S \frac{(m-i)(m-i+1)}{4}$ when setting $C_{i}^S=m-i$ for all $i\in \{m-k,\dots,m-1\}$, $C_{m-k-1}^S=\ell$, and $C_i^S=0$ for all $i\in \{1,\dots,m-k-2\}$. Consequently, it holds that
\[\sum_{i=1}^{m-1} C_i^S\cdot \frac{(m-i)(m-i+1)}{4}\geq \frac{\ell(m\mspace{-1.5mu}-\mspace{-1.5mu}(m\mspace{-1.5mu}-\mspace{-1.5mu}k\mspace{-1.5mu}-\mspace{-1.5mu}1))(m\mspace{-1.5mu}-\mspace{-1.5mu}(m\mspace{-1.5mu}-\mspace{-1.5mu}k\mspace{-1.5mu}-\mspace{-1.5mu}1)\mspace{-1.5mu}+\mspace{-1.5mu}1)}{4}\,+\!\sum_{i={m-k}}^{m-1} \frac{(m\mspace{-1.5mu}-\mspace{-1.5mu}i)^2(m\mspace{-1.5mu}-\mspace{-1.5mu}i\mspace{-1.5mu}+\mspace{-1.5mu}1)}{4}.\]

By changing the order of summation, this equivalently means that $\sum_{i=1}^{m-1} C_i^S\cdot \frac{(m-i)(m-i+1)}{4}\geq \frac{\ell(k+1)(k+2)}{4}+\sum_{i=1}^k\frac{i^2(i+1)}{4}$. Using the fact that $\sum_{i=1}^k {i^2(i+1)}= \frac{k^4}{4}+\frac{5k^3}{6}+\frac{3k^2}{4}+\frac{k}{6}$, we then show in a series of mathematical transformations that $\frac{\ell(k+1)(k+2)}{4}+\sum_{i=1}^k\frac{i^2(i+1)}{4}\geq\frac{C^S(C^S+1)}{4}$. We hence conclude that $\frac{C^S(C^S+1)}{4}\leq {m\choose 2}\sum_{{\succ}\in\mathcal{R}} S({\succ})\cdot u({\succ},\rhd)$.
Next, we have shown in the proof of \Cref{thm:propBordaRP} that the total left over budget of \psb is at most $\frac{3}{4}$, which means that $C^S\geq |S|\cdot{m\choose 2}-\frac{3}{4}$. By substituting this into our previous inequality, we derive that 
\[\frac{1}{4}\cdot(|S|{m\choose 2}-\frac{3}{4})\cdot|S|{m\choose 2}\leq \frac{1}{4}\cdot(|S|{m\choose 2}-\frac{3}{4})\cdot(|S|{m\choose 2}+\frac{1}{4})\leq {m\choose 2}\sum_{{\succ}\in\mathcal{R}} S({\succ})u({\succ},\rhd).\] 

Finally, our theorem follows by dividing both sides by $|S|{m\choose 2}$.
\end{proof}

\begin{remark}
    \citet{LPW24a} have computed a bound analogous to \Cref{thm:propBordaAR} for the Squared Kemeny rule, which is, however, sub-linear and trivial if $|S|\leq \frac{1}{4}$. Hence, the Proportional Sequential Borda rule gives stronger proportionality guarantees both in an axiomatic and a quantitative sense. Further, at least when $|S|=1$, our bound for \psb is roughly within a factor $2$ of the optimum. To see this, consider the profile where the ranking $x_1,\dots, x_m$ and its inverse $x_m\dots x_1$ both have weight $50\%$. In this case, every output ranking achieves an average utility of $\frac{1}{2}{m\choose 2}$ for the (sub)profile $S=R$, so no SWF can guarantee an average utility of more than $\frac{1}{2}{m\choose 2}$ to a (sub)profile $S$ with $|S|=1$. In comparison, \Cref{thm:propBordaAR} shows that \psb guarantees an average utility of at least $\frac{1}{4}{m\choose 2}-\frac{3}{16}$ for every profile~$R$ and the subprofile $S$ with $|S|=|R|=1$.
\end{remark}

\begin{remark}
    While \psb computes the winning ranking top-down, one can design an analogous SWF that computes the winning ranking bottom-up when redefining the utility $u({\succ}, x, X)$ by $|\{y\in X\setminus \{x\}\colon y\succ x\}|$. That is, we now give a score of $|X|-1$ to the least-preferred candidate in $X$ and of $0$ to the most-preferred candidate in this set. In each step, we then determine the candidate with maximal score, put it on the lowest available position of the output ranking, and update the budgets as for \psb. While this rule seems less intuitive, it satisfies the same proportionality guarantees than \psb. An analogous observation will apply to all SWFs in this paper because, intuitively, we generate the same utility for a ranking by placing its favorite candidate at the $k$-th position or its least favorite candidate at the $(m+1-k)$-th position. 
\end{remark}

\begin{remark}
    The Proportional Sequential Borda rule resembles the Method of Equal Shares (\mes) \citep{PeSk20a,PPS21a}, a prominent rule in participatory budgeting. In this setting, a finite set of voters $N$ reports approval ballots $A_i$ over a set of candidates $C$. Moreover, every candidate $x$ comes with a cost $c(x)$ and we need to choose a representative subset of the candidates whose total cost does not exceed a predefined budget $B$. When formalizing this setting via vote shares, a profile $R$ maps every approval ballot $A$ to the fraction of voters $R(A)$ that report $A$. Now, \mes works as follows: first, each approval ballot $A$ (or group of voters that report $A$) is assigned a budget of $b_1(A)=R(A)\cdot B$. Then, \mes iteratively picks the candidate $x^*$ that minimizes the value $\rho$ such that $\sum_{A} \min(\rho\cdot b_1(A)\cdot u(x,A), b_i(A))=c(x)$, where $u(x,A)=1$ if $x\in A$ and $0$ otherwise, removes $x^*$ from the profile, and updates the budgets to $b_{i+1}(A)=b_i(A)-\min(\rho\cdot b_1(A)\cdot u(x^*,A), b_i(A))$. This process is repeated until no candidate can be afforded anymore. 
    
    Interestingly, it can be shown that \psb picks in all rounds $i\in \{1,\dots, m-3\}$ the candidate $x_i$ that minimizes the value $\rho$ such that $\sum_{{\succ}\in\mathcal{R}} \min(\rho\cdot b_i({\succ})\cdot u({\succ},x_i,X_i), b_i({\succ}))=m-i$. Put differently, \psb optimizes the same objective as \mes, except that it uses the current budget $b_i({\succ})$ in the first argument instead of the initial budget $b_1({\succ})$. Intuitively, this alternative definition of \psb works because the Borda winner optimizes the cost per utility ratio for the current budgets. We refer to \Cref{app:RMES} for more details on the relation between \mes and \psb. Moreover, in this appendix, we also present another SWF called the Ranked Method of Equal Shares (\rmes), which is directly inspired by the \mes and satisfies rank-priceability while being more utilitarian than~\psb.
    \end{remark}

\section{The Flow-adjusting Borda Rule and sPJR}\label{subsec:FB}

As our last contribution, we turn to sPJR, our strongest proportionality axiom. In particular, we first show that \psb fails this condition, which necessitates the design of another SWF. To this end, we introduce a strengthening of rank-priceability called pair-priceability, and show that this condition implies sPJR. Based on this insight, we design the Flow-adjusting Borda rule (\fb) and prove that it is pair-priceable. Moreover, we also show that \fb achieves the same guarantee as \psb for the average utility of an arbitrary subprofile. All proofs in this section, except that of \Cref{ex:sPJRcounter}, are deferred to \Cref{app:FB}, and we provide proof sketches instead.

As the first contribution of this section, we show that \psb fails sPJR. 

\begin{proposition}\label{ex:sPJRcounter}
    \psb fails sPJR.
\end{proposition}
\begin{proof}
    Let $C=\{y,x_1,\dots,x_4,z_1,\dots, z_{21}\}$ and consider the following $8$ rankings: 
    \begin{align*}
        &{\succ_1}=x_1,x_2,x_3,x_4,y,z_1,\dots,z_{21}\qquad {\succ_2}=x_2,x_3,x_4,x_1,y,z_1,\dots,z_{21}\qquad \\
        &{\succ_3}=x_3,x_4,x_1,x_2,y,z_1,\dots,z_{21}\qquad 
        {\succ_4}=x_4,x_1,x_2,x_3,y,z_1,\dots,z_{21}\qquad \\
        &{\succ_5}=y,z_{21},\dots,z_{1},x_4,x_3,x_2,x_1\qquad 
        {\succ_6}=y,z_{21},\dots,z_{1},x_3,x_2,x_1,x_4\qquad \\
        &{\succ_7}=y,z_{21},\dots,z_{1},x_2,x_1,x_4,x_3\qquad 
        {\succ_8}=y,z_{21},\dots,z_{1},x_1,x_4,x_3,x_2\qquad         
    \end{align*}
    
    Less formally, our rankings can be partitioned into $2$ groups: the rankings $\succ_1,\dots, \succ_4$ rank all $x_i$ ahead of $y$ ahead of all $z_j$, order the candidates $z_j$ in increasing order of their indices, and the candidates $x_i$ are arranged cyclic within these rankings. Conversely, the rankings $\succ_5,\dots,\succ_8$ rank $y$ ahead of all $z_j$ ahead of all $x_i$, sort the candidates $z_j$ in decreasing order of their indices, and the candidates $x_i$ are also arranged in a cycle within these rankings.
    Further, since there are $m=26$ candidates, there are ${26\choose 2}=325$ pairwise comparisons in each ranking. Next, we define the profile~$R$ by $R(\succ_i)=\frac{293}{4}\cdot \frac{1}{325}$ for all $i\in \{1,\dots, 4\}$ and $R(\succ_i)=\frac{32}{4}\cdot \frac{1}{325}$ for all $i\in \{5,\dots, 8\}$. Now, sPJR requires that the output ranking $\rhd$ chooses $\frac{293}{4}\cdot 4=293$ pairwise comparisons from the union of $\succ_1,\dots, \succ_4$. However, we have shown with the help of a computer that, up to reorderings of $x_1,\dots,x_4$, \psb uniquely chooses the following ranking: 
    \begin{align*}
        &\text{\psb}(R)=\rhd_{\psb}=y,x_1,x_2,x_3,x_4,z_1,\dots,z_{10},z_{21},z_{11},z_{20},z_{12},z_{19},z_{13},z_{14},z_{18},z_{17},z_{15},z_{16}
    \end{align*}

    It can be verified that $\rhd_{\psb}$ only agrees with $292$ pairwise comparisons in the union of $\succ_1,\dots,\succ_4$, regardless of the order of $x_1,\dots,x_4$. Hence, \psb fails sPJR for the given profile.
    While the full computation of \psb is tedious, we note that the central "mistake" already happens in the first round. In this round, $y$ is chosen as it maximizes the Borda score with $U(b_1,y,C)=293\cdot21+32\cdot25=6953$ (where $b_1({\succ})=R({\succ})\cdot 325$). Therefore, each ranking $\succ_i$ with $i\in \{1,\dots, 4\}$ pays $\frac{25}{6953}\cdot 21\cdot \frac{293}{4}\approx 5.531$ and each ranking $\succ_i$ with $i\in \{5,\dots, 8\}$ pays $\frac{25}{6953}\cdot 25\cdot \frac{32}{4}\approx 0.719$. However, this means that the rankings $\succ_1,\dots,\succ_4$ pay in total $\approx 22.124$, even though they only agree with $21$ pairwise comparisons when placing $y$ first. Put differently, as a group, these rankings pay more than their obtained utility, so they cannot afford enough additional utility during the further execution of \psb.
\end{proof}

We note that the proof of \Cref{ex:sPJRcounter} also showcases a potential flaw in the definition of rank-priceability: this axiom only precludes that individual rankings spend more on a candidate than the utility they obtain, but this guarantee does not extend to groups. To design SWFs that satisfy sPJR, we will therefore present a refined version of rank-priceability called pair-priceability. The idea of this axiom is to view the output ranking $\rhd$ as the set of pairs $A(\rhd)=\{(x,y)\in C^2\colon x\rhd y\}$ and that every pair of this set needs to be bought for a price of $1$ by the input rankings. 

\begin{definition}[Pair-Priceability]
    A ranking $\rhd=x_1,\dots, x_m$ is \emph{pair-priceable} for a profile $R$ if there is a payment function $\pi:\mathcal{R}\times A(\rhd)\rightarrow [0,1]$ such that 
    \begin{enumerate}[label=(\arabic*)]
        \item $\pi({\succ}, (x_i, x_j))\leq u({\succ}, x_i, \{x_i,x_j\})$ for all ${\succ}\in\mathcal{R}$ and $(x_i, x_j)\in A(\rhd)$,
        \item $\sum_{(x_i, x_j)\in A(\rhd)} \pi({\succ}, (x_i, x_j))\leq {m\choose 2}\cdot R({\succ})$ for all ${\succ}\in\mathcal{R}$,
        \item $\sum_{{\succ}\in \mathcal R} \pi({\succ}, (x_i, x_j))\leq 1$ for all $(x_i, x_j)\in A(\rhd)$, and
        \item $\sum_{{\succ}\in\mathcal{R}}\sum_{(x_i,x_j)\in A(\rhd)} \pi({\succ}, (x_i,x_j))> {m\choose 2}-1$. 
    \end{enumerate}
\end{definition}

Pair-priceability differs from rank-priceability only in that the conditions are formulated for pairs of candidates instead of the ranks of candidates. For instance, Condition (1) now states that a ranking $\succ$ is only allowed to pay for a pair of candidates $(x_i,x_j)$ if $x_i\succ x_j$. Hence, pair-priceability requires a more fine-grained payment scheme than rank-priceability. Further, we note that rank-priceability rules out the problem observed in the proof of \Cref{ex:sPJRcounter}: the rankings $\succ_1,\dots,\succ_4$ in this example can pay at most $21$ for the pairwise comparisons including $y$ because they all rank $y$ only ahead of $z_1,\dots, z_{21}$. More generally, we will next show that pair-priceability implies sPJR. 

\begin{restatable}{proposition}{pairpriceable}\label{prop:sPJR}
If a ranking is pair-priceable for a profile, it also satisfies sPJR.  
\end{restatable}
\begin{proof}[Proof Sketch]
    The proof of this proposition works similar to that of \Cref{prop:RPimpliesuPJR}. Assume for contradiction that there is a profile $R$ and a ranking $\rhd$ such that $\rhd$ satisfies pair-priceability for $R$ but fails sPJR. By the latter condition, there is a subprofile $S$ of $R$ and an integer $\ell$ such that $|S|\geq \ell/{m\choose 2}$ but $|A(\rhd)\cap\bigcup_{\succ\in\mathcal{R}\colon S({\succ})>0} A({\succ})|<\ell$. Since $\ell$ and $|A(\rhd)\cap\bigcup_{\succ\in\mathcal{R}\colon S({\succ})>0} A({\succ})|$ are integers, we infer that $|A(\rhd)\cap\bigcup_{\succ\in\mathcal{R}\colon S({\succ})>0} A({\succ})|\leq\ell-1$. Intuitively, this means that, for any payment scheme $\pi$ that satisfies the conditions of pair-priceability, the rankings with positive weights in $S$ can pay at most $\ell-1$ for the pairs in $|A(\rhd)\cap\bigcup_{\succ\in\mathcal{R}\colon S({\succ})>0} A({\succ})|$. However, these rankings control a budget of at least $|S|\cdot {m\choose 2}\geq \ell$, so a budget of at least $1$ remains unspent by these rankings. This contradicts Condition (4) of pair-priceability. 
\end{proof}
    
In light of this proposition, we next aim to design an SWF that satisfies pair-priceability. To this end, we will now discuss the \emph{Flow-adjusting Borda rule}  (\fb), which relies on ideas similar to \psb: in each round, we will add the Borda winner with respect to the current budgets to the output ranking, decrease the budgets of the rankings, and remove the Borda winner from consideration. To make this more formal, we denote again by $b_i({\succ})$ the budget of ranking $\succ$ in the $i$-th round and by $X_i$ the remaining candidates. Just as for \psb, we have that $b_1({\succ})=R({\succ})\cdot {m\choose 2}$ and $X_1=C$, where $R$ is the input profile. For each round $i$, we will then choose a candidate $x^*\in \arg\max_{x\in X_i} U(b_i,x,X_i)$ maximizing the Borda score (with ties broken arbitrarily), place it at the $i$-th position of the output ranking, and set $X_{i+1}=X_i\setminus \{x^*\}$. However, in contrast to \psb, \fb determines the payments of the rankings based on a maximum flow in the following flow network $G_{x^*}=(V,E,c)$.\footnote{
We recall here some basics of the maximum flow problem. A flow network $G=(V,E,c)$ is a capacitated directed graph where $c:E\rightarrow\mathbb{R}_{\geq 0}$ specifies the capacity of every edge and $V$ contains two designated vertices $s$ and $t$ called source and sink. A flow in such a network is a function $f:E\rightarrow \mathbb{R}_{\geq 0}$ such that \emph{(i)} $f(e)\leq c(e)$ for all $e\in E$ (capacity constraint) and \emph{(ii)} $\sum_{(u,v)\in E} f(u,v)=\sum_{(v,w)\in E} f(v,w)$ for all $v\in V\setminus \{s,t\}$ (flow conservation). The value of a flow $f$ is the net outflow of the source $s$, i.e., $\sum_{(s,v)\in E} f(s,v)-\sum_{(v,s)\in E} f(v,s)$. Finally, a maximum flow is a flow with maximum value.
}
\begin{itemize}
\item The set of vertices $V$ consists of the source $s$, a ranking vertex $v_\succ$ for every ranking ${\succ}\in\mathcal{R}$, a candidate vertex $v_y$ for every candidate $y\in X_i\setminus\{x^*\}$, and the sink $t$.
    \item For every ranking $\succ$, there is an edge from the source $s$ to the ranking vertex $v_\succ$ with a capacity equal to the remaining budget of $\succ$, i.e., $c(s,v_{\succ})=b_i({\succ})$.
    \item For every ranking $\succ$ and every candidate $y\in X_i\setminus \{x^*\}$ with $x^*\succ y$, there is an edge from $v_\succ$ to $v_y$ with unbounded capacity.
    \item For every candidate $y$, there is an edge from the vertex $v_y$ to the sink $t$ with capacity $c(v_y, t)\!=\!1$.
\end{itemize}

Now, among all maximum flows in $G_{x^*}$, let $f$ denote one that minimizes the maximum cost per utility ratio of an input ranking, i.e., $\max_{{\succ}\in\mathcal{R}}\frac{f(s,v_{\succ})}{b_i({\succ}) u({\succ},x^*, X_i)}$ (where we assume for simplicity that $\frac{0}{0}=0$). After determining this flow, we set $b_{i+1}({\succ})=b_i({\succ})-f(s,v_{\succ})$ for every ranking $\succ$ and proceed with the next round. Thus, \fb only augments \psb by using a more sophisticated payment scheme. Even more, if possible, the rankings pay in \fb the same as in \psb because $\max_{{\succ}\in\mathcal{R}}\frac{f(s,v_{\succ})}{b_i({\succ}) u({\succ},x^*, X_i)}$ is minimized by if the maximum flow $f$ is proportional to $\frac{b_i(\succ)u(\succ,x^*,X_i)}{U(b_i,x^*,X_i)}$. 

We next illustrate the Flow-adjusting Borda rule with an example.

\begin{example}[The Flow-adjusting Borda rule]\label{ex:flowBorda}

We consider the four rankings ${\succ_1}=x_2,x_3,x_1,x_4,x_5$, ${\succ_2}=x_3,x_2,x_1,x_4,x_5$, ${\succ_3}=x_1,x_4,x_5,x_2,x_3$, and ${\succ_4}=x_1,x_4,x_5,x_3,x_2$, and let $R$ be the profile given by $R({\succ_1})=R({\succ_2})=\frac{7}{20}$ and $R(\succ_3)=R(\succ_4)=\frac{3}{20}$. Assuming that ties are broken in favor of candidates with smaller indices, \fb chooses the ranking $\rhd=x_1,x_2,x_3,x_4,x_5$ for $R$, whereas \psb chooses $\rhd'=x_1,x_2,x_4,x_3,x_5$. The computation for \fb can be verified based on the following profiles.

\begin{center}
    \begin{tikzpicture}
\node at (0,0) (step1) {	
			\votermultiplicity{$\frac{7}{2}$}{\weakorder{{2},{3},{1},{4},{5}}}
			\votermultiplicity{$\frac{7}{2}$}{\weakorder{{3},{2},{1},{4},{5}}}
            \votermultiplicity{$\frac{3}{2}$}{\weakorder{{1},{4},{5},{2},{3}}}
            \votermultiplicity{$\frac{3}{2}$}{\weakorder{{1},{4},{5},{3},{2}}}
		};

\draw[line width = 0.2mm]
  ([yshift=-0.65cm,xshift=0.1cm]step1.north west) -- ([yshift=-0.65cm,xshift=-0.1cm]step1.north east);

\node at (1.5,0) {$\implies$};

\node at (3,0) (step2) {	
			\votermultiplicity{$\frac{5}{2}$}{\weakorder{{2},{3},{4},{5}}}
			\votermultiplicity{$\frac{5}{2}$}{\weakorder{{3},{2},{4},{5}}}
            \votermultiplicity{$\frac{1}{2}$}{\weakorder{{4},{5},{2},{3}}}
            \votermultiplicity{$\frac{1}{2}$}{\weakorder{{4},{5},{3},{2}}}
		};

\draw[line width = 0.2mm]
  ([yshift=-0.65cm,xshift=0.1cm]step2.north west) -- ([yshift=-0.65cm,xshift=-0.1cm]step2.north east);

\node at (4.5,0) {$\implies$};

\node at (5.5,0) (step3) {
    \votermultiplicity{$\frac{55}{26}$}{\weakorder{{3},{4},{5}}}
	\votermultiplicity{$\frac{23}{26}$}{\weakorder{{4},{5},{3}}}
};

\draw[line width = 0.2mm]
  ([yshift=-0.65cm,xshift=0.1cm]step3.north west) -- ([yshift=-0.65cm,xshift=-0.1cm]step3.north east);

\node at (6.5,0) {$\implies$};

\node at (7.5,0) (step4) {
    \votermultiplicity{$\frac{3}{26}$}{\weakorder{{4},{5}}}
	\votermultiplicity{$\frac{13}{26}$}{\weakorder{{4},{5}}}
};

\draw[line width = 0.2mm]
  ([yshift=-0.65cm,xshift=0.1cm]step4.north west) -- ([yshift=-0.65cm,xshift=-0.1cm]step4.north east);

\node at (8.5,0) {$\implies$};

\node at (9.5,0) (step5) {
    \votermultiplicity{$0$}{\weakorder{{5}}}
	\votermultiplicity{$0$}{\weakorder{{5}}}
};

\draw[line width = 0.2mm]
  ([yshift=-0.5cm,xshift=0.1cm]step5.north west) -- ([yshift=-0.5cm,xshift=-0.1cm]step5.north east);
    \end{tikzpicture}\end{center}

    \begin{wrapstuff}[5,r,type=figure,width=6.5cm]
    \centering
    \scalebox{0.8}{
    \begin{tikzpicture}
        \node[draw, shape=circle, minimum width=0.8cm] (s) at (0,0) {$s$};
        
        \node[draw, shape=circle, minimum width=0.8cm] (vp1) at (2.2,2) {\small${\succ_1}$};
        \node[draw, shape=circle, minimum width=0.8cm] (vp2) at (2.2,0.66) {\small${\succ_2}$};
        \node[draw, shape=circle, minimum width=0.8cm] (vp3) at (2.2,-0.66) {\small${\succ_3}$};
        \node[draw, shape=circle, minimum width=0.8cm] (vp4) at (2.2,-2) {\small${\succ_4}$};

        \node[draw, shape=circle, minimum width=0.8cm] (vx1) at (4.4,2) {\small${x_2}$};
        \node[draw, shape=circle, minimum width=0.8cm] (vx2) at (4.4,0.66) {\small${x_3}$};
        \node[draw, shape=circle, minimum width=0.8cm] (vx3) at (4.4,-0.66) {\small${x_4}$};
        \node[draw, shape=circle, minimum width=0.8cm] (vx4) at (4.4,-2) {\small${x_5}$};

        \node[draw, shape=circle, minimum width=0.8cm] (t) at (6.6,0) {$t$};

        \draw[->] (s) --node[above] {3.5} (vp1);
        \draw[->] (s) --node[above] {3.5} (vp2);
        \draw[->] (s) --node[above] {1.5} (vp3);
        \draw[->] (s) --node[above] {1.5} (vp4);

        \draw[->] (vx1) --node[above] {1} (t);
        \draw[->] (vx2) --node[above] {1} (t);
        \draw[->] (vx3) --node[above] {1} (t);
        \draw[->] (vx4) --node[above] {1} (t);

        \draw[->] (vp1) -- (vx3);
        \draw[->] (vp1) -- (vx4);
        \draw[->] (vp2) -- (vx3);
        \draw[->] (vp2) -- (vx4);
        \draw[->] (vp3) -- (vx1);
        \draw[->] (vp3) -- (vx2);
        \draw[->] (vp3) -- (vx3);
        \draw[->] (vp3) -- (vx4);
        \draw[->] (vp4) -- (vx1);
        \draw[->] (vp4) -- (vx2);
        \draw[->] (vp4) -- (vx3);
        \draw[->] (vp4) -- (vx4);
    \end{tikzpicture}
    }
    \vspace{-7pt}
    \caption{The flow network $G_{x_1}$ (restricted to rankings with non-zero budget) used for the first step of \fb for the profile in \Cref{ex:flowBorda}. Edges are labeled by their capacity and edges without label have an unbounded capacity.}
    \label{fig:flowBorda}
\end{wrapstuff}
We show in this figure again the rankings restricted to the available candidates and weighted by their budget in each round. Moreover, to save space, we collapsed in the third step the rankings $\succ_1$ and $\succ_2$ as well as $\succ_3$ and $\succ_4$ into single rankings, which does not affect the computation of \fb. In the first round, it holds for all $x\in \{x_1,x_2,x_3\}$ that $U(b_1,x, X_1)=26$. Assuming that the tie-breaking favors $x_1$ among these three candidates, $x_1$ is chosen first. We thus need to identify a maximum flow in the network $G_{x_1}$, which is shown on the right. In this network, the rankings $\succ_1$ and $\succ_2$ together can pay at most $2$ for $x_4$ and $x_5$ and the rankings $\succ_3$ and $\succ_4$ can pay $1$ each for $x_2$ and $x_3$. Hence, the maximum flow has value $4$ and it can be shown that the cost per utility ratio is minimized if each ranking pays $1$. Consequently, the budgets in the second step are $b_2({\succ_1})=b_2({\succ_2})=\frac{5}{2}$ and $b_2(\succ_3)=b_2(\succ_4)=\frac{1}{2}$. By contrast, in \psb, $\succ_1$ and $\succ_2$ each pay $\frac{4}{26}\cdot 2\cdot\frac{7}{2}=\frac{28}{26}$, which is the main reason for the different outcome. Starting from the second round on, \fb behaves exactly like \psb, because the payments made by \psb can be transformed into a maximum flow of the corresponding network. We hence leave the verification of the remaining steps to the reader. 
\end{example}

We will next show that \fb is pair-priceable and thus satisfies sPJR.
Moreover, we note that the following statement holds regardless of the exact maximum flow chosen in the flow network $G_{x_i}$, i.e., it is not necessary to minimize the cost per utility ratio.  

\begin{restatable}{theorem}{flowBordaPR}\label{thm:flowBordaPR}
The Flow-adjusting Borda rule is pair-priceable. 
\end{restatable}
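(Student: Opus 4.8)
The plan is to read the payment function straight off the maximum flows computed by \fb. Let $\rhd=x_1,\dots,x_m$ be the ranking returned by \fb, let $b_i$ be the budgets during its execution, and for each round $i\in\{1,\dots,m-1\}$ let $f^{(i)}$ denote the maximum flow used in $G_{x_i}$. For a pair $(x_i,x_j)\in A(\rhd)$ (so $i<j$) define
$\pi(\succ,(x_i,x_j)) := f^{(i)}(v_\succ,v_{x_j})$, the flow $f^{(i)}$ pushes along the edge from the ranking vertex $v_\succ$ to the candidate vertex $v_{x_j}$, with the convention $f^{(i)}(v_\succ,v_{x_j})=0$ when that edge is absent. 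Since round $i$ handles exactly the pairs whose first coordinate is $x_i$, every pair of $A(\rhd)$ is treated in a unique round, so $\pi$ is well-defined. Conditions (1)--(3) will then follow purely from the defining properties of a flow: the edge $(v_\succ,v_{x_j})$ is present only if $x_i\succ x_j$, and in that case $f^{(i)}(v_\succ,v_{x_j})$ is at most the inflow of $v_{x_j}$, which equals its outflow $f^{(i)}(v_{x_j},t)\le 1=u(\succ,x_i,\{x_i,x_j\})$; this gives (1) and $\pi\in[0,1]$. Flow conservation at $v_\succ$ gives $\sum_{j>i}f^{(i)}(v_\succ,v_{x_j})=f^{(i)}(s,v_\succ)=b_i(\succ)-b_{i+1}(\succ)$, which telescopes over all rounds to $b_1(\succ)-b_m(\succ)\le b_1(\succ)=\binom m2 R(\succ)$, i.e.\ (2) (using $b_m(\succ)\ge0$ since $f^{(i)}(s,v_\succ)\le c(s,v_\succ)=b_i(\succ)$). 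Finally $\sum_{\succ\in\mathcal R}f^{(i)}(v_\succ,v_{x_j})=f^{(i)}(v_{x_j},t)\le1$ is exactly (3).

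The only substantial work is Condition (4). Writing $v(f^{(i)})$ for the value of the round-$i$ flow, we have $\sum_{\succ\in\mathcal R}\sum_{(x_i,x_j)\in A(\rhd)}\pi(\succ,(x_i,x_j))=\sum_{i=1}^{m-1}v(f^{(i)})$, and since $\sum_{i=1}^{m-1}(m-i)=\binom m2$ it suffices to show that the total leftover budget $\binom m2-\sum_{i=1}^{m-1}v(f^{(i)})$ is strictly below $1$. Because \fb picks a \emph{maximum} flow, $v(f^{(i)})$ equals the minimum $s$--$t$ cut of $G_{x_i}$; identifying a cut with the set $S_R\subseteq\mathcal R$ of severed source edges (after which one should sever exactly the sink edges $v_y\to t$ for $y\in N(S_R^c):=\bigcup_{\succ\notin S_R}\{y\in X_i\setminus\{x_i\}\colon x_i\succ y\}$), this minimum cut is $\min_{S_R}\big(b_i(S_R)+|N(S_R^c)|\big)$ with $b_i(S_R)=\sum_{\succ\in S_R}b_i(\succ)$. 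The two facts that drive the estimate are: (i) $x_i$ is the Borda winner of $X_i$ with respect to $b_i$, hence $U(b_i,x_i,X_i)\ge\tfrac12(m-i)B_i$ where $B_i=\sum_{\succ\in\mathcal R}b_i(\succ)$; and (ii) $B_i\ge\binom{m-i+1}{2}$, because each earlier round reduces the total budget by at most that round's cost.

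The key claim is that for every round $i$ with $m-i\ge3$ every cut of $G_{x_i}$ has value at least $m-i$, so $v(f^{(i)})=m-i$ and these rounds spend fully. Fix $S_R$, put $t=|N(S_R^c)|$ and $T^c=(X_i\setminus\{x_i\})\setminus N(S_R^c)$ of size $|T^c|=(m-i)-t$, and note that every ranking outside $S_R$ ranks all of $T^c$ above $x_i$. If $t=m-i$ the cut costs $b_i(S_R)+(m-i)\ge m-i$. If $t=0$, every ranking outside $S_R$ bottom-ranks $x_i$ in $X_i$, so $U(b_i,x_i,X_i)\le(m-i)\,b_i(S_R)$; together with (i) and (ii) this forces $b_i(S_R)\ge B_i/2\ge m-i$ for $m-i\ge3$. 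In the remaining case $1\le t\le m-i-1$ one has $u(\succ,x_i,X_i)\le t$ for $\succ\notin S_R$, and since in such a ranking $T^c$ lies above $x_i$ which in turn lies above the candidates it beats, $\sum_{y\in T^c}u(\succ,y,X_i)\ge|T^c|\,u(\succ,x_i,X_i)+\binom{|T^c|+1}{2}$. Averaging $U(b_i,x_i,X_i)\ge U(b_i,y,X_i)$ over $y\in T^c$ (and dropping the nonnegative $S_R$-contributions), while bounding $U(b_i,x_i,X_i)\le(m-i)b_i(S_R)+\sum_{\succ\notin S_R}b_i(\succ)u(\succ,x_i,X_i)$ in the other direction, cancelling the common term and inserting (ii) reduces the desired bound $b_i(S_R)+t\ge m-i$ to the purely numerical inequality $(|T^c|+1)(m-i)(m-i+1)\ge2|T^c|\big(2(m-i)+|T^c|+1\big)$. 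The left side is linear and the right side convex in $|T^c|$, so it is enough to verify the two endpoints $|T^c|=1$ and $|T^c|=m-i-1$; both reduce to inequalities of the form $(k-2)(k+1)\ge0$ resp.\ $(k-2)(k-3)\ge0$ with $k=m-i$, valid for $k\ge3$ (and tight at $k=3$). Hence rounds $1,\dots,m-3$ spend fully and $B_{m-2}=3$; for the last two rounds the same cut analysis with $m-i=2$ gives $v(f^{(m-2)})\ge\tfrac32$, and in the final round $x_{m-1}$ being the Borda winner of $\{x_{m-1},x_m\}$ forces $v(f^{(m-1)})\ge\min\big(1,\tfrac12(3-v(f^{(m-2)}))\big)=\tfrac12(3-v(f^{(m-2)}))$, so the leftover is at most $\tfrac12(3-v(f^{(m-2)}))\le\tfrac34<1$, proving (4).

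The step I expect to be the main obstacle is this min-cut lower bound for the bulk rounds: translating "$x_i$ maximizes the weighted Borda score'' into "no set of rankings with small total budget forms a cheap cut''. The subtlety is that a cut may sever very little budget yet still block every $s$--$t$ path, namely when all un-severed rankings rank $x_i$ extremely low; excluding this needs the averaging inequality over the whole unreached set $T^c$ rather than the single-candidate comparison that sufficed for \psb, and since the resulting polynomial inequality is tight at $m-i=3$ there is essentially no slack, which is also why the last two rounds must be handled by a separate, hands-on computation.
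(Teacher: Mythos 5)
Your proposal is correct and follows essentially the same route as the paper's proof: payments are read off the round-$i$ maximum flows, Conditions (1)--(3) are immediate from flow feasibility and conservation, and Condition (4) is established by a max-flow/min-cut argument showing full spending in rounds $1,\dots,m-3$ (combining the Borda-winner property with the budget lower bound $B_i\geq\binom{m-i+1}{2}$ and an averaging inequality over the un-cut candidates), followed by a separate treatment of the last two rounds yielding a leftover of at most $\frac{3}{4}$. The only blemish is cosmetic: your endpoint inequality $(k-2)(k+1)\geq 0$ is tight at $k=2$, not $k=3$, which does not affect the argument.
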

\begin{proof}[Proof Sketch]
    Fix a profile $R$ and let $\rhd=x_1,\dots,x_m$ denote the ranking chosen by \fb. As usual, we denote by $b_i({\succ})$ the remaining budget of the ranking $\succ$ in the $i$-th round and by $X_i$ the remaining candidates. Further, we let $G_{x_i}$ denote the flow network that \fb uses to determine the budget updates in the $i$-th round and $f_i$ is the chosen maximum flow in this network. This means for all $i\in \{1,\dots, m-1\}$ and rankings ${\succ}\in\mathcal{R}$ that $b_i({\succ})-b_{i+1}({\succ})=f_i(s,v_{\succ})$. Now, we define the payment scheme $\pi$ by $\pi({\succ}, (x_i,x_j))=f_i(v_{\succ}, v_{x_j})$ if $(v_{\succ}, v_{x_j})$ is an edge in $G_{x_i}$ and $\pi({\succ},(x_i,x_j))=0$ otherwise. We claim that $\pi$ satisfies all conditions of pair-priceability. 

    First, our payment scheme satisfies Condition (1) of pair-priceability (i.e., $\pi({\succ},(x_i,x_j))\leq u({\succ},x_i, \{x_i,x_j\})$ for all ${\succ}\in\mathcal{R}$ and $(x_i,x_j)\in A(\rhd)$) by definition of the flow network $G_{x_i}$. Specifically, if $x_j\succ x_i$, then the edge $(v_{\succ},x_{v_j})$ is not in $G_{x_i}$ and $\pi({\succ},(x_i,x_j))=0$. On the other hand, if $x_i\succ x_j$, then $\pi({\succ},(x_i,x_j))=f_i(v_{\succ},v_{x_j})\leq f_i(v_{x_j},t)\leq 1=u({\succ},x_i,\{x_i,x_j\})$ since $f$ is a flow. 

    For Condition (2) (i.e., $\sum_{(x_i, x_j)\in A(\rhd)} \pi({\succ}, (x_i, x_j))\leq {m\choose 2}\cdot R({\succ})$ for all ${\succ}\in\mathcal{R}$), we observe that $f_i(s,v_{\succ})\leq b_i({\succ})$ for all $i\in \{1,\dots,m-1\}$ and ${\succ}\in\mathcal{R}$ because the capacity of the edge $(s,v_{\succ})$ in $G_{x_i}$ is $b_i({\succ})$. This means that the final budget $b_m({\succ})$ of every ranking is non-negative.
    Further, it holds that $b_i({\succ})-b_{i+1}({\succ})=f_i(s,v_{\succ})=\sum_{x_j\in X_i\colon x_i\succ x_j} f_i(v_{\succ},v_{x_j})=\sum_{x_j\in X_i\setminus \{x_i\}} \pi({\succ}, (x_i,x_j))$ by our definitions and flow conservation. Since $b_1({\succ})=R({\succ})\cdot {m\choose 2}$, it hence follows that $\sum_{(x_i, x_j)\in A(\rhd)} \pi({\succ}, (x_i, x_j))=\sum_{i=1}^{m-1} \sum_{x_j\in X_i\setminus \{x_i\}} \pi({\succ}, (x_i,x_j))=\sum_{i=1}^{m-1} b_i({\succ})-b_{i+1}({\succ})=b_1({\succ})-b_m({\succ})\leq {m\choose 2} \cdot R({\succ})$. 

    For Condition (3) (i.e., $\sum_{{\succ}\in \mathcal R} \pi({\succ}, (x_i, x_j))\leq 1$ for all $(x_i, x_j)\in A(\rhd)$), it suffices to note that $\sum_{{\succ}\in \mathcal R} \pi({\succ}, (x_i, x_j))\mspace{-1.5mu}=\mspace{-1.5mu}\sum_{{\succ}\in \mathcal R\colon x_i\succ x_j} f_i(v_{\succ},v_{x_j})\mspace{-1.5mu}=\mspace{-1.5mu}f_i(v_{x_j},t)\leq 1$, where the first equality follows by definition of $\pi$, the second one by flow conservation, and the last one because $c(v_{x_j},t)=1$. 

    Finally, for Condition (4) (i.e., $\sum_{{\succ}\in\mathcal{R}}\sum_{(x_i,x_j)\in A(\rhd)} \pi({\succ}, (x_i,x_j))> {m\choose 2}-1$), we show for all $i\in \{1,\dots, m-3\}$ that the maximum flow in $G_{x_i}$ has value $m-i$. To prove this claim, we inductively assume that it holds for the first $i-1$ rounds, which implies that $\sum_{{\succ}\in\mathcal{R}} b_i({\succ})=\frac{(m-i)(m-i+1)}{2}$. Now, if the flow network $G_{x_i}$ does not admit a maximum flow of value $m-i$, we derive from the MaxFlow-MinCut equivalence that there is a non-empty set of candidates $Z\subseteq X_i$ and a corresponding set of rankings $\bar{\mathcal{R}}=\{{\succ}\in\mathcal{R}\colon\exists y\in Z\colon x_i\succ y\}$ such that $\sum_{{\succ}\in\bar{\mathcal{R}}} b_i({\succ})<|Z|$. Conversely, this means that $\sum_{{\succ}\in\mathcal{R}\setminus\bar{\mathcal{R}}} b_i({\succ})>\frac{(m-i)(m-i+1)}{2}-|Z|$, i.e., the rankings that put all candidates in $Z$ ahead of $x_i$ control most of the remaining budget. Based on this insight, we show that a candidate in $Z$ must have a higher Borda score than $x_i$, which contradicts the definition of \fb. Hence, we infer inductively that the total remaining budget in round $i=m-2$ is $\frac{(m-i)(m-i+1)}{2}=3$. Finally, we show that at least $\frac{9}{4}$ of this budget will be spent in the last two rounds, i.e., the total left-over budget is at most $\frac{3}{4}$. This proves this condition because the sum of all payments is equal to the the difference between the total initial budget and the total remaining budget after the execution of \fb. In particular, this means that $\sum_{{\succ}\in\mathcal{R}}\sum_{(x_i,x_j)\in A(\rhd)} \pi({\succ}, (x_i,x_j))=\sum_{{\succ}\in\mathcal{R}} b_1({\succ})-b_m({\succ})\geq {m\choose 2}-\frac{3}{4}$.
\end{proof}

Lastly, analogously to \Cref{thm:propBordaAR}, we examine the average utility that \fb guarantees to an arbitrary subprofile $S$ as a function of $|S|$. Specifically, we will prove that \fb gives the same lower bound on the average utility of every subprofile as \psb. For this result, it is crucial that \fb chooses the maximum flow that minimizes the cost per utility ratio in every step. Moreover, we note that the following statement is largely unrelated to sPJR: while this axiom guarantees that the output ranking agrees with any subprofile $S$ in a number of pairs of candidates that is at least proportional to $|S|$, the utility of each ranking in $S$---and therefore also the average utility---may still be low. 

\begin{restatable}{theorem}{flowBordaAR}\label{thm:flowBordaAR}
    Let $R$ be a profile on $m$ candidates and $\rhd=\text{\fb}(R)$ the ranking chosen by the Flow-adjusting Borda rule. It holds for every subprofile $S$ of $R$ that 
    \[\frac{1}{|S|}\sum_{{\succ}\in\mathcal{R}} S({\succ})u({\succ},\rhd)\geq {m\choose2}\cdot\frac{|S|}{4}-\frac{3}{16}.\]
\end{restatable}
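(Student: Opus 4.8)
I would run the proof of \Cref{thm:propBordaAR} essentially line for line, the only new ingredient being the inequality used to bound the per-round payments. Fix $R$ and a subprofile $S$, let $\rhd=x_1,\dots,x_m=\text{\fb}(R)$, write $b_i$ for the budgets occurring in the computation of \fb, $f_i$ for the maximum flow chosen in round $i$, $X_i=\{x_i,\dots,x_m\}$, and set $b_i^S({\succ})=\tfrac{S({\succ})}{R({\succ})}b_i({\succ})$, $c_i^S({\succ})=\tfrac{S({\succ})}{R({\succ})}f_i(s,v_{\succ})$ and $C_i^S=\sum_{{\succ}\in\mathcal R}c_i^S({\succ})$, exactly as there (again with $\tfrac00=0$). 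Three ingredients of the \psb proof transfer with no change: (i) $U(b_i,x_i,X_i)\ge\tfrac{(m-i)^2(m-i+1)}{4}$, since \fb lowers the total budget in round $i$ by at most the total sink capacity $m-i$, so $\sum_{{\succ}\in\mathcal R}b_i({\succ})\ge\tfrac{(m-i)(m-i+1)}{2}$; (ii) once the inequality $\sum_{i=1}^{m-1}C_i^S\cdot\tfrac{(m-i)(m-i+1)}{4}\le\sum_{{\succ}\in\mathcal R}S({\succ}){m\choose2}u({\succ},\rhd)$ is available, the convexity rearrangement giving $\tfrac{C^S(C^S+1)}{4}\le\sum_{{\succ}\in\mathcal R}S({\succ}){m\choose2}u({\succ},\rhd)$ is purely algebraic; (iii) $C^S\ge\sum_{{\succ}\in\mathcal R}S({\succ}){m\choose2}-\tfrac34$, provided the total leftover budget of \fb is at most $\tfrac34$, which I would obtain from an analysis of the last two rounds parallel to the one in the proof of \Cref{thm:propBordaRP} (which only uses that the selected candidate maximizes the Borda score, and, for the earlier rounds, that the total budget evolves exactly as for \psb). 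The final algebra then yields the bound in the theorem.

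\textbf{The new ingredient.} What is missing is a per-round substitute for the \psb estimate $b_i({\succ})-b_{i+1}({\succ})\le\tfrac{(m-i)b_i({\succ})u({\succ},x_i,X_i)}{U(b_i,x_i,X_i)}$, which \emph{fails} for \fb (see \Cref{ex:flowBorda}, where a ranking pays more than its \psb share because the proportional payments are not routable). I would replace it by a bound on the cost-per-utility ratio of the chosen flow, namely
\[
\rho_i:=\max_{{\succ}\in\mathcal R}\frac{f_i(s,v_{\succ})}{b_i({\succ})\,u({\succ},x_i,X_i)}\ \le\ \frac{4}{(m-i)(m-i+1)}\qquad\text{for every round }i\in\{1,\dots,m-1\},
\]
noting that $\tfrac{4}{(m-i)(m-i+1)}\le\tfrac{m-i}{U(b_i,x_i,X_i)}$ by (i), so this is a \emph{weaker} requirement than the one \psb satisfies, which is why the final constant is unchanged. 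Granting this, $f_i(s,v_{\succ})\le\rho_i b_i({\succ})u({\succ},x_i,X_i)$ gives $c_i^S({\succ})\le S({\succ}){m\choose2}\rho_i\,u({\succ},x_i,X_i)$ (using $b_i({\succ})\le b_1({\succ})=R({\succ}){m\choose2}$); summing over ${\succ}$ yields $\tfrac{C_i^S}{\rho_i}\le{m\choose2}\sum_{{\succ}\in\mathcal R}S({\succ})u({\succ},x_i,X_i)$, and since $\tfrac1{\rho_i}\ge\tfrac{(m-i)(m-i+1)}{4}$ and $\sum_{i=1}^{m-1}u({\succ},x_i,X_i)=u({\succ},\rhd)$, summing over $i$ gives exactly the displayed inequality from (ii). For the two rounds with $m-i\le2$ only one or two candidates remain, so the flow network is trivial and the bound on $\rho_i$ follows directly from the Borda-optimality of the selected candidate together with $\sum_{{\succ}\in\mathcal R}b_i({\succ})\ge\tfrac{(m-i)(m-i+1)}{2}$.

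\textbf{The main obstacle.} Proving $\rho_i\le\tfrac{4}{(m-i)(m-i+1)}$ for $m-i\ge3$ is, I expect, the hard part. Since \fb selects the maximum flow \emph{minimizing} the cost-per-utility ratio, it suffices to exhibit \emph{some} maximum flow of $G_{x_i}$ whose ratio is at most $\tfrac{4}{(m-i)(m-i+1)}$; equivalently, since $\sum_{{\succ}\in\mathcal R}\tfrac{4\,b_i({\succ})u({\succ},x_i,X_i)}{(m-i)(m-i+1)}=\tfrac{4\,U(b_i,x_i,X_i)}{(m-i)(m-i+1)}\ge m-i$ and (for $m-i\ge3$) $\tfrac{4\,b_i({\succ})u({\succ},x_i,X_i)}{(m-i)(m-i+1)}\le b_i({\succ})$, it suffices to show that capping every source edge $(s,v_{\succ})$ at $\tfrac{4\,b_i({\succ})u({\succ},x_i,X_i)}{(m-i)(m-i+1)}$ does not lower the maximum-flow value of $G_{x_i}$. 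By max-flow/min-cut this reduces to a single inequality for every group $P\subseteq\mathcal R$, involving $|Q(P)|$ (where $Q(P)=\bigcup_{{\succ}\in P}\{y\in X_i\setminus\{x_i\}\colon x_i\succ y\}$), the quantity $\sum_{{\succ}\in P}b_i({\succ})u({\succ},x_i,X_i)$, and $\sum_{{\succ}\in P}b_i({\succ})$. The leverage is that $x_i$ is the Borda winner: summing $U(b_i,x_i,X_i)\ge U(b_i,y,X_i)$ over $y\in Q(P)$ and rewriting each $U(b_i,y,X_i)$ via the restriction of the rankings in $P$ to $Q(P)\cup\{x_i\}$ yields a lower bound of the form $|Q(P)|\,U(b_i,x_i,X_i)\ge\bigl({|Q(P)|\choose2}+|Q(P)|\bigr)\sum_{{\succ}\in P}b_i({\succ})-\sum_{{\succ}\in P}b_i({\succ})u({\succ},x_i,X_i)$, which is precisely the shape the min-cut bound needs. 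The delicate point — where I expect the real work — is to combine this Borda estimate with the evolution of the total budget tightly enough to cover \emph{every} group $P$ at once: the naive bounds $u({\succ},x_i,X_i)\le|Q(P)|$ and $\sum_{{\succ}\in P}b_i({\succ})\le\sum_{{\succ}\in\mathcal R}b_i({\succ})$ each lose a constant factor and do not suffice on their own.
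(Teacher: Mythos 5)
Your architecture matches the paper's proof of \Cref{thm:flowBordaAR} almost exactly: define the per-round cost-per-utility ratio $\rho_i$ of the chosen flow, prove $\rho_i\le\frac{4}{(m-i)(m-i+1)}$, feed these bounds into the machinery of \Cref{thm:propBordaAR}, and get $C^S\ge |S|{m\choose 2}-\frac{3}{4}$ from the leftover-budget analysis. The paper establishes the key bound for $i\le m-3$ by precisely your route: it suffices to exhibit \emph{one} maximum flow with the desired ratio, which (after rescaling a modified network whose sink capacities are $\frac{(m-i)(m-i+1)}{4}$) reduces via max-flow/min-cut to showing that no set $Z$ of candidates together with the rankings $\bar{\mathcal R}$ that can pay for some member of $Z$ certifies a small cut; the contradiction comes from summing the Borda-optimality inequalities $U(b_i,x_i,X_i)\ge U(b_i,y,X_i)$ over $y\in Z$, with a two-case analysis on $\varphi=\sum_{{\succ}\in\bar{\mathcal R}}b_i({\succ})$ against $\frac{(m-i)(m-i+1)}{4}$. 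So your ``main obstacle'' is the right obstacle and your setup for it is the right setup --- but this estimate is the bulk of the proof (roughly two pages in the paper), and you explicitly do not carry it out. As a proof, the proposal is therefore missing its central step, even though that step is correctly located and correctly framed.

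There is also a concrete error. The claim that $\rho_{m-2}\le\frac{4}{2\cdot 3}=\frac{2}{3}$ ``follows directly'' for the round with three remaining candidates is false. Take remaining candidates $\{a,b,c\}$ with budgets $0.7$ on $a,b,c$, \ $0.5$ on $a,c,b$, \ $1.0$ on $c,a,b$, \ $0.8$ on $b,c,a$ (total $3$, as required at round $m-2$). Then $a$ is the unique Borda winner ($3.4$ versus $2.3$ and $3.3$), every maximum flow of $G_a$ has value $2$ and must route one full unit into $v_c$, which only the first two rankings (total capacity $1.2$) can supply; hence the third ranking pays at least $0.8$ against a denominator of $1.0\cdot 1$, so $\rho_{m-2}\ge 0.8>\frac{2}{3}$. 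This state satisfies everything your justification invokes (Borda-optimality and the total-budget bound), and the scaling trick breaks here because $\frac{4u}{(m-i)(m-i+1)}>1$ when $u=2$ and $m-i+1=3$. The paper accordingly proves only $\rho_{m-2},\rho_{m-1}\le 1$ and pays for the weaker coefficients with a $-\frac{1}{2}$ correction, absorbed by strengthening the convexity inequality to $\sum_{i=1}^{k}i^2(i+1)+\ell(k+1)(k+2)\ge C(C+1)+2$. Your assertion that the algebra of \Cref{thm:propBordaAR} transfers ``with no change'' is therefore also not quite right, though this particular repair is routine.
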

\begin{proof}[Proof Sketch]
Fix a profile $R$, let $\rhd=x_1,\dots,x_m$ be the ranking returned by \fb, and, for all $i\in\{1,\dots,m-1\}$, let $b_i({\succ})$, $X_i$, $G_{x_i}$, and $f_i$ be defined as in \Cref{thm:flowBordaPR}. Our goal is to show that $\frac{f_i(s,v_{\succ})}{b_i({\succ})\cdot u({\succ},x_i,X_i)}\leq \frac{4}{(m-i)(m-i+1)}$ for all ${\succ}\in\mathcal{R}$ and $i\in \{1,\dots, m-3\}$ (where we assume that $\frac{0}{0}=0$). Since $f_i(s,v_{\succ})=b_i({\succ})-b_{i+1}({\succ})$, this implies that $(b_i({\succ})-b_{i+1}({\succ}))\cdot \frac{(m-i)(m-i+1)}{4}\leq b_i({\succ}) \cdot u({\succ},x_i,X_i)$. From this point on, we can the use a similar argument as for \Cref{thm:propBordaAR}. 
    Now, to prove our auxiliary inequality, we fix a round $i\in \{1,\dots, m-3\}$ and consider the modified flow network $G_{x_i}'=(V,E,c')$, which is obtained from the network $G_{x_i}$ used by \fb by setting $c'(s,v_{\succ})=b_i({\succ})\cdot u({\succ},x_i,X_i)$ for all ranking vertices $v_\succ$ and $c'(v_x,t)=\frac{(m-i)(m-i+1)}{4}$ for all candidate vertices $v_x$ (the capacities between ranking vertices and candidate vertices are still unbounded). We prove that this network admits a maximum flow $f^*$ of value $\frac{(m-i)^2(m-i+1)}{4}$ because, otherwise, $x_i$ would not maximize the Borda score $U(b_i,x_i,X_i)$. Based on $f^*$, we next show that $f'(e)=\frac{4f^*(e)}{(m-i)(m-i+1)}$ is a valid maximum flow in the original network $G_{x_i}$. Moreover, this flow satisfies that $\frac{f'(s,v_{\succ})}{b_i({\succ})u({\succ},x_i,X_i)}\leq \frac{4}{(m-i)(m-i+1)}$ for all ${\succ}\in\mathcal{R}$. Since the maximum flow $f_i$ chosen by \fb minimizes $\max_{\succ\in\mathcal{R}} \frac{f_i(s,v_{\succ})}{b_i({\succ})u({\succ},x_i,X_i)}$, this implies that $\frac{f_i(s,v_{\succ})}{b_i({\succ})\cdot u({\succ},x_i,X_i)}\leq \frac{4}{(m-i)(m-i+1)}$ for all ${\succ}\in\mathcal{R}$.
\end{proof}

\section{Conclusion}

In this paper, we design proportional social welfare functions (SWFs) by transferring insights from approval-based committee voting and participatory budgeting to rank aggregation. In more detail, our central proportionality axiom is called uPJR and requires that every input ranking with weight $\alpha$ agrees with at least $\lfloor\alpha{m\choose 2}\rfloor$ pairwise comparisons of the output ranking. We first show that the Squared Kemeny rule, which was suggested by \citet{LPW24a} to compute proportional rankings, fails even a weakening of this axiom. We hence design new SWFs and, to this end, prove that uPJR is implied by a more structured fairness notion called rank-priceability. Based on this insight, we introduce the Proportional Sequential Borda rule (\psb), a simple rule that satisfies rank-priceability and thus also uPJR. We also prove that \psb guarantees to every subprofile $S$ an average utility that is linear in the size of $S$, thereby giving another strong fairness property that extends beyond individual rankings. Moreover, we introduce a variant of \psb called the Flow-adjusting Borda rule (\fb), which satisfies an even stronger proportionality axiom called sPJR. This axiom extends uPJR to arbitrary groups of rankings: any group of rankings needs to be represented by a number of pairwise comparisons that is at least proportional to it total weight. 

Our work offers numerous possibilities for future work and we next discuss three particularly interesting directions. Firstly, given our success in designing SWFs that satisfy variants of PJR in the context of rank aggregation, it seems interesting to analyze stronger fairness notions. One could, for instance, adopt notions such as EJR or core-stability from approval-based committee voting to rank aggregation and aim to find mechanisms satisfying these properties. Secondly, we note that, while most fairness notions in participatory budgeting and committee voting focus on groups of voters with similar preferences, none of our results relies on this idea. Partly, this is because there are multiple ways to define similar input rankings (e.g., we may consider two rankings similar if they have a small swap distance or if they agree on a large prefix) and because it is not clear how to exploit this precondition. However, we would find it interesting to strengthen both our axiomatic and quantitative results by focusing on cohesive groups of rankings. Lastly, maybe the biggest restriction of this paper is to define the utility in terms of the pairwise agreement of rankings. While this approach is frequently encountered in the literature, it, e.g., neglects that the first position of the output ranking has often a higher value than other positions. Thus, it seems appealing to extend our results to more general utility functions.

\section*{Acknowledgments}

The author thanks Haris Aziz and Dominik Peters for helpful discussions and the anonymous EC reviewers for their valuable feedback. The research reported in this paper was mostly conducted while Patrick Lederer was a postdoctoral fellow at UNSW Sydney, where he was supported by the NSF-CSIRO grant on ``Fair Sequential Collective Decision-Making'' (RG230833). At the University of Amsterdam, Patrick Lederer is supported by the European Research Council (ERC) via the ERC Synergy Grant on "Advancing Digital Democracy Innovation" (\url{doi.org/10.3030/101166894}). 

Funded by the European Union. Views and opinions expressed are however those of the authors only and do not necessarily reflect those of the European Union or the European Research Council Executive Agency. Neither the European Union nor the granting authority can be held responsible for them.

\clearpage
\appendix

\section{The Ranked Method of Equal Shares}\label{app:RMES}

In this appendix, we discuss another variant of the Proportional Sequential Borda rule, with the aim to clarify the relation between our rules and the Method of Equal Shares \citep{PeSk20a,PPS21a}. To this end, we will introduce the Ranked Method of Equal Shares (\rmes). Moreover, we will show that this rule satisfies uPJR and additionally guarantees to pick the first $\lfloor\frac{m}{4}\rfloor$ candidates in a highly utilitarian way. 

To introduce \rmes, we will first outline the Method of Equal Shares as used in participatory budgeting. In this setting there is a finite set of voters $N$, a set of candidates $C$ with costs $c$, and each voter $j\in N$ reports an approval ballot $A_j$ over the candidates. Moreover, there is a total budget $B$ and the goal is to choose a representative subset of the candidates whose total cost does not exceed $B$. Now, just as \psb, the Method of Equal Shares uniformly distributes the budget $B$ to the voters who use it to buy candidates. Thus, the initial budget of every voter is $b_1(j)=B/|N|$ and we denote the remaining budget of voter $j$ at round $i$ with $b_i(j)$. Then, the Method of Equal Shares chooses in the $i$-th round the candidate $x^*$ that minimizes the cost per utility ratio, i.e., the candidate that minimizes the value $\rho$ with $\sum_{j\in N} \min(\rho\cdot u(x^*,A_j), b_i(j))=c(x^*)$, where $u(x^*,A_j)=1$ if $x\in A_j$ and $u(x^*,A_j)=0$ otherwise. After buying candidate $x^*$ and adding it to the outcome, the budget of each voter is decreased by his contribution to the cost of $x^*$, i.e., $b_{i+1}(j)=b_i(j)-\min(\rho\cdot u(x^*,A_j), b_i(j))$. This process is repeated until no candidate can be bought anymore. 

We first note that the Method of Equal Shares can be easily extended to account for vote shares to make it more similar to our rank aggregation setting. In this case, a profile $R$ specifies for every approval ballot $A$ the share of voters that report $A$, and we denote this share by $R(A)$. For example, when there are $100$ voters and $30$ of them report the same approval ballot $A$, then $R(A)=30\%$. This means that the total share of the voters reporting a given approval ballot $A$ is initially $b_1(A)=R(A)\cdot B$. We note that, when $B={m\choose 2}$, this is precisely the initial budget handed out by \psb. Now, when using this voter share representation, the Method of Equal Shares picks in each round the candidate $x^*$ that minizes the value $\rho$ for which $\sum_{A\colon A\subseteq C} \min(\rho\cdot b_1(u)\cdot u(x^*, A), b_i(u))=c(x^*)$.\footnote{It may seem counterintuitive to have the factor $b_1(u)$ in the first argument of the minimum. To see why this is correct, we start again at the discrete setting and suppose that there is a set $S$ of $k$ voters that report the same ballot $A$. For any candidate $x$, these voters pay in total $\sum_{j\in S} \min(\rho_x u(x, A), b_i(j))$ (where $\rho_x$ is the value for which $x$ can be afforded). Since all voters with the approval ballot have the same budget during the execution of the Method of Equal Shares, this term is equivalent to $\min(k\cdot\rho_x  u(x,A), k \cdot b_i(j))$. Now, $k\cdot b_i(j)$ is precisely $b_i(A)$ in our notation. Further, for the first parameter, we observe that we can arbitrarily rescale the vote shares because the values $\rho_x$ absorb such changes: if we multiple all vote shares by a constant $c$, the value $\rho_x$ of every candidate will simply be divided by $c$. Hence, we can replace the value $k$ with $b_1(u)=\frac{k}{|N|}\cdot B$.} 
Further, each budget $b_i(A)$ is reduced by $\min(\rho\cdot b_1(A)\cdot u(x^*), b_i(A))$ before proceeding with the next round.

Now, it follows from the proof of \Cref{thm:propBordaRP} that \psb admits an almost equivalent formulation. Specifically, as long as there are at least $4$ candidates left, the \psb rule always picks the candidate $x$ that minimizes the value $\rho$ such that $\sum_{{\succ}\in\mathcal{R}} \min(\rho\cdot b_i({\succ})\cdot u({\succ},x_i,X_i), b_i({\succ}))=m-i$. We emphasize here that the first argument of the minimum is defined based on $b_i({\succ})$, the budget in the $i$-th round, whereas the Method of Equal Shares is defined based on $b_1({\succ})$. To see why this formulation indeed captures \psb, we note that the value $\rho_x$ for which a candidate $x$ can be bought is lower bounded by $\frac{m-i}{U(b_i,x,X_i)}$ because $\sum_{{\succ}\in\mathcal{R}} \min(\frac{m-i}{U(b_i,x,X_i)}\cdot b_i({\succ})\cdot u({\succ},x_i,X_i), b_i({\succ}))\leq \sum_{{\succ}\in\mathcal{R}} \frac{m-i}{U(b_i,x,X_i)}\cdot b_i({\succ})\cdot u({\succ},x_i,X_i)=m-i$. Clearly, the fraction $\frac{m-i}{U(b_i,x,X_i)}$ is minimized by the Borda winner. Moreover, as we have shown in the proof of \Cref{thm:propBordaRP}, it holds that for each ranking $\succ$ that $\frac{m-i}{U(b_i,x^*,X_i)}\cdot b_i({\succ})\cdot u({\succ},x^*,X_i)\leq  b_i({\succ})$ for the Borda winner $x^*$ when $|X_i|\geq 4$, so the Borda winner indeed minimizes the cost per utility ratio. 

Motivated by this analogy between the Method of Equal Shares and the Proportional Sequential Borda rule, it seems tempting to consider the SWF that picks in every round $i$ the candidate $x_i$ that minimizes the price $\rho$ such that $\sum_{{\succ}\in\mathcal{R}} \min(\rho \cdot b_1({\succ}) \cdot u({\succ},x_i, X_i), b_i({\succ}))=m-i$ and then update the budgets accordingly. However, it turns out that this method fails rank-priceability as a ranking may pay more for a candidate than the utility it obtains.\footnote{
For an example, consider the rankings $\succ_1=x_1,\dots, x_6$ and $\succ_2=x_6,\dots,x_1$ and let $R$ be the profile defined by $R({\succ_1})=\frac{47}{60}=\frac{11.75}{15}$ and $R(\succ_2)=\frac{13}{60}=\frac{3.25}{15}$. Since ${6\choose 2}=15$, the initial budgets of $\succ_1$ and $\succ_2$ are $b_1({\succ_1})=R({\succ_1})\cdot 15=11.75$ and $b_1(\succ_2)=R(\succ_2)\cdot 15=3.25$. In the first two rounds, it is easy to verify that \rmes{} chooses $x_1$ and $x_2$ and that $\succ_1$ will pay the full cost of these candidates. Hence, in the third step, we have that $b_3({\succ_1})=11.75-5-4=2.75$ and $b_3(\succ_2)=3.25$. This means that $x_3$ is no longer feasible because $\succ_1$ has not enough budget left to pay for this candidate and $\succ_2$ gains no utility from $x_3$. If we do not include the utility $u({\succ}, x, X_i)$ in the minimum for computing $\rho$, we would thus buy $x_4$ for a cost per utility ratio of $\rho=\frac{3}{2\cdot11.75+1\cdot 3.25}=\frac{12}{107}$. In turn, we infer that $\succ_1$ pays $\frac{12}{107}\cdot \frac{47}{4}\cdot 2=\frac{282}{107}\approx2.64$ and $\succ_2$ pays $\frac{12}{107}\cdot \frac{13}{4}\cdot {1}=\frac{39}{107}\approx 0.36$. However, this means that $\succ_2$ pays more than its obtained utility in this step.} 
We will thus include the term $u({\succ}, x_i, X_i)$ as a third argument of the minimum. Thus, the \emph{Ranked Method of Equal Shares (\rmes)} iteratively selects candidates based on the budgets $b_i({\succ})$ of the input rankings and the set of remaining candidates $X_i$. As for \psb, it holds in the first round that $b_1({\succ})=R({\succ})\cdot{m\choose2 }$ and $X_1=C$. Now, in each round $i\in \{1,\dots, m-2\}$, \rmes identifies the candidate $x_i\in X_i$ that minimizes the value $\rho_i$ such that 
\[\sum_{{\succ}\in\mathcal{R}} \min(\rho_i\cdot b_1({\succ})\cdot u({\succ}, x_i, X_i), b_i({\succ}), u({\succ}, x_i, X_i))=m-i.\]
Then, we place this candidate at the $i$-th position of the output ranking, remove $x_i$ from the active candidates, and reduce the budget of every ranking according to its contribution to the cost of $x_i$. More formally, we set $X_{i+1}=X_i\setminus \{x_i\}$ and $b_{i+1}({\succ})=b_i({\succ})-\min(\rho_{i}\cdot b_1({\succ})\cdot u({\succ},x_{i}, X_i), b_i({\succ}), u({\succ}, x_i, X_i))$ for all $\succ$. After this, we proceed with the next round. Finally, since this approach is only guaranteed to work when $|X_i|\geq 3$, we decide the order over the last two candidates by majority voting with respect to the remaining budgets: assuming that $x$ and $y$ are the last two candidates, we place $x$ ahead of $y$ in the output ranking if $\sum_{{\succ}\in\mathcal{R}\colon x\succ y} b_{m-1}({\succ})\geq \sum_{{\succ}\in\mathcal{R}\colon y\succ x} b_{m-1}({\succ})$. Otherwise, we put $y$ at the $m-1$-th position. As usual, ties can be broken arbitrarily.

\begin{example}[The Ranked Method of Equal Shares] Consider the two ranking ${\succ_1}=x_1,x_2,x_3,x_4, x_5$ and ${\succ_2}=x_4,x_5,x_1,x_3,x_2$ and let $R$ the be same profile as in \Cref{ex:psb}, i.e., $R({\succ_1})=0.6$ and $R(\succ_2)=0.4$. If the tie-breaking favors candidates with smaller indices, \rmes chooses the ranking $\rhd=x_1,x_2,x_4,x_5,x_3$ for this profile, as verified by the following computations.

\begin{center}
    \begin{tikzpicture}
\node at (0,0) (step1) {	
			\votermultiplicity{$6$ $(6)$}{\weakorder{{1},{2},{3},{4},{5}}}
			\votermultiplicity{$4$ $(4)$}{\weakorder{{4},{5},{1},{3},{2}}}
		};

\draw[line width = 0.2mm]
  ([yshift=-0.55cm,xshift=0.1cm]step1.north west) -- ([yshift=-0.55cm,xshift=-0.1cm]step1.north east);

\node at (1.3,0) {$\implies$};

\node at (2.6,0) (step2) {
    \votermultiplicity{$6$ $(3)$}{\weakorder{{2},{3},{4},{5}}}
	\votermultiplicity{$4$ $(3)$}{\weakorder{{4},{5},{3},{2}}}
};

\draw[line width = 0.2mm]
  ([yshift=-0.55cm,xshift=0.1cm]step2.north west) -- ([yshift=-0.55cm,xshift=-0.1cm]step2.north east);

\node at (3.9,0) {$\implies$};

\node at (5.2,0) (step3) {
    \votermultiplicity{$6$ $(0)$}{\weakorder{{3},{4},{5}}}
	\votermultiplicity{$4$ $(4)$}{\weakorder{{4},{5},{3}}}
};

\draw[line width = 0.2mm]
  ([yshift=-0.55cm,xshift=0.1cm]step3.north west) -- ([yshift=-0.55cm,xshift=-0.1cm]step3.north east);

\node at (6.5,0) {$\implies$};

\node at (7.8,0) (step4) {
    \votermultiplicity{$6$ $(0)$}{\weakorder{{3},{5}}}
	\votermultiplicity{$1$ $(0)$}{\weakorder{{5},{3}}}
};

\draw[line width = 0.2mm]
  ([yshift=-0.55cm,xshift=0.1cm]step4.north west) -- ([yshift=-0.55cm,xshift=-0.1cm]step4.north east);

\node at (9.1,0) {$\implies$};

\node at (10.4,0) (step5) {
    \votermultiplicity{$6$ $(0)$}{\weakorder{{3}}}
	\votermultiplicity{$4$ $(0)$}{\weakorder{{3}}}
};

\draw[line width = 0.2mm]
  ([yshift=-0.55cm,xshift=0.1cm]step5.north west) -- ([yshift=-0.55cm,xshift=-0.1cm]step5.north east);
    \end{tikzpicture}\end{center}

In this figure, we show the input rankings restricted to the available candidates and weighted by their initial budget $b_1({\succ})$. In brackets, we also show the remaining budget in each round. Analogous to \psb, \rmes picks in the first step $x_1$ for a price $\rho_1=\frac{1}{8}$, so $\succ_1$ pays $\frac{1}{8}\cdot 4\cdot 6=3$ and $\succ_2$ pays $\frac{1}{8}\cdot 2\cdot 4=1$. Hence, the new budgets are $b_2({\succ_1})=3$ and $b_2(\succ_2)=3$. In the second step, both $x_2$ and $x_4$ can be bought for a price of $\rho=\frac{1}{6}=\frac{3}{18}$. Because we assume that the tie-breaking favors $x_2$ to $x_4$, we pick $x_2$ next. Consequently, $\succ_1$ pays $\frac{1}{6}\cdot 3\cdot 6=3$ and $\succ_2$ pays $\frac{1}{6}\cdot 0\cdot 4=0$, which means that $b_3({\succ_1})=0$ and $b_3(\succ_2)=3$. From here on, \rmes picks the candidates according to $\succ_2$ as $\succ_1$ has no budget left. 
\end{example}

We note that in this example, there is always a candidate $x_i$ that can be bought for a finite price. We next show that this observation holds in general as \rmes{} is well-defined. Moreover, we will also prove that \rmes satisfies rank-priceability and thus uPJR.

\begin{restatable}{theorem}{rmesRP}\label{thm:rmesRP}
    \rmes{} is well-defined and satisfies rank-priceability. 
\end{restatable}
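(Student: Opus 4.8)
The plan is to handle the two claims separately: first reduce well-definedness to a combinatorial \emph{affordability} statement that I verify round by round, and then exhibit an explicit payment scheme certifying rank-priceability. For well-definedness, fix a round $i\in\{1,\dots,m-2\}$ and a candidate $x\in X_i$; the left-hand side of the defining equation, viewed as a function of $\rho$, is continuous, non-decreasing, equals $0$ at $\rho=0$, and tends to $\sum_{{\succ}\in\mathcal{R}}\min(b_i({\succ}),u({\succ},x,X_i))$ as $\rho\to\infty$. Hence a finite $\rho_i$ exists for $x$ exactly when $\sum_{{\succ}\in\mathcal{R}}\min(b_i({\succ}),u({\succ},x,X_i))\geq m-i$, so it suffices to show that in each round $i\leq m-2$ some $x\in X_i$ satisfies this. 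Since the defining equation forces each round $j\leq m-2$ to reduce the total budget by exactly $m-j$, an induction gives $\sum_{{\succ}\in\mathcal{R}} b_i({\succ})=\sum_{k=1}^{m-i}k=\tfrac{(m-i)(m-i+1)}{2}$ as long as the rule is well-defined up to round $i$; in particular $\sum_{{\succ}\in\mathcal{R}}b_{m-2}({\succ})=3$ and $\sum_{{\succ}\in\mathcal{R}}b_{m-1}({\succ})=1$.

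For a round with $|X_i|=m-i+1\geq 4$, write $n=m-i\geq 3$. If some ranking has $b_i({\succ})\geq n$, then its top candidate $x$ in $X_i$ has $u({\succ},x,X_i)=n$ and is affordable. Otherwise all budgets lie in $[0,n]$, and since $\sum_{x\in X_i}\min(t,u({\succ},x,X_i))=\sum_{k=0}^{n}\min(t,k)=:f(t)$ with $f$ concave, $f(0)=0$, and $f(n)=\tfrac{n(n+1)}{2}$, we obtain $f(t)\geq\tfrac{n+1}{2}t$ on $[0,n]$. Summing over ${\succ}\in\mathcal{R}$ and averaging over the $n+1$ candidates of $X_i$ gives $\max_{x\in X_i}\sum_{{\succ}\in\mathcal{R}}\min(b_i({\succ}),u({\succ},x,X_i))\geq\tfrac{n(n+1)}{4}\geq n$, using $n\geq 3$.

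The delicate round is the last equal-shares round $i=m-2$, where $X_i=\{a,b,c\}$, $\sum_{{\succ}\in\mathcal{R}}b_i({\succ})=3$, and affordability means $\mathrm{val}(x):=\sum_{{\succ}\in\mathcal{R}}\min(b_i({\succ}),u({\succ},x,X_i))\geq 2$ for some $x$; here the averaging bound only yields $\tfrac32$. If some budget is $\geq 2$, its top candidate is affordable; if all budgets are $\leq 1$, then $\sum_{x\in X_i}\mathrm{val}(x)=2\sum_{{\succ}\in\mathcal{R}}b_i({\succ})=6$, so again some candidate is affordable. In the remaining case, either at least two rankings have budget in $(1,2)$ or exactly one does. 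In the first situation a short case distinction on whether their top candidates in $X_i$ coincide, and on the relative order of the third candidate, shows that one candidate already accumulates more than $2$. In the second situation let $\succ^*$ be the ranking with budget $\beta\in(1,2)$, say topping $a$ with second candidate $d\in\{b,c\}$ in $X_i$, and let $L_x$ be the total budget of the remaining (light, budget $\leq 1$) rankings that bottom-rank $x$ in $X_i$; then $\mathrm{val}(a)=3-L_a$, $\mathrm{val}(d)=4-\beta-L_d$, and $L_a+L_b+L_c=3-\beta$, so if all three values were below $2$ we would need $L_a>1$ and $L_d>2-\beta$, hence $L_a+L_d>3-\beta\geq L_a+L_d$, a contradiction. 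This establishes well-definedness.

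For rank-priceability I would take $\pi({\succ},x_i)=b_i({\succ})-b_{i+1}({\succ})$ for $i\leq m-2$, $\pi({\succ},x_{m-1})=b_{m-1}({\succ})$ if $x_{m-1}\succ x_m$ and $0$ otherwise, and $\pi({\succ},x_m)=0$. Condition~(1) is immediate for $i\leq m-2$ because the payment is a minimum of three terms, one of which is $u({\succ},x_i,X_i)$, and for $x_{m-1}$ it follows from $b_{m-1}({\succ})\leq\sum_{{\succ'}\in\mathcal{R}}b_{m-1}({\succ'})=1=u({\succ},x_{m-1},\{x_{m-1},x_m\})$ when $x_{m-1}\succ x_m$. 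Condition~(2) telescopes to $\sum_{i}\pi({\succ},x_i)\leq b_1({\succ})=R({\succ}){m\choose 2}$. Condition~(3) holds since $\sum_{{\succ}\in\mathcal{R}}\pi({\succ},x_i)=m-i$ for $i\leq m-2$ by definition of $\rho_i$, while $\sum_{{\succ}\in\mathcal{R}}\pi({\succ},x_{m-1})\leq 1=m-(m-1)$ and $\sum_{{\succ}\in\mathcal{R}}\pi({\succ},x_m)=0$. Finally, $x_{m-1}$ is the weighted-majority winner among the last two candidates, so $\sum_{{\succ}\in\mathcal{R}}\pi({\succ},x_{m-1})\geq\tfrac12$, whence the total payment is $\sum_{i=1}^{m-2}(m-i)+\sum_{{\succ}\in\mathcal{R}}\pi({\succ},x_{m-1})\geq({m\choose 2}-1)+\tfrac12>{m\choose 2}-1$, giving Condition~(4). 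The main obstacle is the three-candidate round in the third paragraph: the clean averaging argument degrades below the required bound of $2$, so one genuinely needs the finer ledger argument tracking how the remaining budget splits among $a$, $b$, and $c$, together with the two-heavy-rankings case check.
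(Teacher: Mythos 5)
Your proof is correct, and the rank-priceability half is essentially the paper's argument verbatim: the same payment scheme $\pi({\succ},x_i)=b_i({\succ})-b_{i+1}({\succ})$ with the majority-vote payment in round $m-1$, and the same verification of Conditions (1)--(4), including the observation that at least half of the final unit of budget is spent on the majority winner. The well-definedness half also follows the paper's reduction (a candidate is affordable iff $\sum_{\succ}\min(b_i({\succ}),u({\succ},x,X_i))\geq m-i$, plus the induction giving total budget $\tfrac{(m-i)(m-i+1)}{2}$), but your proof of the affordability claim is organized differently. The paper splits on whether some ranking has budget at least $m-i-1$ and, in the complementary case, buckets rankings by integer budget ranges to show each ranking contributes at least twice its budget across all candidates; this single argument covers every round $i\leq m-2$ uniformly. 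You instead split at threshold $m-i$ and use concavity of $t\mapsto\sum_{k=0}^{n}\min(t,k)$ to get the factor $\tfrac{n+1}{2}$, which is arguably cleaner but only closes the gap when $n=m-i\geq 3$, forcing your separate three-candidate analysis (the two-heavy-rankings case distinction and the $L_a+L_d>3-\beta$ ledger contradiction). Both routes are sound; the paper's buys uniformity over rounds at the cost of a fiddlier subcase analysis in its "large budget" branch, while yours buys a slicker generic round at the cost of an extra endgame case. One cosmetic point: in the subcase where the two heavy rankings have distinct top candidates and both rank the third candidate second, that candidate accumulates exactly $2$, not "more than $2$" --- which still suffices since affordability only requires $\mathrm{val}(x)\geq m-i$.
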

\begin{proof}
    We fix a profile $R$ and show both claims of the theorem independently.\medskip

    \textbf{Claim 1: \rmes{} is well-defined.}\\
    To show that \rmes{} is well-defined, we need to prove that for each round $i\in \{1,\dots, m-2\}$, there is a candidate $x$ with $\rho_x<\infty$. In particular, note that \mes is obviously well-defined in the last round as we simply apply the majority rule for the remaining two candidates. Now, fix some round $i\in \{1,\dots,m-2\}$ and assume that a candidate was bought in all previous rounds. Clearly, if $i=1$, this assumption is true and it will hold inductively for $i>1$. Moreover, let $X_i$ denote the candidates that have not been placed in the output ranking yet, $b_i({\succ})$ the remaining budgets, and $b_1({\succ})$ the initial budgets. We first note that for every candidate $x\in X_i$ and ranking ${\succ}\in\mathcal{R}$, it holds that 
    \[\min(b_1({\succ})\cdot u({\succ}, x, X_i), b_i({\succ}), u({\succ},x,X_i))=\min(b_i({\succ}), u({\succ},x,X_i)). \]
    In more detail, if $u({\succ},x,X_i)=0$, then clearly $\min(b_1({\succ})\cdot u({\succ}, x, X_i), b_i({\succ}), u({\succ},x,X_i))=u({\succ},x,X_i)$ as all other values are non-negative. On the other hand, if $u({\succ}, x, X_i)\geq 1$, then $b_1({\succ}) u({\succ}, x, X_i)\geq b_i({\succ})$ since $b_1({\succ})\geq b_i({\succ})$. Hence, to show that there is always an affordable candidate, it suffices to prove that there is always a candidate $x$ such that $\sum_{{\succ}\in\mathcal{R}} \min(b_i({\succ}), u({\succ}, x, X_i))\geq m-i$. This means that $x$ is affordable for a price $\rho\leq 1$.

    Now, to prove this claim, we first recall that the total initial weight is $\sum_{{\succ}\in\mathcal{R}} b_1({\succ})={m\choose 2}$ and that we decrease the budget by $m-j$ in all rounds $j<i\leq m-2$. Hence, the total remaining budget in the $i$-th round is $\sum_{{\succ}\in\mathcal{R}} b_i({\succ})=\frac{(m-i)(m-i+1)}{2}$. Next, we proceed with a case distinction and first suppose that there is a ranking $\succ^*$ such that $b_i(\succ^*)\geq m-i-1$. Furthermore, let $x$ denote the top-ranked candidate of $\succ^*$. If it even holds that $b_i(\succ^*)\geq m-i$, then this ranking alone can afford $x$ by itself because $b_i(\succ^*)\geq u(\succ^*, x, X_i)=m-i$. On the other hand, if $b_i({\succ^*})< m-i$, then $\min(u({\succ^*}, x, X_i), b_i({\succ^*}))=b_i({\succ^*})\geq m-i-1$. Next, let $Z$ denote the set of rankings $\succ$ such that ${\succ}\neq{\succ^*}$ and $u({\succ}, x, X_i)>0$, and let $B_i^Z=\sum_{{\succ}\in Z} b_i({\succ})$ denote the total remaining budget of these rankings. If $B_i^Z\geq m-i-b_i(\succ^*)$, $x$ can again be bought. In more detail, if there is a ranking ${\succ}\in Z$ with $b_i({\succ})\geq 1$, then $\min(b_i(\succ^*), u(\succ^*, x, X_i))+\min(b_i({\succ}), u({\succ}, x, X_i))\geq m-i$. On the other hand, if $b_i({\succ})<1$ for all ${\succ}\in Z$, it holds for each of these rankings that $\min(b_i({\succ}), u({\succ}, x, X_i))=b_i({\succ})$. Consequently, $\min(b_i(\succ^*), u(\succ^*, x, X_i))+\sum_{{\succ}\in Z} \min(b_i({\succ}), u(\succ^*, x, X_i))\geq m-i$. 
    
    As the last subcase, suppose that $B_i^Z<m-i-b_i(\succ^*)$ and let $y$ denote the second-best candidate in $X_i$ with respect to $\succ^*$. Since $b_i(\succ^*)\geq m-i-1$ by assumption, it holds that $\min(b_i({\succ^*}), u({\succ^*}, y, X_i))=u({\succ^*}, y,X_i)=m-i-1$. Furthermore, we observe that all rankings in  $\mathcal{R}\setminus(Z\cup \{\succ^*\})$ bottom-rank $x$. Moreover, we have that \begin{align*}
        \sum_{{\succ}\in\mathcal{R}\setminus (Z\cup \{\succ^*\})} b_i({\succ})&=\frac{(m-i)(m-i+1)}{2}-\sum_{{\succ}\in Z\cup \{\succ^*\}} b_i({\succ})\\
        &> \frac{(m-i)(m-i+1)}{2}-(m-i)\\
        &\geq 1.
    \end{align*}
    Here, the first inequality follows because $B_i^Z<m-i-b_i(\succ^*)$ and the second one because $m-i\geq 2$. Further, we note that $u({\succ}, y, X_i)\geq 1$ for all rankings ${\succ}\in \mathcal{R}\setminus (Z\cup \{\succ^*\})$ because $y\succ x$. Based on an analogous case distinction as for $x$ when $B\geq m-i-b_i(\succ^*)$, one can now show that $y$ can be afforded.
    
    As the last case, suppose that $b_i({\succ})<m-i-1$ for all ${\succ}\in \mathcal R$. In this case, we note that, for every ranking $\succ$, there are at least two candidates $x$ and $y$ such that $u(\succ,x,X_i)>b_i(\succ)$ and $u(\succ,y,X_i)>b_i(\succ)$. In more detail, $x$ and $y$ are the best and second-best candidates among $X_i$ in $\succ$. By this insight, it holds that $\sum_{x\in X_i} \min(u(\succ,x,X_i),b_i(\succ))\geq 2 b_i(\succ)$ for all rankings $\succ$. Now, by summing over all rankings, we get that 
    \[\sum_{x\in X_i}\sum_{\succ\in\mathcal{R}} \min(u(\succ,x,X_i),b_i(\succ))\geq 2 \sum_{\succ\in\mathcal{R}}b_i(\succ)=(m-i)(m-i+1).\]

    Here, the last inequality uses that $\sum_{\succ\in\mathcal{R}}b_i(\succ)=\frac{(m-i)(m-i+1)}{2}$. Finally, since there are $m-i+1$ candidates in $X_i$, there must be a candidate such that $\sum_{{\succ}\in \mathcal{R}} \min(b_i({\succ}), u({\succ}, x, X_i))\geq m-i$. Hence, we conclude that \rmes is well-defined during the first $m-2$ steps.\medskip

\textbf{Claim 2: \rmes{} satisfies rank-priceability.}\\
    Consider the ranking $\rhd=x_1,\dots, x_m$ chosen by \rmes{} for our input profile $R$. Moreover, let $\rho_i$ denote the price for which candidate $x_i$ is bought for all $i\in \{1,\dots, m-2\}$, let $b_i({\succ})$ denote the budget of ranking $\succ$ in the $i$-th step, and let $X_i=\{x_i,\dots,x_m\}$ the remaining candidates in the $i$-th round. We will analyze the payment scheme $\pi$ defined as follows: for $i\in \{1,\dots, m-2\}$, we set $\pi({\succ}, x_i)=\min(\rho_i b_1({\succ}) u({\succ},x_i,X_i), b_i({\succ}), u({\succ},x_i,X_i))$ for all $\succ$. Further, for $i=m-1$, we set $\pi({\succ},x_i)=b_{m-1}({\succ})$ if $x_{m-1}\succ x_m$ and $\pi({\succ},x_i)=0$ otherwise. Finally, the definition of rank-priceability requires that $\pi({\succ},x_m)=0$ for all ${\succ}\in\mathcal{R}$.

    We first note that Condition (1) of rank-priceability is satisfies for all $i\in \{1,\dots, m-2\}$ because $\pi({\succ}, x_i)\leq u({\succ}, x_i, \{x_i,\dots, x_m\})$ by definition of our scheme. Further, when $i=m-1$, then it holds that $\sum_{{\succ}\in\mathcal{R}} b_i({\succ})=1$. Since $\pi({\succ}, x_{m-1})=0$ if $u({\succ}, x_{m-1}, \{x_{m-1}, x_m\})$ and $\pi({\succ}, x_{m-1})=b_{m-1}({\succ})\leq 1$ if $u({\succ}, x_{m-1}, \{x_{m-1}, x_m\})$, Condition (1) also holds in this round. Secondly, since $\pi({\succ}, x_i)\leq b_i({\succ})$ in every step and $b_1({\succ})=R({\succ})\cdot{m\choose 2}$, Condition (2) of rank-priceability follows. Condition (3) of rank-priceability follows immediately from the definition of \rmes because we pay exactly $m-i$ during all rounds $i\in \{1,\dots, m-2\}$ and at most $1$ in the $m-1$-th round because the total remaining budget is $1$. This also implies Step (4). In more detail, before the $m-1$-th step, the total remaining budget is $1$ and we spent at least half on $x_i$. Hence, the total remaining budget in the end is at most $0.5$. Because the total initial budget is ${m\choose 2}$, this means that the total payments sum up to at least ${m\choose 2}-1$. Hence, rank-priceability is indeed satisfied.
\end{proof}

While \Cref{thm:rmesRP} establishes that \rmes is a proportional SWF, we will next show that this rule is still rather utilitarian. Specifically, we will prove that the first $\lfloor\frac{m}{4}\rfloor$ candidates of this rule are chosen only based on the Borda scores with respect to the initial weights. Put differently, \rmes agrees for roughly the first quarter of the candidates with the highly utilitarian ranking obtained by repeatedly placing the Borda winner in the next available position of the output ranking and removing it from the input profile. These candidates determine roughly $\frac{7}{16}$ of all pairwise comparisons, thus showing that a significant portion of the total utility is assigned in a utilitarian way. 

\begin{restatable}{proposition}{rmesprices}\label{prop:rmesprices}
    Fix a profile $R$ on $m$ candidates and let $\rhd=x_1,\dots,x_m$ denote the ranking chosen by \rmes. It holds for all $i\in \{1,\dots, \lfloor\frac{m}{4}\rfloor\}$ that \[\rho_{x_i}=\frac{m-i}{U(b_1,x_i,\{x_i,\dots,x_m\})}\qquad \text{and} \qquad x_i=\arg\max_{x\in \{x_i,\dots, x_m\}} U(b_1, x, \{x_i,\dots,x_m\}).\]
\end{restatable}
\begin{proof}
    Fix a profile $R$, let $\rhd=x_1,\dots,x_m$ denote the ranking chosen by \rmes, and let $X_i=\{x_i,\dots,x_m\}$ for all $i\in \{1,\dots, m\}$. We first note that the proposition is trivial for $m\leq 3$ because $\lfloor \frac{m}{4}\rfloor=0$ in this case. Hence, assume that $m\geq 4$. We will show the proposition by induction and fix a round $i\in  \{1,\dots, \lfloor(1-\frac{m}{4}) m\rfloor\}$. We inductively suppose for all rounds $j\in \{1,\dots, i-1\}$ that $\rho_j=\frac{m-j}{U(b_1,x_j,X_j)}$ and $x_j=\arg\max_{x\in X_j} U(b_1, x, X_j)$. Clearly, when $i=1$, this assumption is true as there were no previous rounds. The central idea of our proof is to show for every input ranking ${\succ}\in\mathcal{R}$ that \[\min\left(\frac{(m-i)b_1({\succ}) u({\succ},x_i,X_i)}{U(b_1,x_i,X_i)}, b_i({\succ}), u({\succ},x_i,X_i)\right)=\frac{(m-i) b_1({\succ}) u({\succ},x_i,X_i)}{U(b_1,x_i,X_i)}.\] 
    This implies that $x_i$ can be bought for a price of $\rho_i=\frac{m-i}{U(b_1,x_i,X_i)}$. 
    Furthermore, if there was an candidate $x_k$ with $U(b_1,x_k,X_i)>U(b_1,x_i,X_i)$, this candidate could be bought for a price of $\frac{m-i}{U(b_1,x_k,X_i)}<\frac{m-i}{U(b_1,x_i,X_i)}$, which contradicts that \rmes chooses candidate $x_i$ in the $i$-th round. So, it follows from our claim also that $x_i$ is the candidate maximizing $U(b_1,x,X_i)$.  
    
    To prove the above equality, we fix an input ranking ${\succ}\in\mathcal{R}$ and let $x^*$ denote the candidate maximizing $U(b_1,x,X_i)$. We will first show that $\frac{m-i}{U(b_1,x^*, X_i)} b_1({\succ}) u({\succ}, x^*, X_i)\leq u({\succ},x^*,X_i)$. For this, let $x$ denote the top-ranked candidate among $X_i$ with respect to $\succ$. It holds that $u({\succ},x, X_i)=m-i$ as there are $m-i+1$ candidates remaining. Since $x^*$ maximizes the Borda score, it follows that $U(b_1,x^*,X_i)\geq U(b_1,x,X_i)\geq b_1({\succ}) \cdot(m-i)$. Hence, $\frac{m-i}{U(b,x^*,X_i)} b_1({\succ}) u({\succ},x^*,X_i)\leq u({\succ},x^*,X_i)$ as required.
    
    Next, we will show that $\frac{m-i}{U(b_1,x^*, X_i)}b_1({\succ}) u({\succ}, x^*, X_i)\leq b_i({\succ})$. To this end, we observe that, by the induction hypothesis, it holds for all input rankings ${\succ}\in\mathcal{R}$ that 
    \[b_i({\succ})=b_1({\succ})-\sum_{j=1}^{i-1} \frac{m-j}{U(b_1,x_j, X_j)} b_1({\succ}) ({\succ}, x_j, X_j).\] 
    In particular, each ranking $\succ$ pays $\frac{m-j}{U(b_1,x_j,X_j)} b_1({\succ}) u({\succ},x_j,X_j)$ in every round $j\in \{1,\dots, i-1\}$. If this was not the case in some round $j$, candidate $x_j$ could not have been afforded for a price of $\frac{m-j}{U(b_1,x_j,X_j)}$ as the total payments do not add up to $m-i$. By combining our insights and dividing by $b_1({\succ})$, it suffices to show that
    \[\frac{m-i}{U(b_1,x^*, X_i)} u({\succ}, x^*, X_i)\leq 1-\sum_{j=1}^{i-1} \frac{m-j}{U(b_1,x_j, X_j)} u({\succ}, x_j, X_j).\]

    For this, we observe for every round $k\in \{1,\dots, m-1\}$ that $\sum_{x\in X_k} \sum_{{\succ}\in\mathcal{R}} b_1({\succ}) u({\succ},x, X_k)=\sum_{{\succ}\in\mathcal{R}} b_1({\succ}) \frac{(m-k)(m-k+1)}{2}=\frac{m(m-1)}{2}\cdot\frac{(m-k)(m-k+1)}{2}$. Since $x^*$ maximizes the Borda score in the $i$-th round, this means that $U(b_1,x^*,  X_i)\geq \frac{m(m-1)}{4}\cdot (m-i)$ as the maximum is lower bounded by the average. In turn, it follows that $\frac{m-i}{U(b_1,x^*,X_i)}\leq \frac{4}{m(m-1)}$. Analogously, it follows for all candidates $x_j$ with $j\in \{1,\dots, i-1\}$ that $U(b_1,x_j,  X_j)\geq \frac{m(m-1)}{4}\cdot (m-j)$. Using these insights, it follows that 
    \begin{align*}
        \frac{m-i}{U(b_1,x^*, X_i)} u({\succ}, x^*, X_i)&\leq \frac{4}{m(m-1)} u({\succ}, x^*, X_i)\hspace{3cm}\text{and}\\
        1-\sum_{j=1}^{i-1} \frac{m-j}{U(b_1,x_j, X_j)} u({\succ}, x_j, X_j)&\geq 1-\sum_{j=1}^{i-1} \frac{4}{m(m-1)} u({\succ}, x_j, X_j).
    \end{align*}

    We will show that $\frac{4}{m(m-1)} u({\succ}, x^*, X_i)\leq 1-\sum_{j=1}^{i-1} \frac{4}{m(m-1)} u({\succ}, x_j, X_j)$. Equivalently, we can prove that $u({\succ}, x^*, X_i)+\sum_{j=1}^{i-1} u({\succ}, x_j, X_j)\leq \frac{m(m-1)}{4}$. To this end, we note that $u({\succ}, x^*, X_i)\leq m-i$ and $u({\succ}, x_j, X_j)\leq m-j$ for all $j\in \{1,\dots, m-1\}$. Hence, this inequality is satisfied when $\sum_{j=1}^i m-j\leq \frac{m(m-1)}{4}$. Next, we compute that $\sum_{j=1}^i m-j=\sum_{j=1}^{m-1} j - \sum_{j=1}^{m-i-1} i =\frac{m(m-1)}{2}-\frac{(m-i)(m-i-1)}{2}$. Consequently, we can further reduce our problem to showing that 
    \[\frac{m(m-1)}{2}-\frac{(m-i)(m-i-1)}{2}\leq \frac{m(m-1)}{4}.\]
    
    Finally, we will use that $i\leq \frac{m}{4}$ and that $m\geq 4$. In particular, this implies that 
    \begin{align*}
        \frac{(m-i)(m-i-1)}{2}\geq \frac{\frac{3m}{4} (\frac{3m}{4}-1)}{2} = \frac{9m^2}{32} -\frac{3m}{8}=\frac{m(m-1)}{4}+\frac{m^2}{32}-\frac{m}{8}\geq \frac{m(m-1)}{4}.
    \end{align*}
    
    Hence, it indeed holds that $\frac{m(m-1)}{2}-\frac{(m-i)(m-i-1)}{2}\leq \frac{m(m-1)}{4}$.
\end{proof}

Finally, we note that, just as for \psb and \fb, we can also show a lower bound on the average utility of any subprofile. Interestingly, we will show that \Cref{prop:rmesprices} implies a slightly better guarantee for large subprofiles compared to \psb because, roughly, this result entails a lower cost per utility ratio in the first steps. To state this bound, we fix an arbitrary profile $R$ on $m\geq 4$ candidates, the ranking $\rhd$ chosen by $\rmes$ for $R$, and we let $\xi={m-\lfloor\frac{m}{4}\rfloor\choose 2}$. Then, it holds for every subprofile $S$ of $R$ that 
    \[\frac{1}{|S|}\sum_{{\succ}\in \mathcal{R}} S({\succ}) u({\succ},\rhd)\geq \begin{cases}
    {m\choose 2}\cdot \frac{|S|}{4}-\frac{1}{8}\qquad &\text{if } {m\choose 2}|S|-0.5\leq \xi\\
    \frac{1}{2}\cdot{m\choose 2}\cdot(1-\frac{\xi}{{m\choose 2}|S|-0.5})+\frac{\xi+1}{4}\cdot \frac{\xi}{{m\choose 2}|S|}-\frac{1}{4|S|}\qquad &\text{if } {m\choose 2}|S|-0.5> \xi.
    \end{cases}
    \] 

To put this bound into perspective, we note that $\xi= \frac{9}{16}{m\choose 2}+\mathcal{O}(m)$. Thus, when ignoring lower order terms, our result recovers the average utility guarantee of \psb for the Ranked Method of Equal Shares when $|S|$ is less than $\frac{9}{16}$. By contrast, when $|S|>\frac{9}{16}$ our guarantee for \rmes is a convex combination of $\frac{\xi+1}{4}$ and $\frac{1}{2}{m\choose 2}$. In particular, if $|S|=1$, our bound shows that \rmes guarantees an utilitarian welfare of roughly ${m\choose 2}\cdot (\frac{1}{2}\cdot \frac{7}{16}+\frac{1}{4}\cdot\frac{9}{16}\cdot\frac{9}{16})=\frac{305}{1024}\cdot {m\choose 2}$ (where we ignored lower order terms).

\section{Proof of \Cref{prop:SqK}}\label{app:SqK}

In this section, we prove \Cref{prop:SqK}.

\SqK*
\begin{proof}
Fix a number of voters $m\geq 5$ and let $z=\lfloor\frac{(m-1)(m-2)}{4} \rfloor$.
We consider the following $4$ rankings: $\succ_1=x_1,x_2\dots, x_m$, $\succ_2=x_1,x_m\dots,x_2$, $\succ_3$ is a ranking that puts $x_1$ last and agrees with exactly $z$ pairs with $\succ_1$, and $\succ_4$ puts $x_1$ last and orders the remaining pairs exactly inversely to $\succ_3$. 
Based on these rankings, we define the following profile $R$. We note that, for notational simplicity, the weights in $R$ add up to $m\choose2$ instead of $1$. This is without loss of generality since we can multiply all weights with $1/{m\choose 2}$ without affecting the outcome. 
\begin{align*}
  R({\succ_1})=\frac{m}{5}\qquad \text{and} \qquad R({\succ_2})=R(\succ_3)=R(\succ_4)=\frac{1}{3}\left({m\choose 2}-\frac{m}{5}\right).
\end{align*}

Further, let $\rhd_{\mathtt{SqK}}$ denote the ranking returned by the Squared Kemeny rule for $R$. We will show that $\rhd_{\mathtt{SqK}}$  is equal to the ranking $\rhd^*=x_m,\dots, x_1$. This means that the ranking $\succ_1$ is without representation in $R$, thus proving the proposition. To show this claim, we will introduce additional notation. Specifically, we denote by $\Delta({\succ},\rhd)=|\{(x,y)\in \{C\setminus x_1\}^2\colon x\succ y\text{ and } y\rhd x\}|$ the swap distance between two rankings $\succ$ and $\rhd$ after removing $x_1$ from both rankings. Furthermore, we define by $d=1+|\{x\in C\setminus \{x_1\}\colon x\rhd_{\mathtt{SqK}} x_1\}|$ the rank of $x_1$ in $\rhd_{\mathtt{SqK}}$. Given this notation, the swap distance of $\rhd_{\mathtt{SqK}}$ to an input ranking $\succ_i$ is $\mathit{swap}(\succ_i,\rhd_{\mathtt{SqK}})=(d-1)+\Delta(\succ_i,\rhd_{\mathtt{SqK}})$ for $i\in \{1,2\}$ and $\mathit{swap}(\succ_i,\rhd_{\mathtt{SqK}})=(m-d)+\Delta(\succ_i,\rhd_{\mathtt{SqK}})$ for $i\in \{3,4\}$. We will next show the following auxiliary claims.
\begin{enumerate}[label=(\arabic*)]
    \item It holds that $\Delta(\rhd_{\mathtt{SqK}},\rhd^*)\leq\frac{1}{2}{m-1\choose 2}$.
    \item If $m\in \{5,6\}$ and $\Delta(\rhd_{\mathtt{SqK}},\rhd^*)\leq\frac{1}{2}{m-1\choose 2}$, then $d> \frac{2m}{3}$.
    \item If $m\geq 7$ and $\Delta(\rhd_{\mathtt{SqK}},\rhd^*)\leq\frac{1}{2}{m-1\choose 2}$, then $d> \frac{2m}{3}+1$.
    \item If $m\in \{5,6\}$ and $d>\frac{2m}{3}$, then $\Delta(\rhd_{\mathtt{SqK}},\rhd^*)=0$.
    \item If If $m\geq 7$ and $d>\frac{2m}{3}+1$, then $\Delta(\rhd_{\mathtt{SqK}},\rhd^*)=0$.
    \item If $\Delta(\rhd_{\mathtt{SqK}},\rhd^*)=0$, then $d=m$. 
\end{enumerate}

In combination, these observations imply that ${\rhd_{\mathtt{SqK}}}={\rhd^*}$ when $m\geq 5$. We will next prove our auxiliary claims.\medskip

\textbf{Proof of Claim (1):} We will first show that $\Delta(\rhd_{\mathtt{SqK}},\rhd^*)\leq \frac{1}{2}{m-1\choose 2}$. Assume for contradiction that this is not true and let $\bar\rhd$ denote the ranking derived from $\rhd_{\mathtt{SqK}}$ by inverting the order over the candidates $\{x_2,\dots, x_m\}$ while keeping the position of $x_1$ fixed. For instance, if $\rhd_{\mathtt{SqK}}=x_2,x_1,x_3,x_4,x_5$, then $\bar\rhd=x_5,x_1,x_4,x_3,x_2$. We will show that $\bar \rhd$ has a lower cost than $\rhd$ with respect to Squared Kemeny. To this end, we note $\succ_1$ orders the candidates in $\{x_2,\dots, x_m\}$ inversely to $\succ_2$ and that the same holds for $\succ_3$ and $\succ_4$. Hence, we compute that 
\begin{align*}
    &\Delta(\succ_1,\rhd_{\mathtt{SqK}})={m-1\choose 2}-\Delta(\succ_2,\rhd_{\mathtt{SqK}})=\Delta(\succ_2,\bar\rhd) ,\\
    &\Delta(\succ_2,\rhd_{\mathtt{SqK}})={m-1\choose 2}-\Delta(\succ_1,\rhd_{\mathtt{SqK}})=\Delta(\succ_1,\bar\rhd),\\
    &\Delta(\succ_3,\rhd_{\mathtt{SqK}})={m-1\choose 2}-\Delta(\succ_4,\rhd_{\mathtt{SqK}})=\Delta(\succ_4,\bar\rhd),\\
    &\Delta(\succ_4,\rhd_{\mathtt{SqK}})={m-1\choose 2}-\Delta(\succ_3,\rhd_{\mathtt{SqK}})=\Delta(\succ_3,\bar\rhd).\\
\end{align*}

Now, the cost of $\rhd_{\mathtt{SqK}}$ with respect to $\mathtt{SqK}$, denoted by $C_{\mathtt{SqK}}(\rhd_{\mathtt{SqK}})$, is given by 
\begin{align*}C_{\mathtt{SqK}}(\rhd_{\mathtt{SqK}})&=R({\succ_1})\cdot \left(d-1+\Delta(\succ_1, \rhd_{\mathtt{SqK}})\right)^2+R({\succ_2})\cdot \left(d-1+\Delta(\succ_2, \rhd_{\mathtt{SqK}})\right)^2\\
    &\quad+R(\succ_3)\cdot \left(m-d+\Delta(\succ_3, \rhd_{\mathtt{SqK}})\right)^2+R(\succ_4)\cdot \left(m-d+\Delta(\succ_4, \rhd_{\mathtt{SqK}})\right)^2.
\end{align*}

Furthermore, the cost of $\bar\rhd$ is 
\begin{align*}
C_{\mathtt{SqK}}(\bar\rhd)&=R({\succ_1})\cdot \left(d-1+\Delta(\succ_1, \bar \rhd)\right)^2+R({\succ_2})\cdot \left(d-1+\Delta(\succ_2, \bar\rhd)\right)^2\\
    &\quad+R(\succ_3)\cdot \left(m-d+\Delta(\succ_3, \bar\rhd)\right)^2+R(\succ_4)\cdot \left(m-d+\Delta(\succ_4, \bar\rhd)\right)^2\\
    &=R({\succ_1})\cdot \left(d-1+\Delta(\succ_2, \rhd_{\mathtt{SqK}})\right)^2+R({\succ_2})\cdot \left(d-1+\Delta(\succ_1, \rhd_{\mathtt{SqK}})\right)^2\\
    &\quad+R(\succ_3)\cdot \left(m-d+\Delta(\succ_4, \rhd_{\mathtt{SqK}})\right)^2+R(\succ_4)\cdot \left(m-d+\Delta(\succ_3, \rhd_{\mathtt{SqK}})\right)^2
\end{align*}

Next, we observe that $R(\succ_3)=R(\succ_4)$ and $R({\succ_1})<R({\succ_2})$ by the definition of $R$. Furthermore, we assumed that $\Delta(\rhd_{\mathtt{SqK}},\rhd^*)>\frac{1}{2}{m-1\choose 2}$, so $\Delta(\succ_1,\rhd_{\mathtt{SqK}})<\frac{1}{2}{m-1\choose 2}<\Delta(\succ_2,\rhd_{\mathtt{SqK}})$ as $\succ_2$ agrees with $\rhd^*$ on the order of $\{x_2,\dots, x_m\}$ and $\succ_1$ orders these candidates exactly inversely. Using these insights, we obtain that 
\begin{align*}
    &C_{\mathtt{SqK}}(\rhd_{\mathtt{SqK}})-C_{\mathtt{SqK}}(\bar\rhd)\\
    &=\left(R({\succ_1})-R({\succ_2})\right)\left(d-1+\Delta(\succ_1, \rhd_{\mathtt{SqK}})\right)^2+(R({\succ_2})-R({\succ_1}))\left(d-1+\Delta(\succ_2, \rhd_{\mathtt{SqK}})\right)^2\\
    &\quad+(R(\succ_3)-R(\succ_4))\left(m-d+\Delta(\succ_3, \rhd_{\mathtt{SqK}})\right)^2+(R(\succ_4)-R(\succ_3))\left(m-d+\Delta(\succ_4, \rhd_{\mathtt{SqK}})\right)^2\\
    &=(R({\succ_2})-R({\succ_1}))\left(\left(d-1+\Delta(\succ_2, \rhd_{\mathtt{SqK}})\right)^2-\left(d-1+\Delta(\succ_1, \rhd_{\mathtt{SqK}})\right)^2\right)\\
    &>0.
\end{align*}

This shows that $C_{\mathtt{SqK}}(\rhd_{\mathtt{SqK}})>C_{\mathtt{SqK}}(\bar\rhd)$, which contradicts that $\rhd_{\mathtt{SqK}}$ is chosen by the Squared Kemeny rule as this rule chooses the ranking minimizing $C_{\mathtt{SqK}}$.\medskip

\textbf{Proofs of Claims (2), (3), and (6)}: We will next show that $x_1$ cannot be ranked to highly. For each of our three claims, we will show that we can decrease the Squared Kemeny cost when moving $x_1$ further down if its not already in the permitted interval. Throughout the proofs of these statements, we will thus assume that $d\leq m-1$, i.e., that $x_1$ is not bottom-ranked in $\rhd_{\mathtt{SqK}}$. For Claims (2) and (3), we will refine this assumption later. 

Now, we first recall that 
\begin{align*}
    C_{\mathtt{SqK}}(\rhd_{\mathtt{SqK}})&=R({\succ_1})\cdot \left(d-1+\Delta(\succ_1, \rhd_{\mathtt{SqK}})\right)^2+R({\succ_2})\cdot \left(d-1+\Delta(\succ_2, \rhd_{\mathtt{SqK}})\right)^2\\
    &\quad+R(\succ_3)\cdot \left(m-d+\Delta(\succ_3, \rhd_{\mathtt{SqK}})\right)^2+R(\succ_4)\cdot \left(m-d+\Delta(\succ_4, \rhd_{\mathtt{SqK}})\right)^2.
\end{align*}

Next, let $\rhd$ denote the ranking derived from $\rhd_{\mathtt{SqK}}$ by moving $x_1$ one position down without reordering any other candidates. This means that $x_1$ is now the $(d+1)$-th best candidate and that $\Delta({\succ},\rhd)=\Delta({\succ},\rhd_{\mathtt{SqK}})$ for all rankings ${\succ}\in\mathcal{R}$. Hence, the cost of $\rhd$ is
\begin{align*}
    C_{\mathtt{SqK}}(\rhd)&=R({\succ_1})\cdot \left(d+\Delta(\succ_1, \rhd_{\mathtt{SqK}})\right)^2
    +R({\succ_2})\cdot \left(d+\Delta(\succ_2, \rhd_{\mathtt{SqK}})\right)^2\\
    &\quad+R(\succ_3)\cdot \left(m-d-1+\Delta(\succ_3, \rhd_{\mathtt{SqK}})\right)^2
    +R(\succ_4)\cdot \left(m-d-1+\Delta(\succ_4, \rhd_{\mathtt{SqK}})\right)^2.
\end{align*}

We aim to show that $C_{\mathtt{SqK}}(\rhd_{\mathtt{SqK}})-C_{\mathtt{SqK}}(\rhd)>0$. This means that $\rhd_{\mathtt{SqK}}$ cannot be chosen by the Squared Kemeny rule as $\rhd$ has a lower cost. Based on simple calculus, we infer that
{\thinmuskip=1mu
\medmuskip=2mu
\thickmuskip=3mu
\begin{align*}
    C_{\mathtt{SqK}}(\rhd_{\mathtt{SqK}})-C_{\mathtt{SqK}}(\rhd)&=R({\succ_1})\cdot (-2d-2\Delta(\succ_1,\rhd_{\mathtt{SqK}})+1)
    +R({\succ_2})\cdot (-2d-2\Delta(\succ_2,\rhd_{\mathtt{SqK}})+1)\\
    &\quad+R(\succ_3)\cdot (2(m-d) +2\Delta(\succ_3,\rhd_{\mathtt{SqK}})-1)
    +R(\succ_4)\cdot (2(m-d) +2\Delta(\succ_4,\rhd_{\mathtt{SqK}})-1).
\end{align*}
}

Next, we note that $\Delta(\succ_1,\rhd_{\mathtt{SqK}})+\Delta(\succ_2,\rhd_{\mathtt{SqK}})={m-1\choose 2}$ and $\Delta(\succ_3,\rhd_{\mathtt{SqK}})+\Delta(\succ_4,\rhd_{\mathtt{SqK}})={m-1\choose 2}$, because $\succ_1$ and $\succ_2$ (resp. $\succ_3$ and $\succ_4$) order the candidates in $\{x_2,\dots,x_m\}$ inverse to each other. Furthermore, using the definition of $R$, we derive that 
\begin{align*}
    C_{\mathtt{SqK}}(\rhd_{\mathtt{SqK}})-C_{\mathtt{SqK}}(\rhd)=&\frac{m}{5}\cdot (-2d-2\Delta(\succ_1,\rhd_{\mathtt{SqK}})+1)\\
    &+\frac{1}{3}\left({m\choose 2} - \frac{m}{5} \right)\cdot (-2d-2\Delta(\succ_2,\rhd_{\mathtt{SqK}})+1)\\
    &+\frac{1}{3}\left({m\choose 2} - \frac{m}{5} \right)\cdot (2(m-d) +2\Delta(\succ_3,\rhd_{\mathtt{SqK}})-1)\\
    &+\frac{1}{3}\left({m\choose 2} - \frac{m}{5} \right)\cdot (2(m-d) +2\Delta(\succ_4,\rhd_{\mathtt{SqK}})-1)\\
    =&\frac{4m}{3}\left({m\choose 2} - \frac{m}{5} \right)-2d\cdot{m\choose 2}-\frac{1}{3}{m\choose 2} + \frac{4m}{15}\\
    &-\frac{2m}{5}\Delta(\succ_1,\rhd_\mathtt{SqK})-\frac{2}{3}\left({m\choose 2}-\frac{m}{5}\right)\Delta(\succ_2,\rhd_\mathtt{SqK})\\
    &+\frac{2}{3}\left({m\choose 2}-\frac{m}{5}\right)\Delta(\succ_3,\rhd_\mathtt{SqK})
    +\frac{2}{3}\left({m\choose 2}-\frac{m}{5}\right)\Delta(\succ_4,\rhd_\mathtt{SqK})\\
    =&\frac{4m}{3}\left({m\choose 2} - \frac{m}{5} \right)-2d\cdot{m\choose 2}-\frac{1}{3}{m\choose 2} + \frac{4m}{15}\\
    &-2\left(\frac{1}{3}{m\choose 2} - \frac{4m}{15}\right) \Delta(\succ_2,\rhd_{\mathtt{SqK}}) - \frac{2m}{5}{m-1\choose 2}\\
    &+2\left(\frac{1}{3}{m\choose 2} - \frac{m}{15}\right){m-1\choose 2}\\
    =&\frac{4m}{3}\left({m\choose 2} - \frac{m}{5} \right)-2d\cdot{m\choose 2}-\frac{1}{3}{m\choose 2} + \frac{4m}{15}\\
    &+2\left(\frac{1}{3}{m\choose 2} - \frac{4m}{15}\right)\left({m-1\choose 2}-\Delta(\succ_2,\rhd_{\mathtt{SqK}})\right)
\end{align*}

In the first equality, we substitute the definition of $R(\succ_i)$ for $i\in \{1,2,3,4\}$. In the next equality, we rearrange the terms to isolate the terms that depend on $\Delta$ and simplify the other terms as much as possible. In the third inequality, we then use  $\frac{1}{3}({m\choose 2}-\frac{m}{5})-\frac{m}{5}=\frac{1}{3}{m\choose 2}-\frac{4m}{15}$ and $\Delta({\succ_1},\rhd_{\mathtt{SqK}})+\Delta({\succ_2},\rhd_{\mathtt{SqK}})=\Delta(\succ_3,\rhd_{\mathtt{SqK}})+\Delta(\succ_4,\rhd_{\mathtt{SqK}})={m-1\choose 2}$ as well as $\Delta(\succ_3,\rhd)$ and $\Delta(\succ_4,\rhd)$. Finally, the last line follows by rearranging our terms. 

We now process with a case distinction with respect to $m$ and $\Delta(\rhd,\rhd^*)$ to prove our three claims. To this end, we note that $\Delta(\succ_2,\rhd^*)=0$ as $\succ_2$ and $\rhd^*$ agree on the order ov $\{x_2,\dots, x_m\}$. This means that $\Delta(\rhd_{\mathtt{SqK}},\rhd^*)=\Delta(\succ_2,\rhd_{\mathtt{SqK}})$.\medskip

\emph{Claim (2):} We assume that $m\in \{5,6\}$ and $\Delta(\succ_2,\rhd_\mathtt{SqK})=\Delta(\rhd_\mathtt{SqK},\rhd^*)\leq \frac{1}{2}{m-1\choose 2}$ and aim to show that $C_{\mathtt{SqK}}(\rhd_\mathtt{SqK})-C_{\mathtt{SqK}}(\rhd)>0$ if $d\leq \frac{2m}{3}$. To this end, we observe that the assumptions that $\Delta(\succ_2,\rhd_\mathtt{SqK})\leq \frac{1}{2}{m-1\choose 2}$ and $d\leq \frac{2m}{3}$ imply that 
\begin{align*}
    &C_{\mathtt{SqK}}(\rhd_\mathtt{SqK})-C_{\mathtt{SqK}}(\rhd)\\
    &\geq \frac{4m}{3}\left({m\choose 2} - \frac{m}{5} \right)-\frac{4m}{3}\cdot{m\choose 2}-\frac{1}{3}{m\choose 2} + \frac{4m}{15}
    +\left(\frac{1}{3}{m\choose 2} - \frac{4m}{15}\right)\cdot {m-1\choose 2}\\
    &= \left(\frac{1}{3}{m\choose 2} - \frac{4m}{15}\right)\cdot {m-1\choose 2}+\frac{4m}{15} - \frac{1}{3}{m\choose 2}-\frac{4m^2}{15}.
\end{align*}

Since $m\geq 5$, it holds that ${m-1\choose 2}\geq 6$, so our formula further simplifies to
\begin{align*}
    C_{\mathtt{SqK}}(\rhd_\mathtt{SqK})-C_{\mathtt{SqK}}(\rhd)&\geq 2 {m\choose 2} - \frac{20m}{15} - \frac{1}{3}{m\choose 2}-\frac{4m^2}{15}
\end{align*}

Finally, for $m=5$, this term evaluates $2{5\choose 2}-\frac{100}{15}-\frac{1}{3}{m\choose 2}-\frac{100}{15}=\frac{10}{3}$. Moreover, for $m=6$, we derive that  $2{6\choose 2} - \frac{120}{15} - \frac{1}{3}{6\choose 2}-\frac{144}{15}=\frac{375}{15}-\frac{266}{15}>0$. Hence, in both cases, $\rhd$ has a lower cost than $\rhd_\mathtt{SqK}$, contradicting that $\rhd_\mathtt{SqK}$ is chosen by the Squared Kemeny rule.\medskip

\emph{Claim (3):} Next, we assume that $m\geq 7$ and $\Delta(\succ_2,\rhd_\mathtt{SqK})=\Delta(\rhd_\mathtt{SqK},\rhd^*)\leq \frac{1}{2}{m-1\choose 2}$. This time, our goal is to show that $C_{\mathtt{SqK}}(\rhd_\mathtt{SqK})-C_{\mathtt{SqK}}(\rhd)>0$ if $d\leq \frac{2m}{3}+1$. Analogous to Claim (2), we derive that 
\begin{align*}
    &C_{\mathtt{SqK}}(\rhd_\mathtt{SqK})-C_{\mathtt{SqK}}(\rhd)\\
    &\geq \frac{4m}{3}\left({m\choose 2} - \frac{m}{5} \right)-\frac{4m}{3}{m\choose 2}-2{m\choose 2}-\frac{1}{3}{m\choose 2} + \frac{4m}{15}
    +\left(\frac{1}{3}{m\choose 2} - \frac{4m}{15}\right) {m-1\choose 2}\\
    &= \left(\frac{1}{3}{m\choose 2} - \frac{4m}{15}\right)\cdot {m-1\choose 2}+\frac{4m}{15} - 2{m\choose 2} - \frac{1}{3}{m\choose 2}-\frac{4m^2}{15}.
\end{align*}

Furthermore, it holds that ${m-1\choose 2}\geq 15$ as $m\geq 7$, which implies that
\begin{align*}
    C_{\mathtt{SqK}}(\rhd_\mathtt{SqK})-C_{\mathtt{SqK}}(\rhd)&\geq 5{m\choose 2} - 4m+\frac{4m}{15} - 2{m\choose 2} - \frac{1}{3}{m\choose 2}-\frac{4m^2}{15}.\\
    &=\frac{8}{3}\cdot \frac{m(m-1)}{2} -4m +\frac{4m}{15}-\frac{4m^2}{15}\\
    &\geq \frac{16}{15}m^2 - 6m\\
    &>0.
\end{align*}

The last inequality here use the fact that $m\geq 7$. This proves again that $C_{\mathtt{SqK}}(\rhd)-C_{\mathtt{SqK}}(\rhd')$, so $d>\frac{2m}{3}+1$ in this case.\medskip

\emph{Claim (6):} Finally, we suppose that $m\geq 5$ is arbitrary and that $\Delta(\rhd_\mathtt{SqK},\rhd^*)=0$. In this case, we will show that $d=m$. To this end, we assume that $d\leq m-1$ and show that $C_{\mathtt{SqK}}(\rhd_\mathtt{SqK})>C_{\mathtt{SqK}}(\rhd)$. Our assumptions imply that 
\begin{align*}
    &C_{\mathtt{SqK}}(\rhd_{\mathtt{SqK}})-C_{\mathtt{SqK}}(\rhd)\\
    &\geq \frac{4m}{3}\left({m\choose 2} - \frac{m}{5} \right)-2m{m\choose 2}+2{m\choose 2} -\frac{1}{3}{m\choose 2} + \frac{4m}{15}
    +2\left(\frac{1}{3}{m\choose 2} - \frac{4m}{15}\right) {m-1\choose 2}\\
    &= \left(\frac{2}{3}{m\choose 2} - \frac{8m}{15}\right)\cdot {m-1\choose 2}-\frac{2m}{3}{m\choose 2} +2{m\choose 2}- \frac{4m^2}{15}+\frac{4m}{15} - \frac{1}{3}{m\choose 2}.
\end{align*}

Now, for $m=5$, this term evaluates to 
\begin{align*}
    \left(\frac{2}{3}\cdot 10 - \frac{8\cdot 5}{15}\right)\cdot 6-\frac{2\cdot 5}{3}\cdot 10 +2\cdot 10- \frac{4\cdot 5^2}{15}+\frac{4\cdot 5}{15} - \frac{1}{3}\cdot 10 = 44 - \frac{126}{3} =2.
\end{align*}

Further for $m=6$, we get that
\begin{align*}
\left(\frac{2}{3}\cdot 15 - \frac{8\cdot 6}{15}\right)\cdot 10-\frac{2\cdot 6}{3}\cdot 15+2\cdot 15- \frac{4\cdot 6^2}{15}+\frac{4\cdot 6}{15} - \frac{1}{3}\cdot 15 = 98 - 73  =25.
\end{align*}

Finally, for $m\geq 7$, we observe that $2{m\choose 2} - \frac{4m^2}{15}+\frac{4m}{15}-\frac{1}{3}{m\choose 2}>0$. Hence, we have that 
\begin{align*}
    C_{\mathtt{SqK}}(\rhd_\mathtt{SqK})-C_{\mathtt{SqK}}(\rhd)&> \left(\frac{2}{3}{m\choose 2} - \frac{8m}{15}\right)\cdot {m-1\choose 2}-\frac{2m}{3}{m\choose 2}\\
    &>\frac{2}{3}\cdot {m\choose 2}\cdot {m-1\choose 2}-\frac{6m}{5}{m\choose 2}. 
\end{align*}
For the second inequality, we replace the term $-\frac{8}{15}m {m-1\choose 2}$ with $-\frac{8}{15}m {m\choose 2}$. Further, it holds that $\frac{2}{3}{m-1\choose 2}-\frac{6m}{5}>0$ if $m\geq 7$. Specifically, for $m=7$, this can be straightforwardly verified and the term is increasing in $m$ when $m\geq 7$. Hence, it holds for all $m\geq 5$ that $C_{\mathtt{SqK}}(\rhd_\mathtt{SqK})>C_{\mathtt{SqK}}(\rhd)$ if $d\leq m-1$ and $\Delta(\rhd_\mathtt{SqK},\rhd^*)=0$, which shows that $d=m$ under these assumptions.
\medskip

\textbf{Proofs of Claims (4) and (5):} Finally, we will show that, when $x_1$ is placed low in $\rhd_\mathtt{SqK}$, then $\Delta(\rhd_\mathtt{SqK},\rhd^*)=0$. To this end, assume for contradiction that this is not true and let $\rhd$ denote the ranking derived from $\rhd_\mathtt{SqK}$ by ordering all candidates in $\{x_2,\dots, x_m\}$ according to $\rhd^*$ without changing the position of $x_1$. We will again show that $C_{\mathtt{SqK}}(\rhd)>C_{\mathtt{SqK}}(\rhd')$. 

We first consider the cost caused by $\succ_3$ and $\succ_4$ for an arbitrary ranking $\rhd'$ that puts $x_1$ at position $d$. Because $\succ_3$ and $\succ_4$ disagree on the order over all candidates in $\{x_2,\dots, x_m\}$, it holds for all rankings $\rhd'$ with $1+|\{x\in C\setminus \{x_1\}\colon x\rhd' x_1\}|=d$ that the cost caused by $\succ_3$ and $\succ_4$ is
\begin{align*}
    R(\succ_3)\cdot (m-d+\Delta(\succ_3,\rhd'))^2+R(\succ_4)\cdot (m-d+{m-1\choose 2}-\Delta(\succ_3,\rhd'))^2.
\end{align*}

Because $R(\succ_3)=R(\succ_4)$, this is equivalent to 
\begin{align*}
&R(\succ_3)\cdot \left(\left(m-d+\Delta(\succ_3,\rhd')\right)^2+ \left(m-d+{m-1\choose 2}-\Delta(\succ_3,\rhd')\right)^2\right).\\
=&R(\succ_3)\cdot \bigg((m-d)^2 + 2(m-d)\Delta(\succ_3,\rhd') + \Delta(\succ_3,\rhd')^2+(m-d)^2\\
&+2(m-d)\left({m-1\choose2 }-\Delta(\succ_3,\rhd')\right)+\left({m-1\choose2 }-\Delta(\succ_3,\rhd')\right)^2\bigg)\\
=&R(\succ_3) \bigg(2(m-d)^2 + 2(m-d){m-1\choose 2}+2\Delta(\succ_3,\rhd')^2 - 2\Delta(\succ_3,\rhd'){m-1\choose 2}+{m-1\choose 2}^2\bigg)
\end{align*}

By considering the first order condition with respect to $\Delta(\succ_3,\rhd')$, it is easy to see that, for every fixed $d$, this term is minimized when $\Delta(\succ_3,\rhd')=\frac{1}{2}{m-1\choose 2}$. Since quadratic functions grow symmetrically from their minimum, this means that the cost caused by $\succ_3$ and $\succ_4$ is minimal if $\Delta(\succ_3,\rhd')=\lfloor\frac{1}{2}{m-1\choose 2}\rfloor$ or $\Delta(\succ_3,\rhd')=\lceil\frac{1}{2}{m-1\choose 2}\rceil$. We finally note that $\Delta(\succ_3,\rhd)=\Delta(\succ_3,\rhd^*)=\lfloor\frac{1}{2}{m-1\choose 2}\rfloor$. The first equality here follows because $\rhd$ and $\rhd^*$ agree on the order of the candidates $\{x_2,\dots x_m\}$ by definition. The second equality holds because $\succ_3$ is chosen such that it agrees with $\succ_1$ on exactly $\lfloor\frac{1}{2}{m-1\choose 2}\rfloor$ pairs over $\{x_2,\dots, x_m\}$, and because $\rhd^*$ and $\succ_1$ order these candidates exactly inversely. 
This means that the cost caused by $\succ_3$ and $\succ_4$ is weakly less for $\rhd$ than for $\rhd_{\mathtt{SqK}}$ because both rankings put $x_1$ at the same rank $d$ and $\rhd$ minimizes the cost of $\succ_3$ and $\succ_4 $ among all such rankings.

Next, we turn to $\succ_1$ and $\succ_2$. The cost caused by these rankings for a ranking $\rhd'$ that puts $x_1$ at rank $d$ is 

{\medmuskip=2mu
\thickmuskip=3mu
\begin{align*}
    &R({\succ_1})\cdot \bigg(d-1+\Delta(\succ_1,\rhd')\bigg)^2+R({\succ_2})\cdot \bigg(d-1+{m-1\choose 2}-\Delta(\succ_1,\rhd')\bigg)^2\\
    &=R({\succ_1})\cdot \bigg((d-1)^2+2(d-1)\Delta(\succ_1,\rhd_1)+\Delta(\succ_1,\rhd_1)^2\bigg)\\
    &\quad+R({\succ_2})\cdot \bigg((d-1)^2+2(d-1)({m-1\choose 2}-\Delta(\succ_1,\rhd_1)) + {m-1\choose 2}^2-2{m-1\choose 2}\Delta(\succ_1,\rhd_1)+\Delta(\succ_1,\rhd_1)^2\bigg).
\end{align*}
}

We next consider the function $f$ that interprets the above term as a function in $\Delta(\succ_1,\rhd_1)$ and ignores all terms that are independent of this swap distance. Specifically, 
\[f(x)=R({\succ_1})\cdot \bigg(2(d-1)x+x^2\bigg)
    +R({\succ_2})\cdot \bigg(-2(d-1)x - 2{m-1\choose 2}x+x^2\bigg)\]

We next aim to analyze the minimum of $f(x)$, which then gives insight into the optimal swap distance for our above expression. To this end, we first note that the second derivative of $f$ is a positive constant, so the value of $f$ is strictly decreasing until we reach the minimum. Next, we compute the derivative of $f$: 
\begin{align*}
    f'(x)=R({\succ_1})\cdot \bigg(2(d-1) + 2x\bigg) + R({\succ_2})\cdot \bigg(-2(d-1)-2{m-1\choose2}+2x\bigg).
\end{align*}

We aim to show that $f'({m-1\choose 2})\leq 0$. This implies that the unique optimal value of $\Delta(\succ_1,\rhd')$ is ${m-1\choose 2}$ because $\Delta(\succ_1,\rhd')\leq {m-1\choose 2}$ for every ranking $\rhd_1$. To this end, we observe that $f'({m-1\choose 2})=2 R({\succ_1}) {m-1\choose 2} - 2(R({\succ_2})-R({\succ_1}))(d-1)$. We next consider Claims (4) and (5) separately.\medskip

\emph{Claim (4)}: First, we assume that $m\in \{5,6\}$ and $d>\frac{2m}{3}$. Now, if $m=5$, this means that $d>10/3$. Furthermore, as $d$ is an integer, we derive that $d\geq 4$. By using the definition of $R({\succ_1})$ and $R({\succ_2})$, we now compute that 
\begin{align*}
    f'({m-1\choose 2})&=\frac{2m}{5} {m-1\choose 2}  - 2\left(\frac{1}{3}\left({m\choose 2}-\frac{m}{5}\right)-\frac{m}{5}\right)(d-1)\\
    &\leq 2 \cdot 6  - 2
    \cdot\left(\frac{1}{3}\left(10-1\right)-1\right)\cdot 3\\
    &=0
\end{align*}

Similarly, for $m=6$, the condition that $d>\frac{2m}{3}$ means that $d>4$. Using again that $d$ is an integer, we get that $d\geq 5$. Hence, we compute in this case that 
\begin{align*}
    f'({m-1\choose 2})&=\frac{2m}{5} {m-1\choose 2}  - 2\left(\frac{1}{3}\left({m\choose 2}-\frac{m}{5}\right)-\frac{m}{5}\right)(d-1)\\
    &\leq \frac{12}{5}\cdot 10 - 2\cdot \left(\frac{1}{3}(15-\frac{6}{5})-\frac{6}{5}\right)\cdot 4\\
    &=\frac{-16}{5}
\end{align*}

Hence, in both cases, we get that the minimum of $f$ is reached for $x\geq {m-1\choose 2}$. Since $f$ differs from the cost of the ranking $\rhd'$ only in a constant, a ranking $\rhd'$ that puts $x_1$ at position $d$ minimizes the cost caused by $\succ_1$ and $\succ_2$ if it is inverse to $\succ_1$ for the alternatives $\{x_2,\dots,x_m\}$. Since our ranking $\rhd$ satisfies this condition and is the only ranking that satisfies it,
it hence is the unique minimizer for the cost caused by $\succ_1$ and $\succ_2$. Since it also minimizes the cost caused by $\succ_3$ and $\succ_4$, we conclude that $C_{\mathtt{SqK}}(\rhd)<C_{\mathtt{SqK}}(\rhd_{\mathtt{SqK}})$. Put differently, this contradiction means that if $m\in \{5,6\}$ and $d>\frac{2m}{3}$, then $\Delta(\rhd_{\mathtt{SqK}}, \rhd^*)=0$.\medskip

\emph{Claim (5):} As the second case, we suppose that $m\geq 7$ and $d>\frac{2m}{3}+1$. In this case, we get that 
\begin{align*}
    f'({m-1\choose 2})&\leq \frac{2m}{5} {m-1\choose 2}  - 2\left(\frac{1}{3}\left({m\choose 2}-\frac{m}{5}\right)-\frac{m}{5}\right)\cdot \frac{2m}{3}\\
    &=\frac{2m}{5}{m-1\choose2} - \frac{4m}{9}{m\choose 2}+\frac{16m^2}{45}\\
    &=\left(\frac{2m}{5}-\frac{4m}{9}\right){m-1\choose 2}-\frac{4m(m-1)}{9}+\frac{16m^2}{45}\\
    &=-\frac{2m}{45}{m-1\choose 2}-\frac{4m^2}{45}+\frac{4m}{9}\\
    &< 0.
\end{align*}

In the last inequality, we use that $\frac{4m^2}{45}> \frac{4m}{9}$ because $m\geq 7$. Hence, we have also in this case that $f$ is minimized for some value $x\geq {m-1\choose 2}$. Hence, analogous to the last case, we can now conclude that $C_{\mathtt{SqK}}(\rhd)<C_{\mathtt{SqK}}(\rhd_{\mathtt{SqK}})$, which again contradicts the definition of Squared Kemeny. This completes the proof of our last auxiliary claim. 
\end{proof}

\section{Proof of \Cref{thm:propBordaAR}}\label{app:propBordaAR}

\propBordaAR*
\begin{proof}
    Fix a profile $R$ and an arbitrary subprofile $S$ of $R$. Furthermore, let $\rhd=x_1,\dots, x_m$ denote the ranking chosen by \psb and let $b_i({\succ})$ denote the budgets of the input rankings in the $i$-th round. To simplify the notation, we will assume throughout this proof that $\frac{0}{0}=0$. This assumption removes the need to separately discuss rankings $\succ$ with $R({\succ})=0$, which do not have any influence on \psb.
    Our proof will focus on the payments made by the rankings in $S$. We thus define by $b_i^S({\succ})=\frac{S({\succ})}{R({\succ})} b_i({\succ})$ for all rankings ${\succ}\in\mathcal{R}$ and $i\in \{1,\dots, m\}$ the budget of $b_i({\succ})$ that is due to $S$. 
    Moreover, we let $c_i^S({\succ})=b_i^S({\succ})-b^S_{i+1}({\succ})=\frac{S({\succ})}{R({\succ})}(b_i({\succ})-b_{i+1}({\succ}))$ denote the payment made by $\succ$ in the $i$-th round with respect to $S$, and by $C_i^S=\sum_{{\succ}\in\mathcal{R}} c_i^S({\succ})$ denote the total payment made by the subprofile $S$ in step $i$. 

    Now, fix a round $i\in \{1,\dots, m-1\}$ and let $X_i=\{x_i,\dots,x_m\}$. It holds for all ranking ${\succ}\in\mathcal{R}$ that
    \begin{align*}
        b_i({\succ})-b_{i+1}({\succ})&=\min\left(\frac{(m-i)b_i({\succ})u({\succ}, x_i, X_i)}{U(b_i, x_i, X_i)}, b_i({\succ})\right)
        \leq \frac{(m-i)b_i({\succ})u({\succ}, x_i, X_i)}{U(b_i, x_i, X_i)}.
    \end{align*}

    This means that $c_i^S({\succ})\cdot \frac{U(b_i, x_i, X_i)}{m-i}\leq \frac{S({\succ})}{R({\succ})} \cdot b_i({\succ})\cdot  u({\succ}, x_i, \dots,X_i)$ for all rankings ${\succ}\in\mathcal{R}$. Further, we note that $b_i({\succ})\leq b_1({\succ})$ as our budgets are non-increasing and that $b_1({\succ})=R({\succ})\cdot {m\choose 2}$. Hence, we derive that $c_i^S({\succ})\cdot \frac{U(b_i, x_i, X_i)}{m-i}\leq S({\succ})\cdot {m\choose 2} \cdot  u({\succ}, x_i, \{x_i, \dots, x_m\})$. By summing over all rankings, it follows that
    \[C_i^S\cdot \frac{U(b_i, x_i, X_i)}{m-i}\leq \sum_{{\succ}\in\mathcal{R}} S({\succ})\cdot  {m\choose 2} \cdot u({\succ}, x_i, X_i).\]

    Next, we recall that $\sum_{i=1}^{m-1} u({\succ},x_i, X_i)=u({\succ},\rhd)$. Hence, we derive that
    \begin{align*}
        \sum_{i=1}^{m-1} C_i^S\cdot \frac{U(b_i, x_i, X_i)}{m-i}&\leq \sum_{i=1}^{m-1}\sum_{{\succ}\in\mathcal{R}} S({\succ})\cdot  {m\choose 2} \cdot u({\succ}, x_i, X_i)=\sum_{{\succ}\in\mathcal{R}} S({\succ})\cdot  {m\choose 2} \cdot u({\succ}, \rhd)
    \end{align*}

    We will next work towards inferring a lower bound on $U(b_i, x_i, X_i)$. For this, we first recall that $\sum_{{\succ}\in\mathcal{R}} b_1({\succ})=\frac{m(m-1)}{2}$ by definition. Moreover, it holds that $\frac{m(m-1)}{2}=\sum_{j=1}^{m-1} m-j$. Since we decrease the total budget by at most $m-i$ in each round $i$, it follows for all $i\in \{1,\dots, m-1\}$ that 
    \[\sum_{{\succ}\in\mathcal{R}} b_i({\succ})\geq \sum_{j=1}^{m-1} m-j - \sum_{j=1}^{i-1} m-j=\sum_{j=i}^{m-1} m-j=\frac{(m-i)(m-i+1)}{2}.\]

    We next observe that $\sum_{x\in X_i} u({\succ}, x, X_i)=\sum_{j=0}^{m-i} j=\frac{(m-i)(m-i+1)}{2}$ for every ranking ${\succ}\in\mathcal{R}$. Consequently, the Borda score of all candidates in the $i$-th round is 
    \begin{align*}
        \sum_{{\succ}\in\mathcal{R}} \sum_{x\in X_i} b_i({\succ}) u({\succ}, x, X_i)&= \sum_{{\succ}\in\mathcal{R}}  b_i({\succ}) \cdot \frac{(m-i)(m-i+1)}{2} \geq \frac{(m-i)^2(m-i+1)^2}{4}.
    \end{align*}

    Since there are $m-i+1$ candidates remaining in the $i$-th round, this means that the average Borda score is at least $\frac{(m-i)^2(m-i+1)}{4}$. We thus infer that $U(b_i,x_i, X_i)\geq \frac{(m-i)^2(m-i+1)}{4}$ because $x_i$ maximizes the Borda score in the $i$-th round. By substituting this lower bound in our previous inequality, we derive that
    \begin{align*}
        \sum_{i=1}^{m-1} C_i^S\cdot \frac{(m-i)(m-i+1)}{4}\leq \sum_{{\succ}\in\mathcal{R}} S({\succ})\cdot  {m\choose 2} \cdot u({\succ}, \rhd).
    \end{align*}

    Next, we focus on the payments $C_i^S$. For this, let $C^S=\sum_{i\in S} C_i^S$ denote the total payment made by our subprofile $S$ and let $k$ denote the maximal integer such that $C^S\geq \frac{k(k+1)}{2}$. We note that the term $\frac{(m-i)(m-i+1)}{4}$ is decreasing as $i$ increases, so we minimize the left-hand sum if we pay only in late rounds. Moreover, it holds for all $i$ that $C_i^S\leq m-i$ because the total budget reduction in the $i$-th step is upper bounded by this value. Since $\frac{k(k+1)}{2} \leq C< \frac{(k+1)(k+2)}{2}$, we thus minimize our sum when $C_i^S=m-i$ for all $i\in \{m-k,\dots, m-1\}$ and $C_{m-k-1}^S=C^S-\frac{k(k+1)}{2}<\frac{(k+1)(k+2)}{2}-\frac{k(k+1)}{2}=k+1$. For a simple notation, we let $\ell=C^S-\frac{k(k+1)}{2}$ and conclude that 
    \begin{align*}
        &\sum_{i=1}^{m-1} C_i^S\cdot \frac{(m-i)(m-i+1)}{4}\\
        &\geq \sum_{i=m-k}^{m-1} \frac{(m-i)(m-i)(m-i+1)}{4} + \frac{\ell(m-(m-k-1))(m-(m-k-1)+1)}{4}\\
        &=\sum_{i=1}^{k} \frac{i^2(i+1)}{4} + \frac{\ell(k+1)(k+2)}{4}
    \end{align*}

    We will next show that this term is lower bounded by $\frac{C^S(C^S+1)}{4}$. 
    For this, we note that a simple induction shows that $\sum_{i=1}^{k} i^2(i+1)=\frac{k^4}{4}+\frac{5k^3}{6}+\frac{3k^2}{4}+\frac{k}{6}$. Hence, we have that $\sum_{i=1}^{k} i^2(i+1) + \ell(k+1)(k+2)= \frac{k^4}{4}+\frac{5k^3}{6}+\frac{3k^2}{4}+\frac{k}{6}+ \ell(k+1)(k+2)$. Now, if $k=0$, it holds that $\ell=C^S<1$. On the other hand, our sum evaluates to $2\ell=2C^S>C^S (1+C^S)$. Next, suppose that $k\geq1$. In this case, we observe that 
    \begin{align*}
        &\frac{k^4}{4}+\frac{5k^3}{6}+\frac{3k^2}{4}+\frac{k}{6}+\ell (k+1)(k+2)\\
        &=\left(\frac{k^4}{4}+\frac{2k^3}{4} + \frac{k^2}{4} + \ell k(k+1) + \ell(k+1)\right) + \left(\frac{k^3}{3}+\frac{k^2}{2}+\frac{k}{6} + \ell(k+1)\right)
    \end{align*}

    Next, we recall that $\ell<k+1$, so $\ell(k+1)>\ell^2$. Further, it is easy to check that $\frac{k^3}{3}+\frac{k}{6}\geq \frac{k^2}{2}$ and that $k^2\geq \frac{k(k+1)}{2}$. Hence, we further simply our sum to 
    \begin{align*}
        \frac{k^4}{4}+\frac{5k^3}{6}+\frac{3k^2}{4}+\frac{k}{6}+\ell (k+1)(k+2)
        &\geq\left(\frac{k^4}{4}+\frac{2k^3}{4} + \frac{k^2}{4} + \ell k(k+1)) + \ell^2\right) + \left(k^2 +  \ell(k+1)\right)\\
        &\geq\left(\frac{k(k+1)}{2}+\ell\right)^2 + \left(\frac{k(k+1)}{2} + \ell\right)\\
        &=C^S(C^S+1)
    \end{align*}

This proves our lower bound, so we conclude that $\frac{C^S(C^S+1)}{4}\leq \sum_{{\succ}\in\mathcal{R}} S({\succ}) \cdot {m\choose 2} \cdot u({\succ}, \rhd)$. 
Finally, we note that  proof of \Cref{thm:propBordaRP} shows that the total remaining budget of \psb is at most $\frac{3}{4}$. In particular, this means that $C^S\geq \sum_{\succ \in\mathcal{R}} S({\succ})\cdot {m\choose 2}-\frac{3}{4}$. We hence infer that
\[\frac{(\sum_{{\succ}\in\mathcal{R}} S({\succ})\cdot {m\choose 2}-\frac{3}{4})(\sum_{\succ S} S({\succ})\cdot {m\choose 2}+\frac{1}{4})}{4}\leq \sum_{{\succ}\in\mathcal{R}} S({\succ}) \cdot {m\choose 2} \cdot u({\succ}, \rhd).\]

Equivalently, this means that 
\begin{align*}{m\choose 2}\cdot \frac{\sum_{{\succ}\in\mathcal{R}} S({\succ})}{4}-\frac{3}{16} \leq \frac{\sum_{{\succ}\in\mathcal{R}} S({\succ}) \cdot {m\choose 2} \cdot u({\succ}, \rhd)}{\sum_{{\succ}\in\mathcal{R}} S({\succ}){m\choose 2} +\frac{1}{4}}\leq \frac{\sum_{{\succ}\in\mathcal{R}} S({\succ}) \cdot u({\succ}, \rhd)}{\sum_{{\succ}\in\mathcal{R}} S({\succ})}.
\end{align*}

Finally, by noting that $\sum_{{\succ}\in\mathcal{R}} S({\succ})=|S|$, our theorem follows. 
\end{proof}

\section{Proofs Omitted from \Cref{subsec:FB}}\label{app:FB}

\pairpriceable*
\begin{proof}
    Fix a profile $R$ on $m$ candidates and suppose that $\rhd$ is a pair-priceable ranking for~$R$. Moreover, we denote by $S$ an arbitrary subprofile of $R$ and aim to show that $|A(\rhd)\cap\bigcup_{{\succ}\in\mathcal{R}\colon S({\succ})>0} A({\succ})|\geq \lfloor S({\succ})\cdot {m\choose 2}\rfloor$. To this end, let $\pi$ denote a payment scheme that verifies the pair-priceability of~$\rhd$. By Condition (4) of pair-priceability, we have that $\sum_{{\succ}\in\mathcal{R}} \sum_{(x_i,x_j)\in A(\rhd)} \pi({\succ}, (x_i,x_j))>{m\choose 2}-1$. Since the total budget of all rankings is ${m\choose 2}$, this implies that 
    \begin{align*}\sum_{{\succ}\in\mathcal{R}\colon S({\succ})>0} \sum_{(x_i,x_j)\in A(\rhd)} \pi({\succ}, (x_i,x_j))>{m\choose 2}\cdot \sum_{{\succ}\in\mathcal{R}\colon S({\succ})>0} R({\succ})-1
    \geq |S|\cdot {m\choose 2}-1.
    \end{align*}
    
    Next, by Conditions (1) and (3), each ranking $\succ$ only pays for pairs $(x_i,x_j)\in A({\succ})$ and we can pay at most $1$ to each such pair. Put differently, this means that the rankings with positive weight in $S$ can only pay for the pairs in $A(\rhd)\cap \bigcup_{{\succ}\in\mathcal{R}\colon S({\succ})>0} A({\succ})$ and at most $1$ for each such pair. Hence, it holds that \[\sum_{{\succ}\in\mathcal{R}\colon S({\succ})>0} \sum_{(x_i,x_j)\in A(\rhd)} \pi({\succ}, (x_i,x_j))\leq |A(\rhd)\cap \bigcup_{{\succ}\in\mathcal{R}\colon S({\succ})>0} A({\succ})|.\]

    By combining our two inequalities, we conclude that 
    \[|S|\cdot {m\choose 2}-1< |A(\rhd)\cap \bigcup_{{\succ}\in\mathcal{R}\colon S({\succ})>0} A({\succ})|.\]

    Finally, since the right side of this inequality is an integer, it follows that $\lfloor|S|\cdot {m\choose 2}\rfloor\leq |A(\rhd)\cap \bigcup_{{\succ}\in\mathcal{R}\colon S({\succ})>0} A({\succ})|$, thus proving that sPJR holds.
\end{proof}

\flowBordaPR*
\begin{proof}
    Fix some profile $R$ and let $\rhd=x_1,\dots,x_m$ denote the ranking selected by \fb. Further, let $b_i({\succ})$ denote the budgets of the ranking $\succ$ in the $i$-th round of \fb and let $X_i=\{x_i,\dots, x_m\}$ denote the remaining candidates. Moreover, let $G_{x_i}$ denote the flow network of \fb in the $i$-th round and let $f_i$ be the maximum flow chosen in this network. We will show that the payment scheme~$\pi$ given by $\pi({\succ}, (x_i,x_j))=f_i(v_{\succ},v_{x_j})$ if $x_i\succ x_j$ and $\pi({\succ}, (x_i,x_j))=0$ otherwise satisfies all conditions of pair-priceability.\medskip

    \emph{Condition (1):} We first note that $\pi({\succ}, (x_i,x_j))=0$ by definition and that there is no edge fro $v_{\succ}$ to $x_j$ in $G_{x_i}$ in this case. Hence, the claim holds in this case. Furthermore, if $x_i\succ x_j$, it holds that $f_i(v_{\succ}, v_{x_j})\leq 1$ because the capacity from $v_{x_j}$ to the source is $1$. Hence, it follows for all rankings $\succ$ and pairs of candidates $(x_i,x_j)$ that $\pi({\succ}, (x_i,x_j))\leq 1=u({\succ}, x_i, \{x_i, x_j\})$. This proves that $0\leq \pi({\succ}, (x_i,x_j))\leq u({\succ},x_i, \{x_i,x_j\})$ for all ${\succ}\in\mathcal{R}$ and $(x_i,x_j)\in A(\rhd)$. \medskip
    
    \emph{Condition (2):} This condition is satisfied because $b_1({\succ})=R({\succ})\cdot{m\choose 2}$ for all ${\succ}\in\mathcal{R}$. Moreover, we note that $b_i({\succ})\geq 0$ for all rankings $\succ$ and rounds $i$ because it is never possible to decrease the budget of a ranking by more than $b_i({\succ})$. This is encoded in our flow network as the capacity from the source to a ranking vertex $v_{\succ}$ is $b_i({\succ})$. Next, using flow conversation, we note that for all $i\in \{1,\dots, m-1\}$ and ${\succ}\in\mathcal{R}$ that $f(s,v_{\succ})=\sum_{x_j\in X_i\setminus \{x_i\}\colon x_i\succ x_j} f(v_{\succ},v_{x_j})=\sum_{x_j\in X_i\setminus \{x_i\}} \pi({\succ}, (x_i,x_j))$. Further, by definition of \fb, it holds that $f_i(s,v_{\succ})=b_i({\succ})-b_{i+1}({\succ})$. By combining these insights, it follows for all ${\succ}\in\mathcal{R}$ that
    \begin{align*}
     \sum_{(x_i,x_j)\in A(\rhd)} \pi({\succ},(x_i,x_j))
    &=\sum_{i=1}^{m-1}\sum_{x_j\in X_i\setminus \{x_i\}} \pi({\succ},(x_i,x_j))
    =\sum_{i=1}^{m-1} f_i(s,v_{\succ})\\
    &=\sum_{i=1}^{m-1} b_i({\succ})-b_{i+1}({\succ})
    =b_1({\succ})-b_m({\succ})
    \leq {m\choose2}\cdot R({\succ}).
    \end{align*}
    This proves this condition.\medskip

    \emph{Condition (3):} Fix a pair of candidate $(x_i,x_j)\in A(\rhd)$; we will show that $\sum_{{\succ}\in\mathcal{R}} \pi({\succ},(x_i,x_j))\leq 1$. To this end, we note that $\sum_{{\succ}\in\mathcal{R}} \pi({\succ},(x_i,x_j))=\sum_{\succ\in\mathcal{R}\colon x_i\succ x_j} f_i(v_{\succ}, v_{x_j})=f_i(v_{x_j},t)\leq 1$. Here, the first equality follows by the definition of $\pi$, the second one by flow conversation (since $v_{x_j}$ has only a single outwards edge), and the final inequality by the capacity constraint of the edge $(v_{x_j},t)$.\medskip

    \emph{Condition (4):} As the last condition, we need to show that $\sum_{\succ\in\mathcal{R}}\sum_{(x_i,x_j)\in A(\rhd)} \pi({\succ},(x_i,x_j))>{m\choose 2}-1$. Just as for \psb, we will again show that the total remaining budget after the execution of \fb, $\sum_{\succ\in\mathcal{R}} b_m({\succ})$, is at most $0.75$. This proves our claim because, as we have shown for Condition (2), it holds for each ranking $\succ$ that $\sum_{(x_i,x_j)\in A(\rhd)} \pi({\succ}, (x_i,x_j)= b_1({\succ})-b_m({\succ})$ and therefore $\sum_{\succ\in\mathcal{R}}\sum_{(x_i,x_j)\in A(\rhd)} \pi({\succ}, (x_i,x_j)=\sum_{\succ\in\mathcal{R}} b_1({\succ})-b_m({\succ})={m\choose 2}-\sum \sum_{\succ\in\mathcal{R}} b_m({\succ})$.
    
    Now, to prove this claim, we will show for all rounds $i\in \{1,\dots, m-3\}$ that the maximum flow in $G_{x_i}$ has value $m-i$. 
    To this end, we fix such a round $i$ and suppose that our claim is true for all rounds $j<i$. If $i=1$, this assumption is true as there were no previous rounds and it will hold inductively for $i>1$. We first note that the total budget in this round is 
    \[\sum_{{\succ}\in\mathcal{R}} b_i({\succ})=\frac{(m-1)m}{2}-\sum_{j=1}^{i-1} m-j = \sum_{j=1}^{m-i} j = \frac{(m-i)(m-i+1)}{2}.\]

    Next, assume for contradiction that the maximum flow in $G_{x_i}=(V,E,c)$ has a value not equal to $m-i$. We note that no flow in $G_{x_i}$ can have a value of more than $m-i$ because the capacities of all edges pointing to the sink $t$ is $\sum_{y\in X_i\setminus \{x_i\}} c(v_y,t)=m-i$. Hence, our assumption means that the maximum flow has a value strictly less than $m-i$. By the MaxFlow-MinCut equivalence, this means that there is an $(s,t)$-cut $S=(T, V\setminus T)$ in $G_{x_i}$ such that $\sum_{(v,w)\in E\colon v\in T, w\in V\setminus T}c(v,w)<m-i$. It follows from this insight that $S$ does not cut any edge connecting a ranking vertex and a candidate vertex because all of these edges have unbounded capacity. Furthermore, let $Z=\{x_j\in X_i\setminus \{x_i\}\colon v_{x_j}\in T\iff t\in T \}$ denote the set of candidates such that the edge from the corresponding vertex to the sink is not separated by $S$. We note that $Z\neq\emptyset$ because otherwise, $S$ cuts all edges from candidate vertices to the sink and thus has a weight of $m-i$. Next, let $\mathcal {\bar R}$ denote the set of rankings $\succ$ such that there is an edge from $v_{\succ}$ to a candidate vertex $v_y$ with $y\in Z$. All edges from the source to the ranking vertices $v_{\succ}$ with ${\succ}\in \mathcal {\bar R}$ have been cut as $S$ does otherwise not disconnect $s$ and $t$. Furthermore, the total cost for cutting these edges is less than $|Z|$ because $S$ would have a value of at least $m-i$ otherwise. Put differently, there is a set of candidates $Z$ such that the rankings in $\mathcal {\bar R}=\{{\succ}\in\mathcal{R}\colon \exists y\in Z\colon x_i\succ y\}$ have a total budget of less than $|Z|$. 
        
    We will show that this observation contradicts that $x_i$ maximizes $U(b_i,x,X_i)$ when $i\geq m-3$. To this end, we define by $\mathcal R^{Z\succ x_i}$ the set of rankings that prefer all candidates in $Z$ to $x_i$. Since $\mathcal R^{Z\succ x_i}=\mathcal{R}\setminus \bar{\mathcal{R}}$, our previous insight implies that \[\sum_{{\succ}\in\mathcal{R}^{Z\succ x}} b_i({\succ})>\frac{(m-i)(m-i+1)}{2}-z,\] where we use $z=|Z|$. In particular, note here that  $\sum_{{\succ}\in\mathcal{R}^{Z\succ x_i}} b_i({\succ})+\sum_{{\succ}\in\mathcal{\bar R}} b_i({\succ})=\frac{(m-i)(m-i+1)}{2}$ by assumption. Further, because all rankings in $\mathcal{R}^{Z\succ x_i}$ prefer all candidates in $Z$ to $x_i$, it holds that $\sum_{y\in Z} u({\succ},y,X_i)-u({\succ},x_i,X_i)\geq \sum_{j=1}^z j=\frac{z(z+1)}{2}$ for all ${\succ}\in\mathcal{R}^{Z\succ x_i}$. By summing over all ${\succ}\in\mathcal{R}^{Z\succ x_i}$, we hence conclude that 
    \begin{align*}\sum_{y\in Z} \sum_{{\succ}\in\mathcal{R}^{Z\succ x_i}} b_i({\succ}) \left(u({\succ}, x_i, X_i)-u({\succ}, y, X_i)\right)&
    \leq -\sum_{{\succ}\in\mathcal{R}^{Z\succ x}}  b_i({\succ}) \frac{z(z+1)}{2}\\
    &<-\left(\frac{(m-i)(m-i+1)}{2}-z\right)\cdot \frac{z(z+1)}{2}.
    \end{align*}

    On the other hand, in the best case, it holds for all rankings ${\succ}\in \mathcal{\bar R}$ with $b_i({\succ})>0$ that $x_i$ is top-ranked and the candidates in $Z$ are bottom-ranked. Since 
 the maximal Borda score with $m-i+1$ candidates is $m-i$, we derive for all such rankings that $\sum_{y\in Z} u({\succ}, x_i, X_i)-u({\succ}, y, X_i)=\sum_{j=0}^{z-1} (m-i-j)=\sum_{j=1}^z (m-i+1-j)=z(m-i+1)-\frac{z(z+1)}{2}$. By summing over all such rankings, we get that 
    \begin{align*}
    \sum_{y\in Z} \sum_{{\succ}\in\mathcal{\bar R}} b_i({\succ}) \left(u({\succ}, x_i, X_i)-u({\succ}, y, X_i)\right)&\leq \sum_{{\succ}\in\mathcal{\bar R}}  b_i({\succ}) \left((m-i+1)z-\frac{z(z+1)}{2}\right)\\
    &<z\cdot \left((m-i+1)z-\frac{z(z+1)}{2}\right).
    \end{align*}
    Finally, by combining our two inequalities, we derive that
    \begin{align*}
        & \sum_{y\in Z} \sum_{{\succ}\in\mathcal{R}} b_i({\succ}) (u({\succ}, x_i, X_i)-u({\succ}, y, X))\\
        &=\sum_{y\in Z} \sum_{{\succ}\in\mathcal{R}^{Z\succ x}} b_i({\succ}) (u({\succ}, x_i, X_i)-u({\succ}, y, X_i)) + \sum_{y\in Z} \sum_{{\succ}\in\mathcal{\bar R}} b_i({\succ}) (u({\succ}, x_i, X_i)-u({\succ}, y, X_i))\\
        &<-\left(\frac{(m-i)(m-i+1)}{2}-z\right)\cdot \frac{z(z+1)}{2}+z\left((m-i+1)z-\frac{z(z+1)}{2}\right)\\
        &=(m-i+1)z^2 - \frac{(m-i)(m-i+1)z(z+1)}{4}\\
        &=(m-i+1)z\left(z-\frac{(m-i)(z+1)}{4}\right)
    \end{align*}
    
    Now, if $i\leq m-4$ and thus $m-i\geq 4$, it is clear that $z-\frac{(m-i)(z+1)}{4}<0$, which shows that $x_i$ cannot be the Borda winner. Next, if $i=m-3$, we have that $z-\frac{(m-i)(z+1)}{4}=z-\frac{3(z+1)}{4}=\frac{z}{4}-\frac{3}{4}\leq 0$ because $z\leq 3$ if only four candidates are remaining. Combined with our previous inequality, we get again a contradiction to the fact that $x_i$ maximizes the Borda score. This shows that our initial assumption is wrong and there is indeed a maximum flow of value $m-i$ when $i\in \{1,\dots, m-3\}$. 

    By our analysis so far, we conclude that $\sum_{{\succ}\in\mathcal{R}} b_i({\succ})=3$ if $i=m-2$. Furthermore, we are left with three candidates $X_{m-2}=\{x_{m-2},x_{m-1},x_m\}$ when $i=m-2$. Now, just as for \psb, it can be shown that the average Borda score in this round is $\frac{1}{3}\sum_{\succ\in\mathcal{R}}\sum_{x\in X_{m-2}} b_{m-2}({\succ})u({\succ},x_{m-2}, X_{m-2})=3$. Hence, it holds for the Borda winner $x_{m-2}$ that $U(b_{m-2}, x_{m-2}, X_{m-2})\geq 3$. We will show that this implies that the maximum flow in $G_{x_{m-2}}$ has value of at least $1.5$. If this was not true, we could also find a minimum $(s,t)$-cut $S=(T, V\setminus T)$ of value less than $1.5$. Consequently, this cut cannot disconnect both candidate vertices from the sink as its value would be $2$ otherwise. 
    
    First assume that $S$ disconnects $x_{m-1}$ from the sink but not $x_{m}$ (and note that the case where $x_{m}$ is disconnected but $x_{m-1}$ not is symmetrical). In this case, the cut needs to disconnect all rankings that put $x_{m-2}$ ahead of $x_m$ from the source. Moreover, as the value of the cut is less than $1.5$, we get that $\sum_{\succ\in\mathcal{R}\colon x_{m-2}\succ x_{m}}b_{m-2}({\succ})<0.5$. Lastly, note that the only other rankings that contribute to the Borda score of $x_{m-2}$ are those with $x_{m}\succ x_{m-2}\succ x_{m-1}$, which satisfy that $u({\succ}, x_{m-2}, X_{m-2})=1$. If the total budget of these rankings is $\sum_{\succ\in\mathcal{R}\colon x_{m}\succ x_{m-2}\succ x_{m-1}}b_{m-2}({\succ})\leq 2$, we get that $U(b_{m-2}, x_{m-2}, X_{m-2})\leq 2\cdot \sum_{\succ\in\mathcal{R}\colon x_{m-2}\succ x_m}b_{m-2}({\succ})+ \sum_{\succ\in\mathcal{R}\colon x_m\succ x_{m-2}\succ x_{m-1}}b_{m-2}({\succ})<3$, which contradicts our previous observation. On the other hand, if $\sum_{\succ\in\mathcal{R}\colon x_{m}\succ x_{m-2}\succ x_{m-1}}b_{m-2}({\succ})> 2$, we have that $U(b_{m-2}, x_{m}, X_{m-2})\geq 2\sum_{\succ\in\mathcal{R}\colon x_{m}\succ x_{m-2}\succ x_{m-1}}b_{m-2}({\succ})>4$ but $U(b_{m-2}, x_{m-2}, X_{m-2})\leq 2\cdot \sum_{\succ\in\mathcal{R}\colon x_{m-2}\succ x_m}b_{m-2}({\succ})+ 3- \sum_{\succ\in\mathcal{R}\colon x_{m-2}\succ x_m}b_{m-2}({\succ})<3.5$. Hence, in both cases, we get a contradiction to the fact that $x_{m-2}$ is a Borda winner.

    As the second case, suppose that the cut $S$ disconnects neither $x_{m-1}$ nor $x_{m}$ from the sink. Hence, it needs to disconnect all rankings that do not rank $x_{m-2}$ last among $\{x_{m-2}, x_{m-1}, x_{m}\}$ from the source. This implies that $\sum_{\succ\in\mathcal{R}\colon x_{m-2}\succ x_m\lor x_{m-2}\succ x_{m-1}} b_{m-2}({\succ})<1.5$ since the cut has a value of less than $1.5$. However, this means that $U(b_{m-2}, x_{m-2},X_{m-2})\leq 2\sum_{\succ\in\mathcal{R}\colon x_{m-2}\succ x_m\lor x_{m-2}\succ x_{m-1}} b_{m-2}({\succ})<3$, which again contradicts that $x_{m-2}$ is the Borda winner. Hence, the maximum flow $f_{m-2}$ has a value of at least $1.5$ and the remaining budget for the last round is at most $\sum_{\succ\in\mathcal{R}} b_{m-1}(\succ)\leq 1.5$.
    
    Finally, in the last round, the candidate maximizing $U(b_{m-1}, x, X_{m-1})$ is the winner of the majority vote between $x_{m-1}$ and $x_m$ with respect to $b_{m-1}$. It is thus straightforward that we reduce the total budget by at least half and at most $1$, so the total remaining budget after the execution is at most $\frac{3}{4}$. This proves the last condition of pair-priceability. 
\end{proof}

\flowBordaAR*
\begin{proof}
    Fix a profile $R$ and let $\rhd=x_1,\dots, x_m$ denote the ranking chosen by \fb. 
    We will closely follow the proof of \Cref{thm:propBordaAR} and thus define by $b_i({\succ})$ the budget of ranking $\succ$ in the $i$-th round of the Flow-adjusting Borda rule and let $X_i=\{x_i,\dots, x_m\}$ denote the remaining candidates. In particular, $b_1({\succ})=R({\succ})\cdot{m\choose2}$ for all rankings ${\succ}\in\mathcal{R}$. Furthermore, let $f_i$ denote the maximum flow chosen in the $i$-th round of \fb and define the cost per utility ratio $\rho_i$ by $\max_{{\succ}\in\mathcal{R}} \frac{f_i(s,v_{\succ})}{b_i({\succ})\cdot u({\succ},x_i, \{x_i,\dots, x_m\})}$. Lastly, we will throughout the proof assume that $\frac{0}{0}=0$ to avoid trivial corner cases.

We will show this theorem in two steps. First, we will show that $\rho_i\leq \frac{4}{(m-i)(m-i+1)}$ for all $i\in \{1,\dots,m-3\}$, $\rho_{m-2}\leq 1$, and $\rho_{m-1}\leq 1$. Based on this insight, we will prove the theorem in a second step.\medskip

    \textbf{Step 1:} We start by showing our upper bounds on $\rho_i$. To this end, we fix a round $i$ and let $G_{x_i}=(V,E,c)$ denote the flow network used by \fb in this round. It holds for every ranking $\succ$ that $f_i(s,v_{\succ})\leq b_i({\succ})$ since $c(s,v_{\succ})=b_i({\succ})$. Hence, we derive that $\frac{f_i(s,v_{\succ})}{b_i({\succ})\cdot u({\succ}, x_i, X_i)}\leq \frac{1}{u({\succ}, x_i, X_i)}\leq 1$ for every $\succ\in\mathcal{R}$ with $u({\succ},x_i,X_i)>0$ and $b_i(\succ)>0$. Secondly, if $u({\succ}, x_i, X_i)=0$, then $\succ$ bottom-ranks $x_i$ among $X_i$. Thus, $v_\succ$ has no outgoing edge in $G_{x_i}$ and $f(s,v_{\succ})=0$. Similarly, if $b_i(\succ)=0$, then $f(s,v_{\succ})=0$ by the capacity constraint. In both cases, it follows  $\frac{f_i(s,v_{\succ})}{b_i({\succ})\cdot u({\succ}, x_i, \{x_i,\dots, x_m\})}=0$ by our assumption that $\frac{0}{0}=0$ (or, put differently, we can ignore rankings with no budget or that bottom-rank the current Borda winner as they do not pay anything). This proves that $\rho_i\leq 1$ for all rounds $i$, including $\rho_{m-2}$ and $\rho_{m-1}$. 

Next, we suppose that $i\in \{1,\dots,m-3\}$ and aim to show that $\frac{f_i(s,v_{\succ})}{b_i({\succ})\cdot u({\succ}, x_i, X_i)}\leq \frac{4}{(m-i)(m-i+1)}$. To this end, we will prove that the flow network $G_{x_i}$ admits a flow $f^*_i$ with value $m-i$ such that $f_i^*(s,v_{\succ})\leq \frac{4 b_i({\succ})u({\succ},x_i, X_i)}{(m-i)(m-i+1)}$ for all ${\succ}\in\mathcal{R}$. For this, consider the modified flow network $G_{x_i'}'=(V,E,c')$, which uses the same vertices and edges as $G_{x_i}$ but has different capacities. Specifically, we set $c'(s,v_{\succ})=b_i({\succ})\cdot u({\succ}, x_i, X_i)$ for all ranking vertices, $c'(v_{\succ}, v_x)$ is still unbounded for all rankings $\succ$ and candidates $x$ with $x_i\succ x$, and $c'(v_x,t)=\frac{(m-i)(m-i+1)}{4}$ for all candidates $x\in \{x_{i+1},\dots,x_m\}$. We will show that $G'_{x_i}$ permits a flow $f'_i$ of value $\frac{(m-i)^2(m-i+1)}{4}$. Based on $f'_i$, we will then define the flow $f^*_i$ by $f^*_i(e)=\frac{4f'_i(e)}{(m-i)(m-i+1)}$ for all $e\in E$. We claim that $f^*_i$ is feasible for the original network $G_{x_i}$ and satisfies that $\frac{f_i^*(s,v_{\succ})}{b_i({\succ})\cdot u({\succ},x_i,X_i)}\leq \frac{4}{(m-i)(m-i+1)}$ for all ${\succ}\in\mathcal{R}$. For the feasibility, we observe for every edge $e\in E$ that $f_i^*(e)=\frac{4f'_i(e)}{(m-i)(m-i+1)}\leq \frac{4c'(e)}{(m-i)(m-i+1)}\leq c(e)$. In more detail, for the edges from candidates vertices $v_x$ to the sink $t$, this holds as $c'(v_x,t)=\frac{(m-i)(m-i+1)}{4}$ and $c(v_x,t)=1$. For edges from the source $s$ to ranking vertices $v_{\succ}$, our claim holds as $\frac{4c'(s,v_{\succ})}{(m-i)(m-i+1)}=\frac{4b_i({\succ}) u({\succ},x_i,X_i)}{(m-i)(m-i+1)}\leq b_i({\succ})=c(s,v_{\succ})$. For the second step, we use that $u({\succ},x_i,X_i)\leq m-i$ and that $4\leq m-i+1$ as $i\leq m-3$. Furthermore, it holds by definition that $f_i^*(x,v_{\succ})\leq \frac{4b_i({\succ}) u({\succ},x_i,X_i)}{(m-i)(m-i+1)}$ for all rankings~$\succ$, so $\frac{f_i^*(s,v_{\succ})}{b_i({\succ})\cdot u({\succ},x_i,X_i)}\leq \frac{4}{(m-i)(m-i+1)}$ for all ${\succ}\in\mathcal{R}$.

Now, we assume for contradiction that our modified flow network $G_{x_i}'$ does not permit a flow of value $\frac{(m-i)^2(m-i+1)}{4}$. By the MaxFlow-MinCut equivalence, this means that there is an $(s,t)$-cut $(T, V\setminus T)$ in $G_{x_i}'$ whose total weight is less than $\frac{(m-i+1)(m-i)^2}{4}$. Let $Z$ denote the set of candidates for which the candidate vertex is not separated from the sink by $(T, V\setminus T)$, i.e., $Z$ is the set of candidates $x$ such that $v_x\in T$ if and only if $t\in T$. Since there are $(m-i)$ candidate vertices and all edges $(v_x,t)$ have a capacity of $\frac{(m-i)(m-i+1)}{4}$, we derive that $Z\neq\emptyset$ as $(T, V\setminus T)$ would have a value of at least $\frac{(m-i+1)(m-i)^2}{4}$ otherwise. Moreover, $(T, V\setminus T)$ cannot disconnect any edge from a ranking vertex to a candidate vertex as these have unbounded capacity. Finally, let $\mathcal{\bar R}$ denote the set of rankings $\succ$ such that $(v_{\succ}, v_x)\in E$ for a candidate $x\in Z$. All edges from the source to the ranking vertices $v_{\succ}$ for ${\succ}\in\mathcal{\bar R}$ must be disconnected as there is otherwise still a path from $s$ to $t$ in $G_{x_i}'$. Moreover, the total capacities of these edges is less than $|Z|\frac{(m-i)(m-i+1)}{4}$. Otherwise, the weight of $(T, V\setminus T)$ is at least $\frac{(m-i+1)(m-i)^2}{4}$ because we cut $(m-i-|Z|)$ edges from candidate vertices to the sink, each of which has a weight of $\frac{(m-i+1)(m-i)}{4}$. 

    In summary, this analysis shows that there is a set of candidates $Z\subseteq \{x_{i+1},\dots,x_m\}$ such that $\sum_{{\succ}\in \mathcal{\bar R}} b_i({\succ})u({\succ},x_i,X_i)<z\frac{(m-i+1)(m-i)}{4}$, where $z=|Z|$ and $\mathcal{\bar R}=\{{\succ}\in\mathcal{R}\colon \exists y\in Z\colon x_i\succ y\}$. We will show that this contradicts the fact that $x_i$ maximizes the $U(b_i,x_i, X_i)$. To this end, let $\varphi=\sum_{{\succ}\in \mathcal{\bar R}} b_i({\succ})$ and first assume that $\varphi\geq \frac{(m-i+1)(m-i)}{4}$. Because each ranking ${\succ}\in\mathcal{R}\setminus \mathcal{\bar R}$ prefers all candidates in $Z$ to $x_i$, $x_i$ receives a score of at most $m-z-i$ from each of these rankings. Since we have shown in the proof of \Cref{thm:flowBordaPR} that the total remaining budget is $\sum_{{\succ}\in\mathcal{R}} b_i({\succ})=\frac{(m-i)(m-i+1)}{2}$ when $i\in \{1,\dots, m-2\}$, this means that $\sum_{{\succ}\in\mathcal{\mathcal{R}\setminus \bar R}} b_i({\succ})u({\succ},x_i,X_i)\leq (\frac{(m-i)(m-i+1)}{2}-\varphi)(m-z-i)$. Combined with the observation that 
    ${\sum_{{\succ}\in \mathcal{\bar R}} b_i({\succ})u({\succ},x_i,X_i)<z\frac{(m-i)(m-i+1)}{4}}$, we now derive that
    \begin{align*}
        \sum_{{\succ}\in\mathcal{R}} b_i({\succ})u({\succ},x_i,X_i)
        &=\sum_{{\succ}\in\mathcal{R}\setminus \mathcal{\bar R}} b_i({\succ})u({\succ},x_i,X_i)+\sum_{{\succ}\in \mathcal{\bar R}} b_i({\succ})u({\succ},x_i,X_i)\\
        &<\left(\frac{(m-i)(m-i+1)}{2}-\varphi\right) (m-z-i) + z\frac{(m-i)(m-i+1)}{4}\\
        &\leq \frac{(m-i)^2(m-i+1)}{4}.
    \end{align*}

    In the last step, we use that $\varphi\geq \frac{(m-i)(m-i+1)}{4}$ and thus $\frac{(m-i)(m-i+1)}{2}-\varphi\leq \frac{(m-i)(m-i+1)}{4}$. We next note that the average Borda score with respect to $b_i$ is 
    \begin{align*}
        \frac{1}{m-i+1}\sum_{x_j\in X_i} \sum_{{\succ}\in\mathcal{R}} b_i({\succ})u({\succ},x_j,X_i)&=\frac{1}{m-i+1} \sum_{{\succ}\in\mathcal{R}} b_i({\succ})\sum_{j=0}^{m-i} j
        =\frac{(m-i)^2(m-i+1)}{4}.
    \end{align*} 
    
    However, this means that there is a candidate $x_j$ with $\sum_{{\succ}\in\mathcal{R}} b_i({\succ})u({\succ},x_j,X_i)\geq \frac{(m-i)^2(m-i+1)}{4}>U({\succ},x_i,X_i)$, which contradicts that $x_i$ is selected by \fb in this step. 

    As the second case, suppose that $\varphi<\frac{(m-i)(m-i+1)}{4}$. Since $x_i$ maximizes the total Borda score with respect to $x_i$, it holds that 
    \begin{align*}
        0&\leq \sum_{{\succ}\in \mathcal{\bar R}} \sum_{y\in Z}b_i({\succ})(u({\succ},x_i,X_i)-u({\succ},y,X_i))+\sum_{{\succ}\in \mathcal{R}\setminus \mathcal{\bar R}} \sum_{y\in Z} b_i({\succ})(u({\succ},x_i,X_i)-u({\succ},y,X_i)). 
    \end{align*}

    Equivalently, this means that 
    \begin{align*}
  \sum_{{\succ}\in \mathcal{\bar R}} b_i({\succ})u({\succ},x_i,X_i)&\geq \frac{1}{z}\sum_{{\succ}\in \mathcal{\bar R}} \sum_{y\in Z}b_i({\succ}) u({\succ},y,X_i) \\
    &+ \frac{1}{z}\sum_{{\succ}\in \mathcal{R}\setminus \mathcal{\bar R}} \sum_{y\in Z} b_i({\succ})(u({\succ},y,X_i)-u({\succ},x_i,X_i)).
    \end{align*}

    Next, since $\sum_{y\in Z} u({\succ}, y, X_i)\geq \sum_{j=0}^{z-1} j$ for every ranking $\succ$, it follows that
    \[\frac{1}{z}\sum_{{\succ}\in \mathcal{\bar R}} \sum_{y\in Z}b_i({\succ}) u({\succ},y,X_i)\geq \frac{1}{z}\sum_{{\succ}\in \mathcal{\bar R}} b_i({\succ}) \sum_{j=0}^{z-1} j = \varphi \frac{z-1}{2}.\] 
    
    Further, because every ranking ${\succ}\in\mathcal{R}\setminus\mathcal{\bar R}$ ranks all candidates in $Z$ ahead of $x_i$, we infer that $\sum_{y\in Z} u({\succ},y,X_i)-u({\succ},x_i,X_i)\geq \sum_{j=1}^z j=\frac{z(z+1)}{2}$ for each ${\succ}\in\mathcal{R}\setminus\mathcal{\bar R}$. Based on this insight and the fact that $\sum_{{\succ}\in\mathcal{R}} b_i({\succ})=\frac{(m-i)(m-i+1)}{2}$, we compute that
    \begin{align*}
        \frac{1}{z}\sum_{{\succ}\in \mathcal{R}\setminus \mathcal{\bar R}} \sum_{y\in Z} b_i({\succ})(u({\succ},y,X_i)-u({\succ},x_i,X_i))
        &\geq \frac{1}{z}\sum_{{\succ}\in\mathcal{R}\setminus \mathcal{\bar R}} b_i({\succ}) \sum_{j=1}^z j\\
        &=\left(\frac{(m-i)(m-i+1)}{2}-\varphi\right)\frac{z+1}{2}
    \end{align*}

    Putting our inequalities together, this means that 
    \begin{align*}
        \sum_{{\succ}\in \mathcal{\bar R}} b_i({\succ})u({\succ},x_i,X_i)
        &\geq \varphi \frac{z-1}{2}+\left(\frac{(m-i)(m-i+1)}{2}-\varphi\right)\cdot \frac{z+1}{2}\\
        &=\frac{(m-i)(m-i+1)}{2}\cdot \frac{z-1}{2}+\left(\frac{(m-i)(m-i+1)}{2}-\varphi\right)
        \\
        &\geq \frac{(m-i)(m-i+1)}{2}\cdot \frac{z-1}{2} + \frac{(m-i)(m-i+1)}{4}\\
        &=\frac{(m-i)(m-i+1)}{4}\cdot z
    \end{align*}

    Here, the inequality in the third step follows by using that $\varphi\leq \frac{(m-i)(m-i+1)}{4}$. This directly disproves the assumption that $\sum_{{\succ}\in \mathcal{\bar R}} b_i({\succ})u({\succ},x_i,X_i)<\frac{(m-i)(m-i+1)}{4}z$, so we also showed our claim in this case.\medskip

    \textbf{Step 2:} We are now ready to prove the theorem and thus fix an arbitrary subprofile $S$ of $R$. We will closely follow the proof of \Cref{thm:propBordaAR} and thus define $b_i^S({\succ})=\frac{S({\succ})}{R({\succ})}b_i({\succ})=S({\succ})\cdot{m\choose 2}$ for all $i\in \{1,\dots, m\}$ and ${\succ}\in\mathcal{R}$. Next, we let $c_i^S({\succ})=b_i^S({\succ})-b_{i+1}^S({\succ})$ denote the payment made by the ranking $\succ$ with respect to its budget in $S$ and let $C_i^S=\sum_{{\succ}\in\mathcal{R}} c_i^S({\succ})$ be the total payment made by the subprofile $S$ in the $i$-th round. 

    Now, fix a round $i\in \{1,\dots, m-1\}$ in the execution of the Flow-adjusting Borda rule. We first observe that for every ranking $\succ$, it holds that $b_i({\succ})-b_{i+1}({\succ})=f_i({\succ})$ as the computed flow determines the payment. Furthermore, by the definition of $\rho_i$, we have that $\frac{f_i(s,v_{\succ})}{b_i({\succ})\cdot u({\succ},x_i, X_i)}\leq \rho_i$. Equivalently, this means that $\frac{b_i({\succ})\cdot u({\succ},x_i, X_i)}{f_i(s,v_{\succ})}\geq \frac{1}{\rho_i}$. By combining our insight, it follows that $b_i({\succ})\cdot u({\succ},x_i,X_i)\geq  \frac{b_i({\succ})-b_{i+1}({\succ}))}{\rho_i}$. Furthermore, by multiplying both sides with $\frac{S({\succ})}{R({\succ})}$, we get that $b_i^S({\succ})\cdot u({\succ},x_i,X_i)\geq \frac{c_i^S({\succ})}{\rho_i}$. Finally, summing over all the rankings ${\succ}\in\mathcal{R}$, we conclude that 
    \[\frac{C_i^S}{\rho_i}\leq \sum_{{\succ}\in\mathcal{R}} b_i^S({\succ})u({\succ},x_i,X_i)\leq \sum_{{\succ}\in\mathcal{R}} b_1^S({\succ})u({\succ},x_i,X_i).\]

    Moreover, by summing over all rounds $i\in \{1,\dots, m-1\}$, we infer that 
    \begin{align*}
        \sum_{i=1}^{m-1} \frac{C_i^S}{\rho_i}\leq \sum_{i=1}^{m-1} \sum_{{\succ}\in\mathcal{R}} b_1^S({\succ})u({\succ},x_i,X_i)=\sum_{{\succ}\in\mathcal{R}} b_1^S({\succ})u({\succ},\rhd).
    \end{align*}

   In turn, our upper bounds on $\rho_i$ proven in Step 1 show that
    \begin{align*}
        \sum_{i=1}^{m-3} \frac{C_i^S(m-i)(m-i+1)}{4} + C^S_{m-2}+C^S_{m-1}\leq \sum_{{\succ}\in\mathcal{R}} b_1^S({\succ})u({\succ},\rhd).
    \end{align*}

    Next, let $C^S=\sum_{i=1}^{m-1} C_i^S$ denote the total payments made by $S$ and note that $C^S\geq \sum_{{\succ}\in\mathcal{R}} b_1^S({\succ}) - \frac{3}{4}$ because we have shown in the proof of \Cref{thm:flowBordaPR} that the total remaining budget of \fb is at most $\frac{3}{4}$. Moreover, let $k$ denote the largest integer such that $C^S\geq\frac{k(k+1)}{2}$. Now, since the coefficients of our sum are weakly decreasing as $i$ increases, we minimize this term by assuming that $C$ is only distributed at the late payments. However, in each round $i$, it is only possible to pay at most $m-i$. Hence, the left-hand sum of the above inequality is minimized if $C^S_{i}=m-i$ for all $i\in \{m-1,\dots, m-k\}$ and $C_{m-k-1}^S=C^S-\frac{k(k+1)}{2}$. 
    
    We next proceed with a case distinction regarding $C^S$ (resp. $k$) and first suppose that $C^S\leq 3$. In this case, we have in the worst-case that $C_i^S=0$ for all $i\in \{1,\dots, m-3\}$, so we can simplify our inequality to $C^S\leq \sum_{{\succ}\in\mathcal{R}} b_1^S({\succ})u({\succ},\rhd)$. Further, by using that $C^S\geq \sum_{\succ\in\mathcal{R}} b_1^S(\succ)-\frac{3}{4}=|S|\cdot{m\choose 2}-\frac{3}{4}$ and dividing by $|S|\cdot{m\choose 2}$, this means that 
    \begin{align*}
        1-\frac{3}{2m(m-1) \cdot |S|}\leq \frac{1}{|S|}\cdot{m\choose 2}^{-1}\cdot\sum_{{\succ}\in\mathcal{R}} b_1^S({\succ}) u({\succ},\rhd)=\frac{1}{|S|}\sum_{{\succ}\in\mathcal{R}} S({\succ}) u({\succ},\rhd).
    \end{align*}

    Now, we first note that the bound of the theorem is trivial if $|S|\leq \frac{3}{4}\cdot {m\choose 2}^{-1}$ because then ${m\choose 2}\cdot \frac{|S|}{4}-\frac{3}{16}\leq 0$. We hence assume that $|S|>\frac{3}{4}\cdot {m\choose 2}^{-1}$. Moreover, our assumption that $C^S\leq 3$ implies that $|S|\leq (3+\frac{3}{4})\cdot {m\choose 2}^{-1}$. We will now show that for all these values of $|S|$ that ${m\choose 2}\cdot \frac{|S|}{4}-\frac{3}{16}\leq 1-\frac{3}{2m(m-1)\cdot|S|}$. By subtracting $1-\frac{3}{2m(m-1)\cdot|S|}$ form both sides and multiplying with $16|S|\cdot{m\choose 2}$, we infer that this is equivalent to
    \[4 {m\choose 2}^2 |S|^2 -19{m\choose 2}|S| + 12\leq 0.\]
    It can now be checked that $4 {m\choose 2}^2 |S|^2 -19{m\choose 2}|S| + 12=0$ if $|S|=\frac{3}{4}{m\choose 2}^{-1}$ or $|S|=4{m\choose 2}^{-1}$. As a quadratic function grows symmetrically from its minimum, this proves our inequality for $\frac{3}{4}{m\choose 2}^{-1}\leq |S|\leq (3+\frac{3}{4}){m\choose 2}^{-1}$, as required.

    As the second case, suppose that $C^S>3$ and thus $k\geq 2$. In this case, we have that 
    \begin{align*}
    &1+2+\sum_{i=m-k}^{m-3} \frac{(m-i)(m-i)(m-i+1)}{4}+\frac{(C^S-\frac{k(k+1)}{2})(m-(m-k-1))(m-(m-k-1)+1)}{4}\\
    &\leq C^S_{m-1}+C_{m-2}^S+\sum_{i=1}^{m-3} \frac{C_i^S(m-i)(m-i+1)}{4}.
    \end{align*}
    Moreover, since $\frac{2\cdot 2\cdot 3}{4}+\frac{1\cdot 1\cdot 2}{4}=\frac{7}{2}$, we can rewrite the left side of this inequality by 
    \begin{align*}
        &\sum_{i=m-k}^{m-1} \frac{(m-i)(m-i)(m-i+1)}{4}+\frac{(C^S-\frac{k(k+1)}{2})(m-(m-k-1))(m-(m-k-1)+1)}{4}-\frac{1}{2}\\
        &=\sum_{i=1}^{k} \frac{i^2(i+1)}{4}+\frac{(C^S-\frac{k(k+1)}{2})(k+1)(k+2)}{4}-\frac{1}{2}.
    \end{align*}

    Next, let $\ell=C^S-\frac{k(k+1)}{2}$. As noted in the proof of \Cref{thm:propBordaAR}, it holds that 
    \begin{align*}
        &\sum_{i=1}^{k} i^2(i+1)+\ell(k+1)(k+2)\\
        &=\frac{k^4}{4} + \frac{5k^3}{6}+\frac{3k^2}{4}+\frac{k}{6}+\ell(k+1)(k+2)\\
        &=\left(\frac{k^4}{4}+\frac{2k^3}{4} + \frac{k^2}{4} + \ell k(k+1)) + \ell(k+1)\right) + \left(\frac{k^3}{3}+\frac{k^2}{2}+\frac{k}{6} + \ell(k+1)\right)
    \end{align*}

    Now, we first note that $\ell\leq k+1$, so $\ell(k+1)\geq \ell^2$. Further, as $k\geq 2$, it holds that $\frac{k^3}{3}+\frac{k}{6}\geq \frac{k^2}{2}+1$ and $k^2\geq \frac{k(k+1)}{2}+1$. Hence, we derive that 
    \begin{align*}
    \sum_{i=1}^{k} i^2(i+1)+\ell(k+1)(k+2)&\geq \left(\frac{k^4}{4}+\frac{2k^3}{4} + \frac{k^2}{4} + \ell k(k+1)) + \ell^2\right) + \left(k^2+1+ \ell(k+1)\right)\\
    &\geq \left(\frac{k(k+1)}{2}+\ell\right)^2+\left(\frac{k(k+1)}{2}+\ell\right)+2\\
    &=C^S(C^S+1)+2.
\end{align*}

Substituting this into our original inequality shows that $\frac{C^S(C^S+1)}{4}\leq \sum_{{\succ}\in\mathcal{R}} {m\choose 2}\cdot S(\succ)\cdot u({\succ}, \rhd)$. From here on, we can complete the proof analogously to the proof of \Cref{thm:propBordaAR}.
\end{proof}

\end{document}